\DeclareMathOperator*{\argmax}{arg\,max}
\begin{document}
\sloppy
\allowdisplaybreaks[1]

\newcommand\numberthis{\addtocounter{equation}{1}\tag{\theequation}}

\newtheorem{thm}{Theorem}  
\newtheorem{lem}{Lemma}
\newtheorem{prop}{Proposition}
\newtheorem{cor}{Corollary}
\newtheorem{defn}{Definition}
\newcommand{\remarkend}{\IEEEQEDopen}
\newtheorem{remark}{Remark}
\newtheorem{rem}{Remark}
\newtheorem{ex}{Example}
\newtheorem{pro}{Property}

\newenvironment{example}[1][Example.]{\begin{trivlist}
\item[\hskip \labelsep {\bfseries #1}]}{\end{trivlist}}

\renewcommand{\qedsymbol}{ \begin{tiny}$\blacksquare$ \end{tiny} }

\renewcommand{\leq}{\leqslant}
\renewcommand{\geq}{\geqslant}

\title {Quantifying the Cost of Privately Storing Data in Distributed Storage Systems}

\author{\IEEEauthorblockN{R\'emi A. Chou}

\thanks{
R\'{e}mi A. Chou is with the Department of Electrical Engineering and Computer Science, Wichita State University, Wichita, KS 67260. Part of this work has been presented at the 2022 IEEE International Symposium on Information Theory~(ISIT) \cite{chou2022isit}. E-mail: remi.chou@wichita.edu. This work is supported in part by NSF grant CCF-2047913.}
}
\maketitle
\begin{abstract}
Consider a user who wishes to store a file in multiple servers such that at least $t$ servers are needed to reconstruct the file, and $z$ colluding servers cannot learn any information about the file. Unlike traditional secret-sharing models, where perfectly secure channels are assumed to be available at no cost between the user and each server, we assume that the user can only send data to the servers via a public channel, and that the user and each server share an individual secret key with length $n$. For a given~$n$, we determine the maximal  length of the file that the user can store, and thus quantify the necessary cost to store a file of a certain length, in terms of the length of the secret keys that the user needs to share with the servers. Additionally, for this maximal file length, we determine (i) the optimal amount of local randomness needed at the user, (ii) the optimal amount of public communication from the user to the servers, and (iii) the optimal amount of storage requirement at the servers. %
\end{abstract} 

\section{Introduction}

Centralized data storage of sensitive information could mean compromising the entirety of the
data in the case of a data breach. By contrast, a decentralized storage strategy can offer resilience
against data breaches and avoid having a single point of entry for hackers. Well-known decentralized
strategies are able to ensure that if a file is stored in $L$ servers, then any $t \leq L$ servers that pool their information can reconstruct the file, whereas any $t-1$ compromised  servers do not leak any information about the file in an information-theoretic sense. For instance, secret sharing \cite{shamir1979share,blakley} solves this problem with the optimal storage size requirement at each server. Specifically, to store $F$ bits over $L$ servers, the best possible storage strategy, that allows reconstruction of the information from $t \leq L$ servers and is resilient against data breaches at $t-1$ servers, requires storing ${LF}$ bits over the $L$ servers. In secret sharing models, the user who wishes to store a file in the servers corresponds to the dealer, the file corresponds to a secret, and the information stored at a given server is called a share of the secret. Applications of secret sharing to secure distributed storage have been extensively studied for a wide range of settings, e.g., \cite{rawat2018centralized,agarwal2016security,soleymani2020distributed,huang2016communication,bitar2017staircase,shah2015distributed,huang2016secure,huang2017secret,chou2020secure}.
Note that, as motivated in~\cite{bessani2013depsky,shor2018best,fabian2015collaborative,huang2017secure}, the servers could also correspond to independent cloud storage providers, as it is often less costly for businesses and organizations to outsource data storage but cloud storage providers lack reliable security guarantees and may be the victims of data breaches. %

Since the user and the servers are not physically collocated, a standard assumption in secret sharing models \cite{shamir1979share,blakley,karnin1983secret,mceliece1981sharing,benaloh1988generalized,ito1989secret
} is the availability of individual and information-theoretically secure channels between the user and each server, that allow the user to securely communicate a share of the secret to each server. In this paper, we propose to quantify the cost associated with this assumption. Specifically, instead of assuming the availability at no cost of such information-theoretically secure channels, we assume that the user can communicate over a one-way public channel with each server, and that the user and each server share a secret key, which is a sequence of $n$ bits uniformly distributed over $\{0,1\}^n$. Then, for a given $n$, we determine the maximal length of the file  that the user can store. Given this relationship between $n$ and the maximal length of the file, one can thus determine the necessary cost to store a file of a given  length, in terms of the length of  the secret keys that the user needs to share with the servers. Furthermore, we are also interested in minimizing (i)~the amount of additional resource locally needed at the user, i.e., local randomness needed by the user to form the shares that will be stored at the servers, (ii) the amount of public communication between the user and the servers, and (iii)~the cost of file storage, i.e., the amount of information that needs to be stored at the servers. %

The most challenging part of this study is proving the converse results on the maximal length of the file that the user can store, the optimal amount of local randomness needed at the user, and the optimal amount of public communication between the user and the servers.  Unlike in traditional secret-sharing models, in our converse, we need to account for the presence of shared secret keys, public communication available to all parties, and the fact that \emph{the creation phase of the shares and the secure communication phase of the shares to the servers are allowed to be jointly designed in our model}. Note that  these two phases are \emph{independent} in traditional secret-sharing models, which only focus on the creation phase of the shares since the secure communication phase of the shares relies on the assumption that information-theoretically secure channels are available at no cost. Finally, we establish achievability results that match our converse results using ramp secret sharing schemes~\cite{yamamoto1986secret,blakley1984security}.  
Specifically, we prove the optimality of an achievability scheme that separates the creation of the shares using ramp secret sharing schemes and the secure communication of the shares to the servers via one-time pads.

The remainder of the paper is organized as follows. In Section \ref{secrev}, we describe in an informal manner our setting and the objectives of our study. This section also compares our setting to traditional secret sharing settings and reviews some known results. In Section \ref{sec:ds}, we formally state the problem. In Section \ref{secres}, we present our main results. In Section \ref{sec:ds2}, we extend our setting and results to a multi-user setting. Finally, in Section \ref{seconcl}, we provide concluding remarks.
  \section{Problem Motivation and comparison with traditional secret sharing} \label{secrev}

Consider one user who  wishes to store a file $F$ in $L$ servers, indexed in $\mathcal{L} \triangleq \{ 1, \ldots, L\}$, such that any $t$ servers that pool their information can reconstruct $F$ and any $z$ colluding servers cannot learn any information about $F$, where $t$ and $z$ are chosen in $\{ 1, \ldots, L\}$ and $\{ 1 ,\ldots, t-1\}$, respectively.

\medskip

\textbf{In a traditional secret sharing setting}, the user encodes the file $F$ into $L$ shares $(S_1, \ldots, S_L)$ and transmits the share $S_l$ to Server $l \in \mathcal{L}$ via individual secure channels (available at no cost) between the user and each server. The setting is depicted in Figure \ref{fig1} and the requirements are formalized~as
\begin{align*}
\forall \mathcal{T} \subseteq \mathcal{L}, |\mathcal{T}| \geq t &\implies H(F| S_{\mathcal{T}})=0 \text{ (Recoverability),}\\
\forall \mathcal{U} \subseteq \mathcal{L}, |\mathcal{U}| \leq z &\implies I(F; S_{\mathcal{U}})=0 \text{ (Security),}
\end{align*}
where we have defined $S_{\mathcal{T}} \triangleq (S_l)_{l \in \mathcal{T}}, \forall \mathcal{T} \subseteq \mathcal{L}$. 
\begin{figure}[H]
\centering
\includegraphics[width=0.375\textwidth]{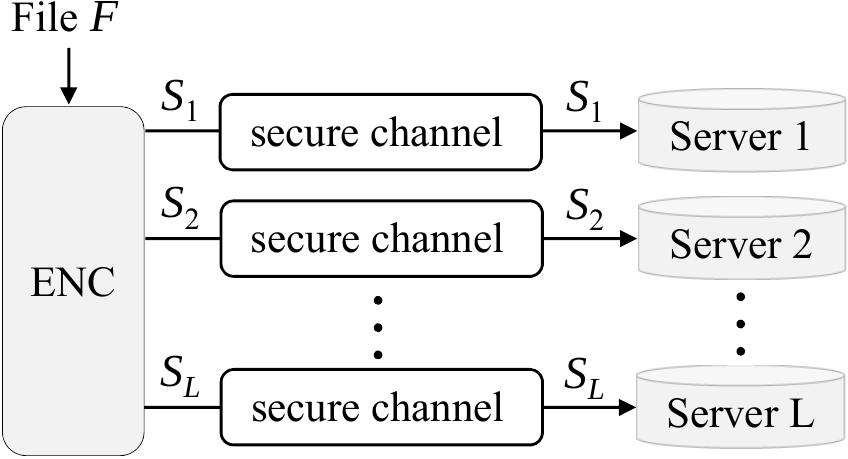}
\caption{Traditional secret sharing settings rely on the assumption that \emph{individual and information-theoretically secure channels between each server and the user  are available at 
no cost}.} \label{fig1}
\end{figure}
In this setting, the following questions arise.
\begin{enumerate}
\item What is the minimum size of an individual share $S_l$, $l \in \mathcal{L}$?
\item What is the minimum size of all the shares $(S_l)_{l \in \mathcal{L}}$ considered jointly?
\item What  is the minimum amount of local randomness needed at the encoder to obtain shares with minimum size?
\end{enumerate}
These questions have all been studied in the literature. It is well known, e.g., \cite{blundo1996randomness,blundo1993efficient}, that designing shares that satisfy $\sum_{l \in \mathcal{L}} H(S_l) =  \frac{L}{t-z}H(F)$ is optimal, and the minimum amount of local randomness needed to achieve this optimal bound is $ \frac{z}{t-z}H(F)$. Moreover, under the additional assumption~that  
$$
\forall \mathcal{T} \subset \mathcal{L},  z < |\mathcal{T}| < t \implies H(F| S_{\mathcal{T}})= \frac{t - |\mathcal{T}|}{t-z}H(F),
$$
then \cite{blundo1993efficient} showed that, for individual shares, having $H(S_l) = \frac{1}{t-z}H(F)$, $l \in \mathcal{L}$,  is optimal.

\smallskip
\textbf{In this paper, we wish to quantify the cost associated with the assumption that individual secure channels are available between the user and each server}. To this end, we replace these individual secure channels by a public channel between the user and the servers and assume that the user shares with Server $l \in \mathcal{L}$ a key $K_l$ with length $n$. The key length $n$ aims to quantify the aforementioned cost. Let $M_l$ be the public communication of the user to Server~$l$, $M \triangleq (M_l)_{l\in \mathcal{L}}$ be the overall public communication, and  $S_l$ be the information stored at Server $l$ after the public communication happened. Our setting is depicted in Figure~\ref{fig2} and the requirements are formalized as 
\begin{align*}
\forall \mathcal{T} \subseteq \mathcal{L}, |\mathcal{T}| \geq t &\implies H(F| S_{\mathcal{T}})=0 \text{ (Recoverability),}\\
\forall \mathcal{U} \subseteq \mathcal{L}, |\mathcal{U}| \leq z &\implies I(F; M, K_{\mathcal{U}})=0 \text{ (Security),}
\end{align*}
where we have defined $S_{\mathcal{T}} \triangleq (S_l)_{l \in \mathcal{T}},$ $K_{\mathcal{T}} \triangleq (K_l)_{l \in \mathcal{T}}, \forall \mathcal{T} \subseteq \mathcal{L}$. Note that the servers may only store a function of the public communication. Additionally, \emph{the creation phase of the shares and the secure communication phase of the shares to the servers are allowed to be jointly designed}, unlike in traditional secret sharing, where the creation of the shares is independent of their secure communication to the servers due to the availability of  secure channels. Note also that  the public communication $M$ now needs to be accounted for information leakage about the file $F$ in the security constraint.
\begin{figure}[H]
\centering
\includegraphics[width=0.44\textwidth]{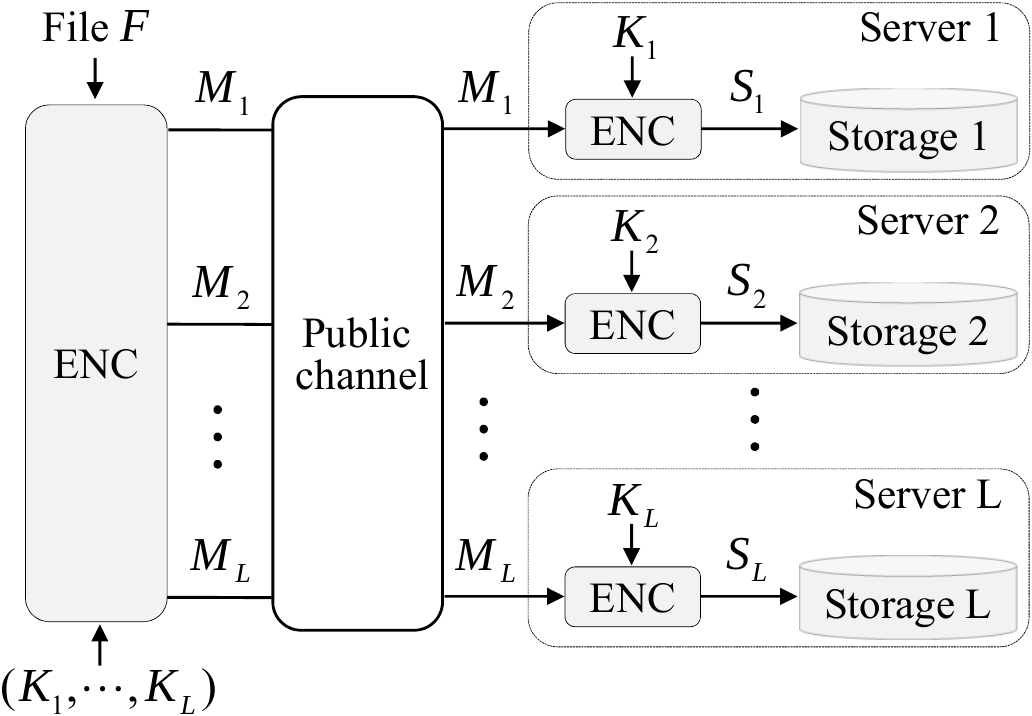}
\caption{In our setting, individual and information-theoretically secure channels between each server and the user are replaced by a public communication channel and pairs of secret keys between the user and the servers. One of our main objectives is to characterize the minimum key lengths needed for a given file size.} \label{fig2}
\end{figure}
In our study we ask the following questions. 
\begin{enumerate}
\item What is the maximal length for the file $F$ that the user can store for a given key length $n$? Let us denote this maximal length by $r^{ (F) }_{\star}$.
\item What is the minimal amount of local randomness needed at the user to achieve $r^{ (F) }_{\star}$?
\item For a given $l \in \mathcal{L}$, what is the minimal storage size needed at Server $l$ for the user to  achieve $r^{ (F) }_{\star}$? In other words, what is the minimal size for $S_l$?
\item For a given $l \in \mathcal{L}$, what is the minimal amount of public communication $M_l$ to Server $l$ needed for the user to achieve $r^{ (F) }_{\star}$?
\item What is the minimal amount of overall public communication $(M_l)_{l \in \mathcal{L}}$ needed for the user to achieve $r^{ (F) }_{\star}$?
\end{enumerate}

\section{Problem statement}\label{sec:ds}
Notation: For any $a,b \in \mathbb{N}^*$, define $\llbracket a, b \rrbracket \triangleq [a,b] \cap \mathbb{N}$. For any $x\in \mathbb{R}$, define $[x]^+\triangleq\max(0,x)$. For a given set $\mathcal{S}$,   let $2^{\mathcal{S}}$  denote the power set of~$\mathcal{S}$. Finally, let $\bigtimes$ denote the Cartesian~product.
\medskip

Consider $L$ servers indexed in $\mathcal{L} \triangleq \llbracket 1, L \rrbracket $ and one user. Assume that Server~$l\in\mathcal{L}$ and the user share a secret key $K_{l}\in \mathcal{K} \triangleq \{0,1\}^{n}$, which is a sequence of $n$ bits uniformly distributed over $\{0,1\}^{n}$. The $L$ keys are assumed to be jointly independent. For any $\mathcal Y \subseteq \mathcal{L}$, we use the notation $K_{\mathcal Y} \triangleq (K_{y})_{ y\in \mathcal Y}$.  

\begin{defn} \label{definition_modelg}
 A~$\left( 2^{r^{(F)}}, 2^{r^{(R)}},\left(2^{r^{(M)}_{l}}\right)_{l \in \mathcal{L}},\left(2^{r^{(S)}_l}\right)_{l \in \mathcal{L}}\right)$ private file storage strategy consists~of 
\begin{itemize}
\item A file $F$ owned by the user, which is uniformly distributed over $\mathcal{F} \triangleq \{ 0,1\}^{r^{(F)}}$ and independent from the keys $K_{\mathcal L}$ (the superscript $(F)$ stands for File);
\item A sequence of local randomness $R$   owned by the user, which is uniformly distributed over $\mathcal{R}  \triangleq \{ 0,1\}^{r^{(R)}}$ and independent from all the other random~variables (the superscript $(R)$ stands for Randomness);
\item $L$ encoding functions 
$h_{l}: \mathcal{R} \times \mathcal{K} \times \mathcal{F}  \to \mathcal{M}_{l},$
where  $l \in \mathcal{L}$, and $\mathcal{M}_{l} \triangleq \{ 0,1\}^{r_{l}^{(M)}}$ (the superscript  $(M)$ stands for  Message);
\item $L$ servers with storage space $r_{l}^{(S)}$ bits for Server~$l\in\mathcal{L}$ (the superscript $(S)$ stands for Server);
\item $L$ encoding functions $g_{l} : \mathcal{M}_{l} \times \mathcal{K} \to \mathcal{S}_{l} ,$ where $l \in \mathcal{L}$,  and $ \mathcal{S}_{l} \triangleq \{ 0,1 \}^{r_{l}^{(S)}}$;
\item $2^L$ decoding functions $\smash{f_{\mathcal{A}} : \displaystyle\bigtimes_{l\in\mathcal{A}} \mathcal{S}_{l} \to \mathcal{F}},$ where $\mathcal{A} \subseteq \mathcal{L}$;
\end{itemize}
and operates as follows:
\begin{enumerate}
\item The user publicly sends to Server $l \in \mathcal{L}$ the message $M_{l} \triangleq h_{l}( R,K_{l},F).$ For $\mathcal Y \subseteq \mathcal{L}$, we define $M_{\mathcal{Y}} \triangleq (M_{l})_{l\in\mathcal{Y}}$. For convenience, we also write $M \triangleq M_{\mathcal{L}}$. 
\item Server $l\in\mathcal{L}$ stores  $S_{l} \triangleq g_{l}( M_{l}, K_{l}).$
\item Any subset of servers $\mathcal{A} \subseteq \mathcal{L}$   can compute $ \widehat{F}(\mathcal{A})\triangleq f_{\mathcal{A}}(S_{\mathcal{A}}),$  an estimate of $F$, where $S_{\mathcal{A}}\triangleq (S_{l})_{l\in \mathcal{A}}$. 
\end{enumerate}
\end{defn}
 The setting is depicted in Figure \ref{fig2}.
\begin{defn} \label{def}
Fix $t  \in \llbracket 1 , L \rrbracket $, $z \in   \llbracket 1 , t -1 \rrbracket$.  Then,  ${r^{(F)}}$ is $(t,z)$-achievable if there exists a $\left(  2^{r^{(F)}} , 2^{r^{(R)}},\left(2^{r^{(M)}_{l}}\right)_{l \in \mathcal{L}},\left(2^{r^{(S)}_l}\right)_{l \in \mathcal{L}}\right)$ private file storage strategy such~that
\begin{align}
	 \forall \mathcal{A} \subseteq \mathcal{L}, |\mathcal{A}| \geq t &\!\implies \! H({F} |\widehat{F} (\mathcal{A})) = 0  \text{ (Recoverability),} \label{eqrel} \\
	\forall \mathcal{U} \subseteq \mathcal{L}, |\mathcal{U}| \leq z &\!\implies  \! I({F}; M, K_{\mathcal{U}} ) = 0  \text{ (Security)}\label{eqSeca}.
\end{align}

The set of all achievable lengths ${r^{(F)}}$ is denoted by $\mathcal{C}_{F}(t,z)$.
\end{defn}
\eqref{eqrel} means that any subset of servers with size larger than or equal to $t$ is able to perfectly recover the files $F$, and  \eqref{eqSeca} means that any subset of servers with size smaller than or equal to $z$ is unable to learn any information about the file. Note that \eqref{eqSeca} accounts for the fact that colluding servers have access to the entire public communication $M$.

Our main objective is to determine, under the constraints~\eqref{eqrel} and \eqref{eqSeca}, the maximal file length that the user can store in the servers given that the secret keys shared with the servers have length~$n$. Next, another of our objectives is to determine   (i)~the minimum amount of local  randomness needed at the user, (ii)~the minimum storage requirement  at the servers, and (iii)~the minimum amount of public communication from the user to the servers that are needed to achieve the largest file rate in $\mathcal{C}_{F}(t,z)$. To this end, we introduce the following~definition.

\begin{defn} \label{def2}
Fix $t  \in \llbracket 1 , L \rrbracket $, $z \in   \llbracket 1 , t\!\!-\!1 \rrbracket$.   For ${r}^{(F)}$ in $\mathcal{C}_{F}(t,z)$, let $\mathcal{Q} ({r}^{(F)})$ be the set of tuples $ T \triangleq  \left(r^{(R)},(r_{l}^{(M)})_{l\in \mathcal{L}},(r_l^{(S)})_{l\in \mathcal{L}} \right)$ such that there exists  a $\left( 2^{r^{(F)}},2^{r^{(R)}},\left(2^{r^{(M)}_{l}}\right)_{l \in \mathcal{L}},\left(2^{r^{(S)}_l}\right)_{l \in \mathcal{L}}\right)$ private file storage strategy that $(t,z)$-achieves ${r}^{(F)}$. Then,   define
\begin{align*}
r^{(F)}_{\star}(t,z) & \triangleq \sup_{{r}^{(F)} \in \mathcal{C}_{F}(t,z)} r^{(F)},\\
 r^{(M)}_{l,\star} (t,z)  &\triangleq \inf_{  T \in \mathcal{Q} ({r}^{(F)}_{\star}(t,z)) }  r_{l}^{(M)},  l \in \mathcal{L},\\
  r^{(M)}_{\Sigma,\star} (t,z)  &\triangleq \inf_{  T \in \mathcal{Q} ({r}^{(F)}_{\star}(t,z) ) } \sum_{l \in \mathcal{L}} r_{l}^{(M)},  \\
  r^{(R)}_{\star} (t,z)  &\triangleq \inf_{  T \in \mathcal{Q} ({r}^{(F)}_{\star}(t,z))}  r^{(R)}, \\
  r^{(S)}_{ l,\star} (t,z)  &\triangleq \inf_{  T \in \mathcal{Q} ({r}^{(F)}_{\star}(t,z)) } r_l^{(S)},   l \in \mathcal{L}.
\end{align*}
\end{defn}
$r^{(F)}_{\star}(t,z)$ is the largest file size that the user can privately store under the constraints~\eqref{eqrel} and \eqref{eqSeca}. Then, $r^{(R)}_{\star}(t,z) $, $r^{(M)}_{l,\star} (t,z)$, $r^{(M)}_{\Sigma,\star} (t,z)$, and $r^{(S)}_{l,\star}(t,z)$, $l \in \mathcal{L}$, are the minimum amount of local randomness, the minimum amount of public communication to Server $l$, the minimum amount of public communication to all the servers, and the minimum storage size required at Server~$l$,  respectively, needed for the user to achieve $r^{(F)}_{\star}(t,z)$. Remark that, a priori, it is unclear whether all these quantities can be achieved simultaneously, i.e., whether there exists  a $\left(  2^{r^{(F)}_{\star}(t,z)} ,2^{r^{(R)}_{\star}(t,z)},\left(2^{r^{(M)}_{l,\star}(t,z)}\right)_{ l \in \mathcal{L}},\left(2^{ r^{(S)}_{l,\star}(t,z)}\right)_{l \in \mathcal{L}}\right)$ file storage strategy that $(t,z)$-achieves $ r^{(F)}_{  \star}(t,z) $.

\section{Main results} \label{secres}
In Section \ref{secimp}, we prove impossibility results. Specifically, we first derive an upper bound on the maximum file length  $r^{(F)}_{\star}(t,z)$. Then, assuming that the user stores a file of length $r^{(F)}_{\star}(t,z)$, we derive lower bounds on the minimum amount of local randomness  $ r^{(R)}_{\star}(t,z)$ needed at the user, the minimum amount of public communication needed to each individual server from the user, i.e., $r^{(M)}_{l,\star}(t,z)$, $l\in\mathcal{L}$, the minimum amount of public communication needed to all the servers from the user, i.e., $r^{(M)}_{\Sigma,\star}(t,z)$, and the minimum storage size needed at Server $l\in\mathcal{L}$, i.e., $ r^{(S)}_{l,\star}(t,z)$. In Section~\ref{secachievres}, we prove an achievability result that matches all the bounds found in Section~\ref{secimp}.

 \subsection{Impossibility results} \label{secimp}
\begin{thm}[Converse on the file length] \label{th1}
Let $t \in \llbracket 1 , L \rrbracket$ and $z \in  \llbracket 1 , t-1 \rrbracket$. Then, we have  
\begin{align*}
r^{(F)}_{\star}(t,z)\leq n (t-z).
\end{align*}
\end{thm}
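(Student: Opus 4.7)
The plan is to pick a colluding subset $\mathcal{U}$ of size $z$ nested inside a recovering subset $\mathcal{T}$ of size $t$, and sandwich $H(F)$ between what security forces to remain uncertain and what recoverability forces to be determined. The difference is carried by the $t-z$ keys in $\mathcal{T}\setminus\mathcal{U}$, whose joint entropy is at most $n(t-z)$ since each key is uniform on $\{0,1\}^n$.

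Concretely, I would fix any $\mathcal{T}\subseteq\mathcal{L}$ with $|\mathcal{T}|=t$ and any $\mathcal{U}\subseteq\mathcal{T}$ with $|\mathcal{U}|=z$ (possible because $z\leq t-1\leq L-1$). From the security constraint \eqref{eqSeca} I get $H(F\mid M,K_{\mathcal{U}})=H(F)$; since $M_{\mathcal{T}}$ is a subset of the components of $M$, conditioning on less cannot decrease entropy, so $H(F\mid M_{\mathcal{T}},K_{\mathcal{U}})\geq H(F)$, and the reverse inequality is trivial, giving
\begin{equation*}
H(F\mid M_{\mathcal{T}},K_{\mathcal{U}})=H(F).
\end{equation*}
From the recoverability constraint \eqref{eqrel} applied to $\mathcal{T}$ together with the fact that $S_l$ is a deterministic function of $(M_l,K_l)$, I get $H(F\mid M_{\mathcal{T}},K_{\mathcal{T}})=0$.

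Subtracting these two identities and using the chain rule,
\begin{align*}
H(F) &= H(F\mid M_{\mathcal{T}},K_{\mathcal{U}}) - H(F\mid M_{\mathcal{T}},K_{\mathcal{T}}) \\
&= I(F;K_{\mathcal{T}\setminus\mathcal{U}}\mid M_{\mathcal{T}},K_{\mathcal{U}}) \\
&\leq H(K_{\mathcal{T}\setminus\mathcal{U}}\mid M_{\mathcal{T}},K_{\mathcal{U}}) \\
&\leq H(K_{\mathcal{T}\setminus\mathcal{U}}) \\
&\leq n(t-z),
\end{align*}
where the last step uses $|\mathcal{T}\setminus\mathcal{U}|=t-z$ and $H(K_l)\leq n$. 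Since $F$ is uniform on $\{0,1\}^{r^{(F)}}$, this yields $r^{(F)}\leq n(t-z)$ for any achievable $r^{(F)}$, and the bound on $r^{(F)}_{\star}(t,z)$ follows by taking the supremum.

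The only subtle point, and the step I would double-check first, is the passage from conditioning on the full public transcript $M$ in the security statement to conditioning only on $M_{\mathcal{T}}$: this is where the joint design of share creation and secure communication could, in principle, cause trouble, but because security is imposed against $M$ and $M_{\mathcal{T}}$ is a sub-vector, the monotonicity of conditional entropy resolves it cleanly. The rest is a routine entropy manipulation.
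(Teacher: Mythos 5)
Your proof is correct, and it follows essentially the same route as the paper's Appendix~\ref{App_th1}: use the security constraint to keep $H(F)$ intact after conditioning on $(M,K_{\mathcal{U}})$, use the recoverability constraint (with $S_l$ a function of $(M_l,K_l)$) to kill the conditional entropy given $(M_{\mathcal{T}},K_{\mathcal{T}})$, and attribute the gap to the $t-z$ independent keys in $\mathcal{T}\setminus\mathcal{U}$. Two minor differences in execution: the paper's appendix is written for the general $D$-user case, so its Lemma~\ref{lem111} carries extra terms (cross-user files, randomness, and keys) and first massages the bound into $I(K_{\mathcal{A}};K_{\mathcal{L}}\mid K_{\mathcal{U}})$ before invoking key independence in a separate Lemma~\ref{lem222}; you work directly with $M_{\mathcal{T}}$ rather than the full transcript $M$ and bound $I(F;K_{\mathcal{T}\setminus\mathcal{U}}\mid M_{\mathcal{T}},K_{\mathcal{U}})\leq H(K_{\mathcal{T}\setminus\mathcal{U}})$ in one step, which is a clean shortcut valid here because the single-user specialization makes the cross-user bookkeeping unnecessary. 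The step you flagged as delicate---passing from $M$ to the sub-vector $M_{\mathcal{T}}$ via monotonicity of conditional entropy---is indeed the crux, and you handle it correctly.
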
 

\begin{proof}
 Set $D=1$ in Appendix \ref{App_th1}.
\end{proof}

Theorem \ref{th1} means that it is impossible for the user to store a file of length larger than $n (t-z)$ bits. The proof of Theorem \ref{th1} is obtained by first upper bounding the file length by $I( K_{ \mathcal{A}} ;K_{ \mathcal{L}} | K_{ \mathcal{U}} )$  for any $\mathcal{A} ,\mathcal{U} \subseteq \mathcal{L}$ such that $|\mathcal{A}| = t$, $|\mathcal{U}| = z$, and $\mathcal{U} \subset \mathcal{A}$ using Definition \ref{definition_modelg} and the constraints~\eqref{eqrel} and \eqref{eqSeca}. Theorem \ref{th1} is then obtained from this upper bound by leveraging the independence of the secret keys.

\begin{thm}[Converse on storage size requirement at the servers] \label{th2}
Let $t \in \llbracket 1 , L \rrbracket$ and $z \in  \llbracket 1 , t-1 \rrbracket$. Then, we have    
\begin{align*}
 r^{(S)}_{l,\star}(t,z) \geq  n,  \forall l\in \mathcal{L}. 
\end{align*}
\end{thm}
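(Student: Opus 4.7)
The plan is to bound $H(S_l)$ from below by $n$ via an information-theoretic chain similar in spirit to the converse for $r^{(F)}_\star$ in Theorem \ref{th1}, but arranged so that $S_l$ is isolated on the right-hand side. Since $S_l$ takes values in $\{0,1\}^{r^{(S)}_l}$, establishing $H(S_l) \geq n$ immediately yields $r^{(S)}_l \geq n$, and hence $r^{(S)}_{l,\star}(t,z) \geq n$ after taking the infimum.

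Fix $l \in \mathcal{L}$ and a private file storage strategy that achieves $r^{(F)}_\star(t,z)$, and pick two disjoint subsets $\mathcal{U}, \mathcal{V} \subseteq \mathcal{L} \setminus \{l\}$ with $|\mathcal{U}| = z$ and $|\mathcal{V}| = t - z - 1$ (possible because $L \geq t$). Let $\mathcal{A} \triangleq \mathcal{U} \cup \mathcal{V} \cup \{l\}$, so $|\mathcal{A}| = t$. The size of $\mathcal{U}$ is compatible with the security constraint \eqref{eqSeca}, and the size of $\mathcal{A}$ is compatible with the recoverability constraint \eqref{eqrel}.

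The core computation combines both constraints. Security gives $H(F | M, K_\mathcal{U}) = H(F)$, while recoverability ensures that $F$ is a function of $S_\mathcal{A} = (S_\mathcal{U}, S_\mathcal{V}, S_l)$; since $S_\mathcal{U}$ and $S_\mathcal{V}$ are themselves functions of $(M, K_\mathcal{U})$ and $(M, K_\mathcal{V})$, one obtains $H(F | M, K_\mathcal{U}, K_\mathcal{V}, S_l) = 0$. Subtracting then yields
\begin{align*}
H(F) &= I(F; K_\mathcal{V}, S_l | M, K_\mathcal{U}) \\
&\leq H(K_\mathcal{V}, S_l | M, K_\mathcal{U}) \\
&\leq H(K_\mathcal{V}) + H(S_l) \\
&= n(t-z-1) + H(S_l),
\end{align*}
where the last equality uses the independence and uniformity of the $L$ secret keys. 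Invoking Theorem \ref{th1} together with the matching achievability of Section \ref{secachievres} to identify $H(F) = r^{(F)}_\star(t,z) = n(t-z)$ finishes the argument, since we then conclude $H(S_l) \geq n$ and therefore $r^{(S)}_l \geq n$.

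The main obstacle is the subset selection: one must pick $\mathcal{U}$ and $\mathcal{V}$ so that both \eqref{eqrel} and \eqref{eqSeca} can be invoked simultaneously while keeping $S_l$ cleanly isolated on the right-hand side. The key structural observation is that conditioning on $(M, K_\mathcal{U}, K_\mathcal{V})$ lets us internally reconstruct $S_\mathcal{U}$ and $S_\mathcal{V}$, leaving $S_l$ as the only additional variable needed to determine $F$; this is precisely why $H(S_l)$ must absorb the gap between $H(F) = n(t-z)$ and $H(K_\mathcal{V}) = n(t-z-1)$.
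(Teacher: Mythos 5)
Your argument is correct, but it takes a genuinely different route from the paper's. The paper's proof of Theorem~\ref{th2} is a one-line operational observation: before any public message is received, Server~$l$ must already hold its secret key $K_l$ (so that it can later compute $S_l = g_l(M_l, K_l)$), and since $K_l$ is $n$ bits of storage the capacity $r_l^{(S)}$ can never be below $n$, regardless of what file rate is achieved. Your proof instead bounds $H(S_l)$ information-theoretically: you combine security ($H(F \mid M, K_{\mathcal{U}}) = H(F)$) and recoverability (through the observation that $S_{\mathcal{U}}$ and $S_{\mathcal{V}}$ are functions of $(M, K_{\mathcal{U}})$ and $(M, K_{\mathcal{V}})$, so $H(F \mid M, K_{\mathcal{U}}, K_{\mathcal{V}}, S_l) = 0$) to obtain $H(F) \leq n(t-z-1) + H(S_l)$, and then invoke the full capacity characterization $r^{(F)}_{\star}(t,z) = n(t-z)$ to conclude $H(S_l) \geq n$. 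Each approach buys something: the paper's argument is elementary, self-contained, and applies to every storage strategy, not only those achieving the optimal file rate, but it treats the storage constraint operationally (the $r_l^{(S)}$ bits must accommodate $K_l$ at protocol start) rather than bounding the entropy of the stored object $S_l$ itself. Your argument bounds $H(S_l)$ directly, which is arguably a cleaner fit to the quantity the model declares the server to store, and it exposes a finer trade-off ($H(S_l) \geq H(F) - n(t-z-1)$, so storage can shrink at suboptimal file rates); on the other hand it is heavier machinery, since it needs both the security and recoverability constraints as well as the achievability result of Theorem~\ref{th7} to pin down $r^{(F)}_{\star}(t,z) = n(t-z)$ exactly, whereas the paper's proof depends on none of these.
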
 
\begin{proof}
 Set $D=1$ in Appendix \ref{App_th2}.
\end{proof}
Theorem \ref{th2} means that Server $l\in \mathcal{L}$ needs a storage capacity of at least $n$~bits, and is obtained by considering the fact that, at the beginning of the protocol, each server needs to store its secret~key.

\begin{thm}[Converse on the total amount of public communication to the servers]  \label{th3}
Let $t \in \llbracket 1 , L \rrbracket$ and $z \in  \llbracket 1 , t-1 \rrbracket$. Then,  we have  
\begin{align*}
r^{(M)}_{\Sigma,\star}(t,z) \geq \frac{  L }{t-z} r^{(F)}_{\star}(t,z). 
\end{align*}
\end{thm}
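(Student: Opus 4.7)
The plan is to reduce the claim to the per-subset inequality $H(M_{\mathcal{B}}) \geq H(F)$ for every $\mathcal{B} \subseteq \mathcal{L}$ with $|\mathcal{B}| = t-z$, and then conclude via a standard double-counting argument. More precisely, once $H(M_{\mathcal{B}}) \geq H(F) = r^{(F)}_{\star}(t,z)$ is established for every such $\mathcal{B}$, summing over all $\binom{L}{t-z}$ subsets of size $t-z$ and using the subadditivity $H(M_{\mathcal{B}}) \leq \sum_{l \in \mathcal{B}} H(M_l)$ together with the observation that each index $l$ appears in $\binom{L-1}{t-z-1}$ such subsets yields $\sum_{l \in \mathcal{L}} H(M_l) \geq \frac{L}{t-z} r^{(F)}_{\star}(t,z)$; the bound on $r^{(M)}_{\Sigma,\star}(t,z)$ then follows because $H(M_l) \leq r_l^{(M)}$ and one can take the infimum over achieving strategies.

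For the per-subset inequality, I would fix $\mathcal{B}$ of size $t-z$, pick any $\mathcal{U} \subseteq \mathcal{L} \setminus \mathcal{B}$ of size $z$ (which exists since $t \leq L$), and set $\mathcal{A} \triangleq \mathcal{B} \cup \mathcal{U}$. First I would observe that the security condition~\eqref{eqSeca} applied to $\mathcal{U}$ implies $I(F; S_{\mathcal{U}})=0$, since $S_{\mathcal{U}}$ is a function of $(M_{\mathcal{U}},K_{\mathcal{U}})$ and hence of $(M,K_{\mathcal{U}})$. Combined with the recoverability condition~\eqref{eqrel} applied to $\mathcal{A}$, this yields $H(F) = I(F; S_{\mathcal{A}} \mid S_{\mathcal{U}}) = I(F; S_{\mathcal{B}} \mid S_{\mathcal{U}})$, mirroring the classical secret-sharing manipulation. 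Next I would invoke the data-processing inequality, using that $S_{\mathcal{B}}$ is a deterministic function of $(M_{\mathcal{B}}, K_{\mathcal{B}})$, to get $I(F; S_{\mathcal{B}} \mid S_{\mathcal{U}}) \leq I(F; M_{\mathcal{B}}, K_{\mathcal{B}} \mid S_{\mathcal{U}})$, which I then decompose via the chain rule as $I(F; K_{\mathcal{B}} \mid S_{\mathcal{U}}) + I(F; M_{\mathcal{B}} \mid K_{\mathcal{B}}, S_{\mathcal{U}})$.

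The crucial step---and what I expect to be the main obstacle, in the sense that it is the ingredient peculiar to our model---is showing that the first term vanishes without paying any key-entropy penalty. Here I would exploit that $K_{\mathcal{B}}$ is jointly independent of $(F, R, K_{\mathcal{U}})$ by construction, while $S_{\mathcal{U}}$ is a deterministic function of $(F, R, K_{\mathcal{U}})$ (since $M_{\mathcal{U}}=(h_l(R,K_l,F))_{l\in\mathcal{U}}$ and $S_{\mathcal{U}}=(g_l(M_l,K_l))_{l\in\mathcal{U}}$); thus $K_{\mathcal{B}} \perp (F, S_{\mathcal{U}})$ and $I(F; K_{\mathcal{B}} \mid S_{\mathcal{U}}) = 0$. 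The remaining term is bounded trivially as $I(F; M_{\mathcal{B}} \mid K_{\mathcal{B}}, S_{\mathcal{U}}) \leq H(M_{\mathcal{B}} \mid K_{\mathcal{B}}, S_{\mathcal{U}}) \leq H(M_{\mathcal{B}})$, closing the chain $H(F) \leq H(M_{\mathcal{B}})$. Being able to ``peel'' $K_{\mathcal{B}}$ off of $S_{\mathcal{B}}$ without incurring the $(t-z)n$ bits of key entropy is what makes the factor $\frac{L}{t-z}$ on the right-hand side match the classical secret-sharing bound, rather than degrade to $\frac{L}{t}$ (which is what the naive bound $H(F) \leq H(M_{\mathcal{A}})$, using only recoverability, would produce).
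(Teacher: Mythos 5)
Your proof is correct and follows the same macro-structure as the paper's (a per-subset bound followed by a double-counting average), but the per-subset step is argued somewhat differently, and more economically. The paper's Lemma~\ref{lem11} starts from $\sum_{l \in \mathcal{S}} H(M_l) + \sum_{l\in\mathcal{S}} H(K_l)$, conditions on the raw view $(M_{\mathcal{T}},K_{\mathcal{T}})$ of a size-$z$ set, extracts $H(F)$ via recoverability and security, and then shows that the residual entropy term $H(M_{\mathcal{S}},K_{\mathcal{S}}\mid F, M_{\mathcal{T}},K_{\mathcal{T}})$ recoups the key entropy $\sum_{l\in\mathcal{S}} H(K_l)$ that was added in; the key contribution is thus eliminated by an add-and-subtract. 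You instead condition on the stored shares $S_{\mathcal{U}}$ and split $I(F; M_{\mathcal{B}}, K_{\mathcal{B}} \mid S_{\mathcal{U}})$ by the chain rule, killing the $I(F; K_{\mathcal{B}} \mid S_{\mathcal{U}})$ term outright from the single observation that $K_{\mathcal{B}} \perp (F, R, K_{\mathcal{U}}) \supseteq$-generates $(F, S_{\mathcal{U}})$; the key term never enters the accounting. This also gives the slightly sharper joint bound $H(M_{\mathcal{B}}) \geq H(F)$ (the paper's lemma only gives $\sum_{l\in\mathcal{S}} H(M_l) \geq H(F)$, which is what you fall back to via subadditivity when you double-count). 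Finally, your combinatorial step sums only over the $\binom{L}{t-z}$ subsets $\mathcal{B}$, whereas the paper's Lemma~\ref{lem22} sums over the ordered pairs $(\mathcal{T},\mathcal{S})$ with $|\mathcal{T}|=z$, $|\mathcal{S}|=t-z$ and then cancels the extra $\binom{L}{z}$ factor; your version is cleaner and gives the same $\frac{L}{t-z}$ coefficient since $\binom{L}{t-z}/\binom{L-1}{t-z-1} = \frac{L}{t-z}$.
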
 
\begin{proof}
 Set $D=1$ in Appendix \ref{App_th3}.
\end{proof}
Theorem \ref{th3} means that it is impossible for the user to store a file of length $r^{(F)}_{\star}(t,z)$ if the public communication sum-length to the servers is smaller than  $\frac{  L }{t-z} r^{(F)}_{\star}(t,z)$ bits. The proof of Theorem \ref{th3} is obtained by first showing that for $\mathcal{T} \subseteq \mathcal{L}$ and $\mathcal{S} \subseteq \mathcal{L} \backslash \mathcal{T} $ such that $|\mathcal{T}|=z$ and $|\mathcal{S}|=t-z$, the sum of the message sizes for the servers in $\mathcal{S}$ is lower bounded by $H(F)$. Then, Theorem \ref{th3} is obtained by a combinatorial argument that consists in summing  this bound over all possible sets of servers $\mathcal{S}$ and $\mathcal{T}$ as above.

\begin{thm}[Converse on the amount of public communication to an individual server] \label{th4}
Let $t \in \llbracket 1 , L \rrbracket$ and $z \in  \llbracket 1 , t-1 \rrbracket$. Consider the following condition
\begin{align}
	 	\forall \mathcal{U},\mathcal{V} \subseteq \mathcal{L}, |\mathcal{U}|=|\mathcal{V}| \implies  I\left({F}; M_{\mathcal{U}},K_{\mathcal{U}}  \right) = I\left({F}; M_{\mathcal{V}},K_{\mathcal{V}}   \right)   \label{eqsym} .
\end{align}
\eqref{eqsym} indicates that any two sets of colluding servers that have the same size have the same amount of information about the file  ${F}$. 
If \eqref{eqsym} holds, then we have  
\begin{align*}
  r^{(M)}_{l,\star} (t,z) \geq \frac{ 1 }{t-z} r^{(F)}_{\star}(t,z), \forall l \in \mathcal{L}. 
\end{align*}
\end{thm}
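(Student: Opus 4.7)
The plan is to combine the intermediate step of Theorem~\ref{th3}'s argument with the saturation provided by Theorem~\ref{th1}, and then to pull the individual message $M_l$ into the bound through a conditional-mutual-information identity. The symmetry hypothesis \eqref{eqsym} is naturally compatible with the uniform per-server bound that emerges.

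To start, I would fix any scheme that $(t,z)$-achieves $r^{(F)}_{\star}(t,z)$, a target server $l \in \mathcal{L}$, and sets $\mathcal{A} \ni l$, $\mathcal{T} \subseteq \mathcal{A} \setminus \{l\}$, $\mathcal{S} \triangleq \mathcal{A} \setminus \mathcal{T}$ with $|\mathcal{A}|=t$ and $|\mathcal{T}|=z$ (so $l \in \mathcal{S}$ and $|\mathcal{S}|=t-z$). Recoverability, data processing, and security ($I(F;M,K_\mathcal{T})=0$) give
\[
H(F) \leq I(F; M_\mathcal{A}, K_\mathcal{A}) = I(F; K_\mathcal{S} \mid M_\mathcal{A}, K_\mathcal{T}) \leq H(K_\mathcal{S}) = (t-z)n.
\]
By Theorem~\ref{th1}, $H(F) = r^{(F)}_{\star}(t,z) = (t-z)n$, so both inequalities are tight and $I(F; K_\mathcal{S} \mid M_\mathcal{A}, K_\mathcal{T}) = (t-z)n$.

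Next I would expand this via a chain rule that places $K_l$ first: it writes $(t-z)n$ as a sum of $t-z$ conditional mutual informations, each at most $H(K_{l_i}) = n$; since they sum to $(t-z)n$, every summand must equal $n$, giving in particular $I(F; K_l \mid M_\mathcal{A}, K_\mathcal{T}) = n$. Applying the two chain-rule expansions of $I(F, M_l; K_l \mid M_{\mathcal{A}\setminus\{l\}}, K_\mathcal{T})$ then yields the identity
\[
I(F; K_l \mid M_{\mathcal{A}\setminus\{l\}}, K_\mathcal{T}) + I(M_l; K_l \mid F, M_{\mathcal{A}\setminus\{l\}}, K_\mathcal{T}) = I(M_l; K_l \mid M_{\mathcal{A}\setminus\{l\}}, K_\mathcal{T}) + I(F; K_l \mid M_\mathcal{A}, K_\mathcal{T}).
\]
Because $K_l$ is independent of $(F, R, K_{\mathcal{A}\setminus\{l\}})$ and $M_{\mathcal{A}\setminus\{l\}}$ is a function of $(R, F, K_{\mathcal{A}\setminus\{l\}})$, the variable $K_l$ is independent of $(F, M_{\mathcal{A}\setminus\{l\}}, K_\mathcal{T})$, so the first term on the left vanishes. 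Substituting the per-key equality and dropping the nonnegative term $I(M_l; K_l \mid M_{\mathcal{A}\setminus\{l\}}, K_\mathcal{T})$ gives $I(M_l; K_l \mid F, M_{\mathcal{A}\setminus\{l\}}, K_\mathcal{T}) \geq n$, whence $r^{(M)}_l \geq H(M_l) \geq n = r^{(F)}_{\star}(t,z)/(t-z)$.

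The hard part will be converting the \emph{aggregate} equality $I(F; K_\mathcal{S} \mid M_\mathcal{A}, K_\mathcal{T}) = (t-z)n$ into the \emph{single-key} equality $I(F; K_l \mid M_\mathcal{A}, K_\mathcal{T}) = n$. This step hinges on Theorem~\ref{th1}'s saturation, which forces all $t-z$ chain-rule terms, each bounded by $n$, to attain the maximum simultaneously; without the saturation, the budget of $(t-z)n$ could be redistributed unevenly among the keys and any particular single-key term could fall below $n$. After that, the conditional-MI identity and the independence of $K_l$ from the remaining randomness make the final inequality essentially mechanical.
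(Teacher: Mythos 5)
Your proof is correct, but it takes a genuinely different route from the paper's, and the difference is worth spelling out.

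The paper's proof works entirely within the converse section. It uses the symmetry condition \eqref{eqsym} to define well-posed quantities $\alpha_i = I(F; M_{\mathcal{S}}, K_{\mathcal{S}})$ for $|\mathcal{S}|=i$, establishes the bound $H(M_l) \geq \sum_{i=z}^{t-1}[2\alpha_{i+1} - \alpha_i - \alpha_{i+2}]^+$, and then minimizes the right-hand side over all admissible non-decreasing profiles $(\alpha_i)$ by passing to the concave envelope (Lemma~5). That argument bounds $H(M_l)$ by $H(F)/(t-z)$ \emph{for arbitrary} $H(F)$, without ever needing to know the precise value of $r^{(F)}_\star(t,z)$; the symmetry condition is what makes $\alpha_i$ a single number per cardinality, which is the crutch the optimization rests on.

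Your proof instead leans on the saturation $H(F) = r^{(F)}_\star(t,z) = n(t-z)$. This equality is not a consequence of Theorem~\ref{th1} alone — Theorem~\ref{th1} gives only the upper bound $r^{(F)}_\star \leq n(t-z)$; the matching lower bound comes from the achievability in Theorem~\ref{th7}. Once you have that, your chain $H(F) = I(F; K_{\mathcal{S}} \mid M_{\mathcal{A}}, K_{\mathcal{T}}) \leq (t-z)n$ collapses to an equality, and because each of the $t-z$ chain-rule terms is individually bounded by $n$, each must equal $n$; in particular $I(F; K_l \mid M_{\mathcal{A}}, K_{\mathcal{T}}) = n$. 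The conditional-MI identity, combined with the independence $K_l \perp (F, M_{\mathcal{A}\setminus\{l\}}, K_{\mathcal{T}})$ (which follows because $M_{\mathcal{A}\setminus\{l\}}$ depends only on $(R, F, K_{\mathcal{A}\setminus\{l\}})$ per Definition~\ref{definition_modelg}), then yields $H(M_l) \geq n$ cleanly. All the individual steps check out.

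What this buys and costs: your route is shorter and, notably, never invokes the symmetry condition \eqref{eqsym} — so once the achievability is in hand, you actually establish the conclusion of Theorem~\ref{th4} unconditionally, which is stronger than the paper's statement. The price is a structural dependency: your Theorem~\ref{th4} is no longer a standalone converse — it needs Theorem~\ref{th7} to pin down $r^{(F)}_\star$ exactly at $n(t-z)$, without which the per-key terms could be spread unevenly and the single-key equality would not follow. The paper's approach pays for self-containment by assuming \eqref{eqsym}; yours pays for simplicity by importing the achievability. Both are valid, but be explicit when you present this: write "By Theorems~\ref{th1} and~\ref{th7}, $r^{(F)}_\star(t,z) = n(t-z)$" rather than attributing the equality to Theorem~\ref{th1} alone, since as written the attribution is incorrect and would read as a gap.
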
 
\begin{proof}
 Set $D=1$ in Appendix \ref{App_th4}.
\end{proof}
Note that \eqref{eqsym} is always true for sets with cardinality smaller than or equal to $z$ by~\eqref{eqSeca}, and for sets with cardinality larger than or equal to $t$ by \eqref{eqrel}. Note that the concept of leakage symmetry also exists in the context of secret sharing under the denomination uniform secret sharing \cite{yoshida2018optimal}.

Under the leakage symmetry condition \eqref{eqsym}, Theorem \ref{th4} means that it is impossible for the user to store a file of length $r^{(F)}_{ \star}(t,z)$ if the public communication length to Server $l\in\mathcal{L}$ is smaller than  $\frac{  1}{t-z} r^{(F)}_{\star}(t,z)$ bits. Theorem \ref{th4} is obtained by first proving a lower bound on individual public message size that depends on the leakages associated with some sets of servers, specifically, we prove that the public message size for Server $l \in \mathcal{L}$ is lower bounded by $\sum_{i=z  }^{t  -1} [2 \alpha_{i+1} - \alpha_i -\alpha_{i+2} ]^+  $, where for $i \in \mathcal{L}$ and $\mathcal{S} \subseteq \mathcal{L}$ such that $|\mathcal{S}|=i$, we have defined $\alpha_i \triangleq I\left({F} ; M_{\mathcal{S}},K_{\mathcal{S}}  \right)$ and $\alpha_{L+1} \triangleq \alpha_L$. Then, we perform an optimization over all possible values of $(\alpha_i)_{\llbracket 1, L+1 \rrbracket}$ to minimize this bound and obtain Theorem \ref{th4}.

\begin{thm}[Converse on the amount of required local randomness at the users] \label{th5}
Let $t \in \llbracket 1 , L \rrbracket$ and $z \in  \llbracket 1 , t-1 \rrbracket$. Then, we have  
\begin{align*}
   r^{(R)}_{\star} (t,z) \geq \frac{ z }{t-z} r^{(F)}_{\star}(t,z). 
\end{align*}
\end{thm}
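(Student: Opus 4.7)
The plan is to mirror the classical lower bound $H(R)\geq \frac{z}{t-z} H(F)$ for ramp secret sharing, but to work with entropies conditioned on the shared keys so that the user's randomness can be isolated from the keys. The main tool will be Han's inequality applied (in conditional form) to the public messages sent to a recovery set of size $t$.

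Concretely, I would fix $\mathcal{A}\subseteq\mathcal{L}$ with $|\mathcal{A}|=t$. Because $R$ is independent of $(F,K_{\mathcal{A}})$ and $M_{\mathcal{A}}$ is a deterministic function of $(R,F,K_{\mathcal{A}})$, the chain rule gives $H(R)=H(R\mid F,K_{\mathcal{A}})\geq H(M_{\mathcal{A}}\mid F,K_{\mathcal{A}})$. Recoverability on $\mathcal{A}$ together with $F\perp K_{\mathcal{A}}$ then yields $H(M_{\mathcal{A}}\mid F,K_{\mathcal{A}})=H(M_{\mathcal{A}}\mid K_{\mathcal{A}})-H(F)$. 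For any $\mathcal{U}\subseteq \mathcal{A}$ with $|\mathcal{U}|=z$, I would next show that $F$ is independent of $(M_{\mathcal{U}},K_{\mathcal{A}})$; combined with the trivial inequality $H(M_{\mathcal{A}}\mid F,K_{\mathcal{A}})\geq H(M_{\mathcal{U}}\mid F,K_{\mathcal{A}})$, this gives $H(M_{\mathcal{A}}\mid K_{\mathcal{A}})\geq H(F)+H(M_{\mathcal{U}}\mid K_{\mathcal{A}})$ for every such $\mathcal{U}$. Averaging over all $\mathcal{U}$ of size $z$ inside $\mathcal{A}$ and invoking Han's inequality $\binom{t}{z}^{-1}\sum_{|\mathcal{U}|=z,\mathcal{U}\subseteq \mathcal{A}} H(M_{\mathcal{U}}\mid K_{\mathcal{A}})\geq \frac{z}{t} H(M_{\mathcal{A}}\mid K_{\mathcal{A}})$ produces $H(M_{\mathcal{A}}\mid K_{\mathcal{A}})\geq \frac{t}{t-z} H(F)$, and therefore $H(R)\geq \frac{z}{t-z} H(F)=\frac{z}{t-z} r^{(F)}_{\star}(t,z)$ since $F$ is uniform on a set of size $2^{r^{(F)}_{\star}(t,z)}$.

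The hard part will be the independence claim $F\perp (M_{\mathcal{U}},K_{\mathcal{A}})$: the security hypothesis \eqref{eqSeca} only directly delivers $F\perp (M,K_{\mathcal{U}})$, and the additional keys $K_{\mathcal{A}\setminus\mathcal{U}}$ must be absorbed into the conditioning without reintroducing coupling with $F$. I would establish it by writing $p(F,M_{\mathcal{U}},K_{\mathcal{A}})=p(F,M_{\mathcal{U}},K_{\mathcal{U}})\,p(K_{\mathcal{A}\setminus\mathcal{U}}\mid F,M_{\mathcal{U}},K_{\mathcal{U}})$, observing that the first factor splits as $p(F)p(M_{\mathcal{U}},K_{\mathcal{U}})$ by security, and that $K_{\mathcal{A}\setminus\mathcal{U}}$ is independent of $(F,M_{\mathcal{U}},K_{\mathcal{U}})$ because $M_{\mathcal{U}}$ is a function of $(R,F,K_{\mathcal{U}})$ alone and the keys $K_{\mathcal{L}}$ are jointly independent of each other and of $(R,F)$ by Definition~\ref{definition_modelg}. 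This step, which has no classical-secret-sharing analogue, is precisely where the structural assumptions of the model enter the converse and distinguish the argument from the standard randomness bound.
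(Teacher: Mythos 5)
Your proposal is correct, and it takes a genuinely different route from the paper. The paper's Appendix E (specialized to $D=1$) first proves, for appropriately sized $\mathcal{V},\mathcal{T},\mathcal{S}$, the inequality $\sum_{l \in \mathcal{S}} H(M_{l},K_{l} \mid M_{\mathcal{V}},K_{\mathcal{V}} ) \geq H(F) + \sum_{l \in \mathcal{S}}H(K_{l})$, averages it combinatorially over $\mathcal{T},\mathcal{S}$ to extract a telescoping term involving a maximizing index $l^\star(\mathcal{V})$, and then iterates this $z$ times over a nested chain $\mathcal{V}_0\subset\mathcal{V}_1\subset\cdots\subset\mathcal{V}_z$ before finally peeling off the key entropies to isolate $H(R)$. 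Your argument instead fixes a single recovery set $\mathcal{A}$ of size $t$, conditions everything on $K_{\mathcal{A}}$ from the outset (thereby removing the keys in one stroke via $R\perp(F,K_{\mathcal{A}})$ and $K_{\mathcal{A}\setminus\mathcal{U}}\perp(F,M_{\mathcal{U}},K_{\mathcal{U}})$), and replaces the paper's iterative/telescoping machinery with one invocation of the conditional Han's inequality. The steps check out: $H(R)=H(R\mid F,K_{\mathcal{A}})\geq H(M_{\mathcal{A}}\mid F,K_{\mathcal{A}})$ since $M_{\mathcal{A}}$ is a function of $(R,F,K_{\mathcal{A}})$; recoverability gives $H(F\mid M_{\mathcal{A}},K_{\mathcal{A}})=0$, hence $H(M_{\mathcal{A}}\mid F,K_{\mathcal{A}})=H(M_{\mathcal{A}}\mid K_{\mathcal{A}})-H(F)$; the independence $I(F;M_{\mathcal{U}},K_{\mathcal{A}})=0$ is established exactly by the factorization you give, using security and the joint independence of keys from $(R,F)$; and Han's inequality on the $t$ variables $(M_l)_{l\in\mathcal{A}}$ conditioned on $K_{\mathcal{A}}$ delivers $H(M_{\mathcal{A}}\mid K_{\mathcal{A}})\geq \tfrac{t}{t-z}H(F)$, whence $H(R)\geq\tfrac{z}{t-z}H(F)$. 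What you gain is brevity: you avoid the paper's Lemmas on $l^\star(\mathcal{V})$ and the $z$-step iteration. What the paper's route offers is a more granular, step-by-step extraction of key entropy conditioned on progressively larger sets $\mathcal{V}_i$, which mirrors the structure of its other converse proofs (for file length, public communication, and individual messages) and keeps all five converses stylistically parallel; it also avoids appealing to Han's inequality as a black box. The final passage from "$H(R)\geq\tfrac{z}{t-z}H(F)$ for any strategy" to the bound on $r^{(R)}_{\star}(t,z)$ must, as in the paper, be read as applying the per-strategy inequality to a strategy achieving $r^{(F)}_{\star}(t,z)$ and then taking the infimum over admissible $r^{(R)}$; you state this implicitly, which is fine.
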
 

\begin{proof}
 Set $D=1$ in Appendix \ref{App_th5}.
\end{proof}
 Theorem \ref{th5} means that it is impossible for the user to store a file of length $r^{(F)}_{\star}(t,z)$ if the amount of its local randomness is smaller than  $\frac{ z }{t-z} r^{(F)}_{\star}(t,z )$ bits. The proof of Theorem \ref{th5} is obtained by first proving the bound $\sum_{l \in \mathcal{S}} H(M_{l},K_{l} |M_{\mathcal{V}},K_{\mathcal{V}} )  \geq    H( F ) + \sum_{l \in \mathcal{S}}H(K_{l}) $ for $\mathcal{S} \subseteq \mathcal{L} \backslash (\mathcal{T} \cup \mathcal{V}) $ such that $|\mathcal{S}|=t-z$ with $\mathcal{V} \subseteq \mathcal{L}$ such that $|\mathcal{V}|< z$ and $\mathcal{T} \subseteq \mathcal{L} \backslash \mathcal{V}$ such that $|\mathcal{T}|= z - |\mathcal{V}|$. Then, by a combinatorial argument that consists in summing this bound over all possible sets $\mathcal{S}$ and $\mathcal{T}$ as above, we obtain  Theorem \ref{th5}.

 \subsection{Capacity results} \label{secachievres}

We first derive an achievability result with a private file storage strategy that separates the creation of the shares, which is done via ramp secret sharing \cite{yamamoto1986secret,blakley1984security}, and the secure distribution of the shares, which is done via a one-time pad. We will then compare the bounds achieved by this coding strategy with the impossibility results of Section \ref{secimp}, to prove their optimality. 

\begin{thm}\label{th7}
Let $t \in \llbracket 1 , L \rrbracket$ and $z \in  \llbracket 1 , t-1 \rrbracket$. There exists a $\left( 2^{r^{(F)}}, 2^{r^{(R)}},\left(2^{r^{(M)}_{l}}\right)_{l \in \mathcal{L}},\left(2^{r^{(S)}_l}\right)_{l \in \mathcal{L}}\right)$ private file storage strategy that $(t,z)$-achieves $r^{(F)}$ such that 
\begin{align*}
r^{(F)} &= n (t-z) ,\\
r^{(R)} &= nz,\\
r^{(S)}_{l} &= n , \forall l\in \mathcal{L} ,\\
r^{(M)}_{l} & = n  ,  \forall l\in \mathcal{L}. %
\end{align*}
\end{thm}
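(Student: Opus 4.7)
The plan is to build a two-stage scheme that separates share creation (via ramp secret sharing) from secure delivery of the shares (via one-time pads), using the keys $K_l$ as the pad material.

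First, I would invoke a $(t,z,L)$-ramp secret sharing scheme \cite{yamamoto1986secret,blakley1984security} to encode the file $F \in \{0,1\}^{n(t-z)}$, using the local randomness $R \in \{0,1\}^{nz}$, into $L$ shares $V_1,\ldots,V_L$, each of length $n$ bits, such that (i) $F$ is a deterministic function of $V_{\mathcal{A}}$ whenever $|\mathcal{A}| \geq t$, and (ii) $I(F;V_{\mathcal{U}}) = 0$ whenever $|\mathcal{U}|\leq z$. This accounts for the choices $r^{(F)}=n(t-z)$ and $r^{(R)}=nz$. The encoding functions are then defined by $h_l(R,K_l,F) \triangleq V_l \oplus K_l$, so $M_l = V_l \oplus K_l \in \{0,1\}^n$, and the server-side encoder is $g_l(M_l,K_l)\triangleq M_l\oplus K_l = V_l$, so Server $l$ stores $S_l = V_l$. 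The decoder $f_{\mathcal{A}}$ for $|\mathcal{A}|\geq t$ is the ramp reconstruction map applied to $S_{\mathcal{A}}=V_{\mathcal{A}}$. This immediately gives $r^{(S)}_l = r^{(M)}_l = n$ and satisfies the recoverability constraint \eqref{eqrel}.

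The only nontrivial verification is the security constraint \eqref{eqSeca}. Fix $\mathcal{U}\subseteq\mathcal{L}$ with $|\mathcal{U}|\leq z$. The plan is to decompose
\begin{align*}
I(F;M,K_{\mathcal{U}})
&= I(F;M_{\mathcal{U}},K_{\mathcal{U}}) + I(F;M_{\mathcal{L}\setminus\mathcal{U}} \mid M_{\mathcal{U}},K_{\mathcal{U}}).
\end{align*}
For the first term, since $M_{\mathcal{U}}\oplus K_{\mathcal{U}} = V_{\mathcal{U}}$, the pair $(M_{\mathcal{U}},K_{\mathcal{U}})$ is in one-to-one correspondence with $(V_{\mathcal{U}},K_{\mathcal{U}})$, and $K_{\mathcal{U}}$ is independent of $(F,R)$ and therefore of $(F,V_{\mathcal{U}})$; combined with the ramp security property $I(F;V_{\mathcal{U}})=0$, this forces $I(F;M_{\mathcal{U}},K_{\mathcal{U}})=0$. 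For the second term, the one-time pad argument applies: the keys $K_{\mathcal{L}\setminus\mathcal{U}}$ are jointly independent of $(F,R,K_{\mathcal{U}})$ and hence of $(F,V_{\mathcal{L}},K_{\mathcal{U}})$, so $M_{\mathcal{L}\setminus\mathcal{U}} = V_{\mathcal{L}\setminus\mathcal{U}}\oplus K_{\mathcal{L}\setminus\mathcal{U}}$ is uniform on $\{0,1\}^{n(L-|\mathcal{U}|)}$ and independent of $(F,M_{\mathcal{U}},K_{\mathcal{U}})$, yielding $I(F;M_{\mathcal{L}\setminus\mathcal{U}}\mid M_{\mathcal{U}},K_{\mathcal{U}})=0$.

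The main (mild) obstacle is keeping the independence bookkeeping clean, in particular making sure that each $K_l$ is independent not only of $F$ and of $R$ but also of the other $K_{l'}$, so that $K_{\mathcal{L}\setminus\mathcal{U}}$ remains a valid one-time pad even after conditioning on $(M_{\mathcal{U}},K_{\mathcal{U}})$; this is secured by the joint independence assumptions in Definition~\ref{definition_modelg} together with the ramp scheme's use of fresh randomness $R$. Once this is verified, all four rate quantities claimed in the theorem are achieved simultaneously by the construction above.
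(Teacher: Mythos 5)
Your proposal is correct and follows essentially the same route as the paper's proof (set $D=1$ in Appendix~\ref{App_th7}): create shares via a $(t,z,L)$-ramp secret sharing scheme of Theorem~\ref{th8}, transmit them with one-time pads $M_l = H_l \oplus K_l$, and store $S_l = M_l \oplus K_l = H_l$, yielding exactly the claimed rates. The security argument is organized a bit differently — you split $I(F;M,K_{\mathcal{U}})$ via the chain rule into the $\mathcal{U}$-part and the $\mathcal{U}^c$-part before invoking the bijection $(M_{\mathcal{U}},K_{\mathcal{U}})\leftrightarrow(H_{\mathcal{U}},K_{\mathcal{U}})$, whereas the paper first applies the bijection to the entire expression and then peels off terms with the chain rule — but the ingredients (bijection, ramp security $I(F;H_{\mathcal{U}})=0$, key independence, one-time-pad uniformity of $M_{\mathcal{U}^c}$) and the conclusion are identical.
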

\begin{proof}
 Set $D=1$ in Appendix \ref{App_th7}.
\end{proof}

From Theorem \ref{th7} and the impossibility results of Section~\ref{secimp}, we obtain a characterization of the quantities introduced in Definition \ref{def2} as follows.  
\begin{thm} \label{th6}
Let $t \in \llbracket 1 , L \rrbracket$ and $z \in  \llbracket 1 , t-1 \rrbracket$. We have
\begin{align*}
r^{(F)}_{ \star} (t,z)&= n (t-z), \\
r^{(R)}_{\star} (t,z) &=  n z,\\
 r^{(S)}_{l,\star}(t,z) &= \textstyle n,  \forall l\in \mathcal{L},\\ 
r^{(M)}_{\Sigma,\star}(t,z) &  = L n, \\
\eqref{eqsym}  \implies  \Big( r^{(M)}_{l,\star} (t,z)&  = n,   \forall l \in \mathcal{L} \Big).  
\end{align*}
\end{thm}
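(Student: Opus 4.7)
The plan is to obtain Theorem \ref{th6} by simply combining the converse bounds of Theorems \ref{th1}--\ref{th5} with the single achievability scheme of Theorem \ref{th7}. The nontrivial observation is that, although Definition \ref{def2} defines each optimal quantity as a separate infimum over the set $\mathcal{Q}(r^{(F)}_\star(t,z))$ of strategies achieving the maximum file length, Theorem \ref{th7} in fact exhibits \emph{one} strategy that attains all of these infima simultaneously, which is what is needed to conclude.

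First I would establish the file-length equality. Theorem \ref{th1} yields $r^{(F)}_\star(t,z) \leq n(t-z)$, while Theorem \ref{th7} produces a private file storage strategy with $r^{(F)} = n(t-z)$, so this length is $(t,z)$-achievable. By Definition \ref{def}, this shows $n(t-z) \in \mathcal{C}_F(t,z)$, and combining with the upper bound gives $r^{(F)}_\star(t,z) = n(t-z)$. This identity is the key input for the remaining bounds, since Theorems \ref{th3}, \ref{th4}, \ref{th5} are all expressed in terms of $r^{(F)}_\star(t,z)$.

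Next I would substitute $r^{(F)}_\star(t,z) = n(t-z)$ into Theorems \ref{th3}, \ref{th4}, \ref{th5} to obtain the matching converses
\begin{align*}
r^{(M)}_{\Sigma,\star}(t,z) &\geq \tfrac{L}{t-z}\, n(t-z) = Ln,\\
r^{(R)}_\star(t,z) &\geq \tfrac{z}{t-z}\, n(t-z) = nz,\\
r^{(M)}_{l,\star}(t,z) &\geq \tfrac{1}{t-z}\, n(t-z) = n \quad \text{(under \eqref{eqsym})},
\end{align*}
and I would invoke Theorem \ref{th2} directly for $r^{(S)}_{l,\star}(t,z) \geq n$. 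This handles all the $\geq$ directions.

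For the reverse inequalities, the key point is to verify that the scheme constructed in Theorem \ref{th7} simultaneously realizes the tuple $\bigl(n(t-z),\, nz,\, (n)_{l\in\mathcal{L}},\, (n)_{l\in\mathcal{L}}\bigr)$, so that this tuple belongs to $\mathcal{Q}(r^{(F)}_\star(t,z))$. Taking the infima in Definition \ref{def2} over a set that contains this specific tuple immediately yields $r^{(R)}_\star(t,z) \leq nz$, $r^{(S)}_{l,\star}(t,z)\leq n$, $r^{(M)}_{l,\star}(t,z) \leq n$, and $r^{(M)}_{\Sigma,\star}(t,z) \leq Ln$, matching the converses. For the statement on $r^{(M)}_{l,\star}$, one additionally notes that the ramp-secret-sharing plus one-time-pad construction of Theorem \ref{th7} is symmetric across servers, so the leakage-symmetry hypothesis \eqref{eqsym} is met and the bound of Theorem \ref{th4} applies. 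The only mildly delicate point, and what I would flag as the main thing to justify, is this simultaneous achievability: since Definition \ref{def2} takes separate infima, one must observe that Theorem \ref{th7} provides a single joint strategy, not five different ones, so that no trade-off is hidden. Once this is pointed out, Theorem \ref{th6} follows without further computation.
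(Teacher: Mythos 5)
Your proposal is correct and follows essentially the same route as the paper's own proof in Appendix~\ref{App_th6}: combine the single achievability scheme of Theorem~\ref{th7} (showing the rate tuple lies in $\mathcal{Q}(r^{(F)}_\star(t,z))$) with the converses of Theorems~\ref{th1}--\ref{th5}, after first pinning down $r^{(F)}_\star(t,z) = n(t-z)$. Your explicit flagging that one scheme attains all five infima simultaneously, and that the one-time-pad/ramp-sharing construction satisfies \eqref{eqsym}, makes precise something the paper leaves implicit, but it is the same argument.
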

\begin{proof}
 Set $D=1$ in Appendix \ref{App_th6}.
\end{proof}

Note that from Theorem \ref{th7} and Theorem~\ref{th6}, we immediately have Corollary \ref{cor2}, which states that the optimal quantities of Definition~\ref{def2}  can be obtained simultaneously by a single private file storage strategy. 

\begin{cor} \label{cor2}
Let $t \in \llbracket 1 , L \rrbracket$ and $z \in  \llbracket 1 , t-1 \rrbracket$. There exists a $\left( 2^{r^{(F)}}, 2^{r^{(R)}},\left(2^{r^{(M)}_{l}}\right)_{l \in \mathcal{L}},\left(2^{r^{(S)}_l}\right)_{l \in \mathcal{L}}\right)$ private file storage strategy that $(t,z)$-achieves $r^{(F)}$ such that 
\begin{align*}
r^{(F)} &= r^{(F)}_{ \star} (t,z) ,\\
r^{(R)} &= r^{(R)}_{\star} (t,z),\\
r^{(S)}_{l} &= r^{(S)}_{l,\star}(t,z)  , \forall l\in \mathcal{L} ,\\
\textstyle\sum_{l \in\mathcal{L}} r^{(M)}_{l} & =r^{(M)}_{\Sigma,\star}(t,z)  ,\\
r^{(M)}_{l} & =r^{(M)}_{l,\star}(t,z)  ,  \forall l\in \mathcal{L},\text{ when \eqref{eqsym} holds.} 
\end{align*}
\end{cor}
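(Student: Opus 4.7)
The plan is to observe that Corollary~\ref{cor2} follows directly by combining Theorem~\ref{th7} with the characterizations given in Theorem~\ref{th6}, without any new construction. Theorem~\ref{th7} exhibits a \emph{single} private file storage strategy (ramp secret sharing composed with a one-time pad) that simultaneously realizes the parameter tuple $(r^{(F)}, r^{(R)}, (r^{(M)}_l)_{l\in\mathcal{L}}, (r^{(S)}_l)_{l\in\mathcal{L}}) = (n(t-z), nz, (n)_{l\in\mathcal{L}}, (n)_{l\in\mathcal{L}})$, and Theorem~\ref{th6} identifies each of these five numbers with the corresponding optimum $r^{(F)}_\star$, $r^{(R)}_\star$, $r^{(M)}_{l,\star}$, $r^{(S)}_{l,\star}$. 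Hence all we need to do is line up the equalities.

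Concretely, I would proceed as follows. First, invoke Theorem~\ref{th7} to obtain a strategy with parameters as above. Then, using Theorem~\ref{th6}, rewrite those parameters: $n(t-z)=r^{(F)}_\star(t,z)$, $nz=r^{(R)}_\star(t,z)$, $n=r^{(S)}_{l,\star}(t,z)$ for every $l\in\mathcal{L}$, and, under the leakage symmetry condition~\eqref{eqsym}, $n=r^{(M)}_{l,\star}(t,z)$ for every $l\in\mathcal{L}$. For the sum-rate claim, simply add the per-server public communication lengths:
\[
\sum_{l\in\mathcal{L}} r^{(M)}_l \;=\; \sum_{l\in\mathcal{L}} n \;=\; Ln \;=\; r^{(M)}_{\Sigma,\star}(t,z),
\]
where the last equality again uses Theorem~\ref{th6}. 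This shows that every equality in the statement of Corollary~\ref{cor2} is met by the same strategy produced by Theorem~\ref{th7}.

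There is essentially no obstacle to overcome here; the corollary is a bookkeeping consequence of the fact that the achievability scheme used to prove Theorem~\ref{th7} already meets every individual converse bound simultaneously, so no tension between the optima arises. The only subtlety worth flagging is the final line about per-server public communication: it is only claimed under~\eqref{eqsym}, which is consistent with Theorem~\ref{th4} and Theorem~\ref{th6}, where the tight lower bound $n$ on each $r^{(M)}_{l,\star}$ is itself conditional on~\eqref{eqsym}. The scheme of Theorem~\ref{th7} happens to satisfy~\eqref{eqsym} by symmetry of its construction across servers, so the last equality is justified whenever the hypothesis is assumed.
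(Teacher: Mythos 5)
Your argument matches the paper exactly: the paper gives no separate proof for Corollary~\ref{cor2}, stating only that it follows immediately by combining the single achievability scheme of Theorem~\ref{th7} with the optimality characterizations of Theorem~\ref{th6}, which is precisely the bookkeeping you carry out. Your remark that the scheme of Theorem~\ref{th7} satisfies~\eqref{eqsym} by server symmetry is a correct and worthwhile observation that the paper leaves implicit.
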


Results interpretation: Consider, $t \in \mathcal{L}$, and $z\in\llbracket 1 ,t-1\rrbracket$. Assume that the user shares an individual key with length $n$ bits with each server. Then, the user can store a file of size at most $n(t-z)$ bits such that any set of servers larger than or equal to $t$ can reconstruct the file, and any set of servers smaller than or equal to $z$ cannot learn anything about the file. Moreover, if the user stores a file of length $n(t-z)$ bits, then the optimal storage capacity at each server is $n$ bits, the optimal amount of local randomness needed at the user is $n\times z$ bits, and the optimal amount of public communication from the user to all the servers is $L\times n$ bits. If one assumes that the leakage about the file must be symmetric among the servers, i.e., \eqref{eqsym} holds, then the optimal  amount of public communication from the user to Server $l\in\mathcal L$ is $ n$ bits.

Note that Corollary \ref{cor2} shows that there is a linear relationship between the maximal length of the file that can be stored and the three resources key length, local randomness, and public communication. This relationship is represented in Figure \ref{figg}.
 \begin{figure}[H]
\centering
\includegraphics[width=0.45\textwidth]{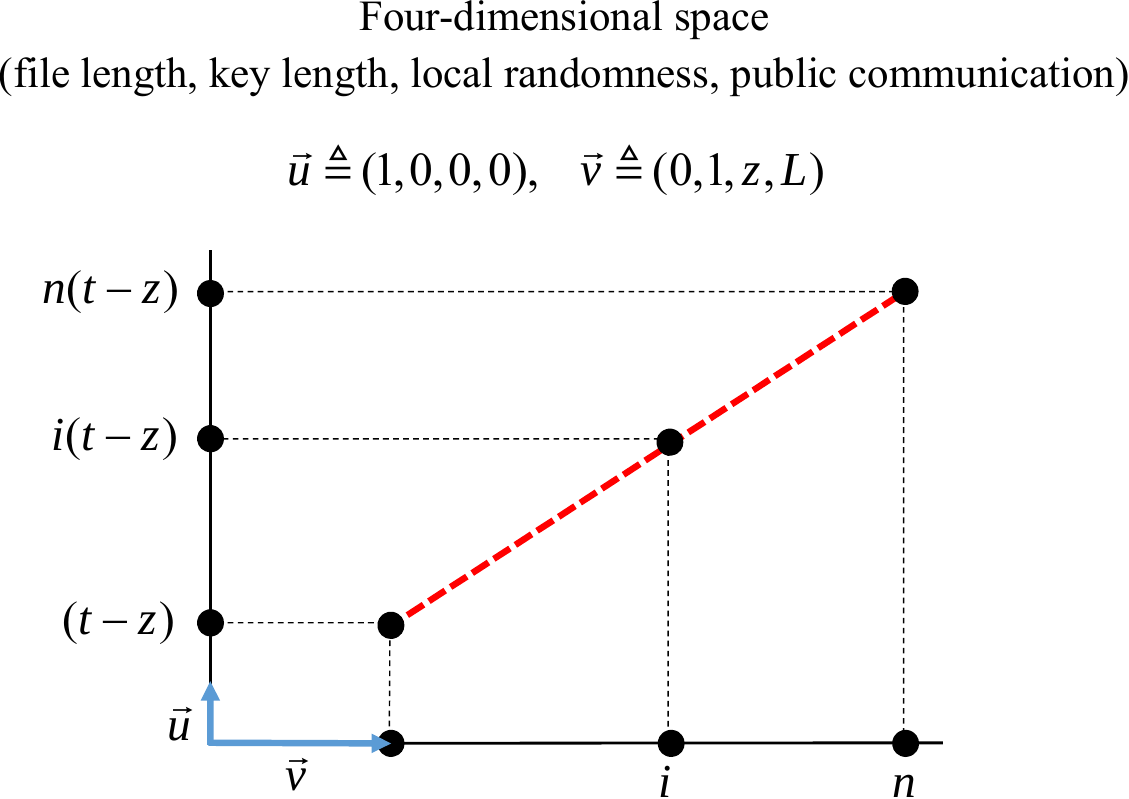}
\caption{Linear relationship between the maximal length of the file that can be stored and the three resources key length, local randomness, and public communication. For instance, for $i \in \mathbb{N}$, storing a file of $i(t-z)$ bits  requires $i$-bit keys, $i \times z$ bits of local randomness, and $i \times L$ bits of public communication.} \label{figg}
\end{figure}

\section{Extension to multiple users}\label{sec:ds2}
In this section, we generalize the problem statement and results of Sections \ref{sec:ds} and \ref{secres}, respectively, to the case where $D\geq 1$ users wish to store files in the servers. %

\subsection{Problem statement}

Consider $L$ servers indexed by $\mathcal{L} \triangleq \llbracket 1, L \rrbracket $ and $D$ users indexed by $\mathcal{D} \triangleq \llbracket 1, D \rrbracket $. Assume that Server~$l\in\mathcal{L}$ and User $d\in\mathcal{D}$ share a secret key $K_{d,l}\in \mathcal{K}_d \triangleq \{0,1\}^{n_d}$, which is a sequence of $n_d$ bits uniformly distributed over $\{0,1\}^{n_d}$. All the $D\times L$ keys are assumed to be jointly independent. For any $\mathcal X \subseteq \mathcal{D}$, $\mathcal Y \subseteq \mathcal{L}$, we define $K_{\mathcal X,\mathcal Y} \triangleq (K_{x,y})_{x\in \mathcal X, y\in \mathcal Y}$ and,  for convenience, we also write $K_{d,\mathcal Y} \triangleq K_{\{d\},\mathcal Y}$ for any $d \in \mathcal{D}$.

\begin{defn} \label{definition_modelg2}
 A $\left( \left(2^{r^{(F)}_d}\right)_{d \in \mathcal{D}}, \left(2^{r^{(R)}_d}\right)_{d \in \mathcal{D}},\left(2^{r^{(M)}_{d,l}}\right)_{d \in \mathcal{D},l \in \mathcal{L}},\right.$ $\left.\left(2^{r^{(S)}_l}\right)_{l \in \mathcal{L}}\right)$ private file storage strategy consists~of 
\begin{itemize}
\item $D$ independent files $F_d$, $d \in \mathcal{D}$, where File $F_d$ is owned by User $d$ and is uniformly distributed over $\mathcal{F}_d \triangleq \{ 0,1\}^{r^{(F)}_d}$. The files $(F_d)_{d \in \mathcal{D}}$ are assumed independent of the keys~$K_{\mathcal D,\mathcal L}$;
\item $D$ independent sequences of local randomness $R_d$, $d \in \mathcal{D}$, where Sequence $R_d$ is owned by User $d$ and is uniformly distributed over $\mathcal{R}_d   \triangleq \{ 0,1\}^{r_d^{(R)}}$. The sequences $(R_d)_{d \in \mathcal{D}}$ are assumed independent of all the other random~variables;
\item $D \times L$ encoding functions 
$h_{d,l}: \mathcal{R}_d \times \mathcal{K}_d \times \mathcal{F}_d  \to \mathcal{M}_{d,l},$
where  $d \in \mathcal{D}$, $l \in \mathcal{L}$, and $\mathcal{M}_{d,l} \triangleq \{ 0,1\}^{r_{d,l}^{(M)}}$;
\item $L$ servers with storage capacities $r_{l}^{(S)}$ bits for Server $l\in\mathcal{L}$;
\item $L \times D$ encoding functions $g_{l,d} : \mathcal{M}_{d,l} \times \mathcal{K}_d \to \mathcal{S}_{l,d} ,$ where $l \in \mathcal{L}$, $d \in \mathcal{D}$, and the $ \mathcal{S}_{l,d}$ are such that $\bigtimes_{d\in\mathcal{D}} \mathcal{S}_{l,d} = \{ 0,1 \}^{r_{l}^{(S)}}$;
\item $2^L \times D$ decoding functions $f_{\mathcal{A},d} : \bigtimes_{l\in\mathcal{A}} \mathcal{S}_{l,d} \to \mathcal{F}_d,$ where $\mathcal{A} \subseteq \mathcal{L}$,  $d \in \mathcal{D}$;
\end{itemize}
and operates as follows:
\begin{enumerate}
\item User $d \in \mathcal{D}$ publicly sends to Server $l \in \mathcal{L}$ the message $M_{d,l} \triangleq h_{d,l}( R_d,K_{d,l},F_d ).$ For $\mathcal X \subseteq \mathcal{D}$, $\mathcal Y \subseteq \mathcal{L}$, we define $M_{\mathcal X,\mathcal{Y}} \triangleq (M_{d,l})_{d\in\mathcal{X},l\in\mathcal{Y}}$ and, for convenience, we also write $M_{\mathcal{X}} \triangleq M_{\mathcal X,\mathcal{L}}$ and $M_{d} \triangleq M_{\{d\},\mathcal{L}}$, $d\in\mathcal{D}$. %
\item Server $l\in\mathcal{L}$ stores $(S_{l,d})_{d \in \mathcal{D}}$ where for $d \in \mathcal{D}$, $S_{l,d} \triangleq g_{l,d}( M_{d,l}, K_{d,l}).$
\item Any subset of servers $\mathcal{A} \subseteq \mathcal{L}$   can compute $ \widehat{F}_d(\mathcal{A})\triangleq f_{\mathcal{A},d}(S_{\mathcal{A},d}),$  an estimate of $F_d$, where $d\in\mathcal{D}$ and $S_{\mathcal{A},d}\triangleq (S_{l,d})_{l\in \mathcal{A}}$.
\end{enumerate}
\end{defn}
\begin{defn} \label{def2b}
Let $\mathbf{t} \triangleq (t_d )_{d\in\mathcal{D}}\in \llbracket 1 , L \rrbracket^D$ and $\mathbf{z} \triangleq (z_d )_{d\in\mathcal{D}} \in \bigtimes_{d \in {\mathcal{D}}} \llbracket 1 , t_d-1 \rrbracket$.  A length-tuple $\left({r^{(F)}_d}\right)_{d \in \mathcal{D}}$ is $(\mathbf{t},\mathbf{z})$-achievable if there exists a $\left( \left(2^{r^{(F)}_d}\right)_{d \in \mathcal{D}}, \left(2^{r^{(R)}_d}\right)_{d \in \mathcal{D}},\left(2^{r^{(M)}_{d,l}}\right)_{d \in \mathcal{D},l \in \mathcal{L}},\left(2^{r^{(S)}_l}\right)_{l \in \mathcal{L}}\right)$ private file storage strategy such~that
\begin{align}
\forall d \in \mathcal{D}, \Big(	 \forall \mathcal{A} \subseteq \mathcal{L}, |\mathcal{A}| \geq t_d &\implies H({F}_{d} |\widehat{F}_{d} (\mathcal{A})) = 0 \Big)  , \label{eqrel2} \\
\forall d \in \mathcal{D}, \Big(	\forall \mathcal{U} \subseteq \mathcal{L}, |\mathcal{U}| \leq z_d &\implies  I({F}_{\mathcal{Z}_d}; M_{\mathcal{D}}, K_{\mathcal{D},\mathcal{U}} ) = 0 \Big)\label{eqSeca2},
\end{align}
where for $d\in \mathcal{D}$, we have defined $\mathcal{Z}_d \triangleq \{ i \in \mathcal{D} : z_i \geq z_d\}$ and $F_{\mathcal{Z}_d} \triangleq (F_i)_{i \in \mathcal{Z}_d}$. The set of all achievable length-tuples is denoted by $\mathcal{C}_{F}(\mathbf{t},\mathbf{z})$.
\end{defn}
\eqref{eqrel2} means that any subset of servers with size larger than or equal to $t_d$ is able to perfectly recover the file $F_{d}$, $d\in\mathcal {D}$, and  \eqref{eqSeca2} means that any subset of servers with size smaller than or equal to $z_d$ is unable to learn any information about the files $\{ F_i : i\in\mathcal{D}, z_i \geq z_d\}$. Similar to the case $D=1$ in Section~\ref{sec:ds}, we have the following counterpart to Definition~\ref{def2}.

\begin{defn} \label{def22}
Let $\mathbf{t} \triangleq (t_d )_{d\in\mathcal{D}}\in \llbracket 1 , L \rrbracket^D$ and $\mathbf{z} \triangleq (z_d )_{d\in\mathcal{D}} \in \bigtimes_{d \in {\mathcal{D}}} \llbracket 1 , t_d-1 \rrbracket$. For $\mathbf{r}^{(F)} \triangleq \left(r^{(F)}_{d}\right)_{d\in \mathcal{D}}$ in $\mathcal{C}_{F}(\mathbf{t},\mathbf{z})$, let $\mathcal{Q} (\mathbf{r}^{(F)})$ be the set of tuples $ T \triangleq  \left((r_d^{(R)})_{d\in \mathcal{D}},(r_{d,l}^{(M)})_{d\in \mathcal{D},l\in \mathcal{L}},(r_l^{(S)})_{l\in \mathcal{L}} \right)$ such that there exists  a $\left( \left(2^{r^{(F)}_d}\right)_{d \in \mathcal{D}}, \left(2^{r^{(R)}_d}\right)_{d \in \mathcal{D}},\left(2^{r^{(M)}_{d,l}}\right)_{d \in \mathcal{D},l \in \mathcal{L}},\left(2^{r^{(S)}_l}\right)_{l \in \mathcal{L}}\right)$ private file storage strategy that $(\mathbf{t},\mathbf{z})$-achieves $\mathbf{r}^{(F)}$. Then, for any $d \in \mathcal{D}$, we  define the following quantities:
\begin{align*}
r^{(F)}_{{d}, \star}(\mathbf{t},\mathbf{z}) & \triangleq \sup_{\mathbf{r}^{(F)} \in \mathcal{C}_{F}(\mathbf{t},\mathbf{z})} r^{(F)}_{{d}}, \\
 r^{(M)}_{d,l,\star} (\mathbf{t},\mathbf{z})  &\triangleq \inf_{ \substack{T \in \mathcal{Q} (\mathbf{r}^{(F)})\\: r^{(F)}_{{d}}=r^{(F)}_{{d}, \star}(\mathbf{t},\mathbf{z}) }}  r_{d,l}^{(M)}, l \in \mathcal{L},\\
  r^{(M)}_{d,\Sigma,\star} (\mathbf{t},\mathbf{z})  &\triangleq \inf_{ \substack{T \in \mathcal{Q} (\mathbf{r}^{(F)})\\: r^{(F)}_{{d}}=r^{(F)}_{{d}, \star}(\mathbf{t},\mathbf{z}) }} \sum_{l \in \mathcal{L}} r_{d,l}^{(M)}, \\
  r^{(R)}_{d,\star} (\mathbf{t},\mathbf{z})  &\triangleq \inf_{ \substack{T \in \mathcal{Q} (\mathbf{r}^{(F)})\\: r^{(F)}_{{d}}=r^{(F)}_{{d}, \star}(\mathbf{t},\mathbf{z}) }}  r_d^{(R)}, \\
  r^{(S)}_{d,l,\star} (\mathbf{t},\mathbf{z})  &\triangleq \inf_{ \substack{T \in \mathcal{Q} (\mathbf{r}^{(F)})\\: r^{(F)}_{{d}}=r^{(F)}_{{d}, \star}(\mathbf{t},\mathbf{z}) }} r_l^{(S)}, l \in \mathcal{L}.
\end{align*}
\end{defn}
$r^{(F)}_{{d}, \star}(\mathbf{t},\mathbf{z})$ is the largest file size that User $d\in \mathcal{D}$ can privately store under the constraints~\eqref{eqrel2} and~\eqref{eqSeca2}. Then, $r^{(R)}_{d,\star}(\mathbf{t},\mathbf{z}) $, $r^{(M)}_{d,l,\star} (\mathbf{t},\mathbf{z})$, $r^{(M)}_{d,\Sigma,\star} (\mathbf{t},\mathbf{z})$, and $r^{(S)}_{d,l,\star}(\mathbf{t},\mathbf{z})$, $l \in \mathcal{L}$, are the minimum amount of local randomness, the minimum amount of public communication to Server $l$, the minimum amount of public communication to all the servers, and the minimum storage size required at Server~$l$,  respectively, needed for User~$d$  to obtain $r^{(F)}_{{d}, \star}(\mathbf{t},\mathbf{z})$. 
A priori, it is unclear whether it is possible to simultaneously obtain $\left(r^{(F)}_{{d}, \star}(\mathbf{t},\mathbf{z})\right)_{d\in\mathcal{D}}$, i.e., whether $\left(r^{(F)}_{{d}, \star}(\mathbf{t},\mathbf{z})\right)_{d\in\mathcal{D}} \in  \mathcal{C}_{F}(\mathbf{t},\mathbf{z})$. We will show that it is actually possible. We will then study whether $ r^{(R)}_{d,\star} (\mathbf{t},\mathbf{z})$, $ r^{(M)}_{d,l,\star}(\mathbf{t},\mathbf{z}) $, and $r^{(S)}_{d,l,\star}(\mathbf{t},\mathbf{z})$, $d \in \mathcal{D}$, $l \in \mathcal{L}$, can  be simultaneously obtained in the  achievability of $\left(r^{(F)}_{{d}, \star}(\mathbf{t},\mathbf{z})\right)_{d\in\mathcal{D}}$ with a single coding scheme.  

 \subsection{Results} \label{secimp2}
In the following theorem, we state impossibility results that are counterpart to the results in Section \ref{secimp} for the case $D=1$.
\begin{thm}
Let $\mathbf{t} \triangleq (t_d )_{d\in\mathcal{D}}\in \llbracket 1 , L \rrbracket^D$ and $\mathbf{z} \triangleq (z_d )_{d\in\mathcal{D}} \in \bigtimes_{d \in {\mathcal{D}}} \llbracket 1 , t_d-1 \rrbracket$. Consider the following leakage symmetry condition
\begin{align}
&	\forall d \in  \mathcal{D}, \left(	 \forall \mathcal{U},\mathcal{V} \subseteq \mathcal{L},  \right. \nonumber \\
	&|\mathcal{U}|=|\mathcal{V}| \implies  \left. I\left({F}_{d}; M_{d,\mathcal{U}},K_{d,\mathcal{U}}  \right) = I\left({F}_{d}; M_{d,\mathcal{V}},K_{d,\mathcal{V}}   \right) \right)  \label{eqsym2} .
\end{align}
Then, for any $d\in \mathcal{D}$, we have  
\begin{align}
r^{(F)}_{{d}, \star}(\mathbf{t},\mathbf{z}) & \leq n_d (t_d-z_d)  ,  \label{th12}\\
r^{(S)}_{d,l,\star}(\mathbf{t},\mathbf{z}) & \geq \sum_{d \in \mathcal D} n_d, \forall l\in \mathcal{L}, \label{th22} \\
r^{(M)}_{d,\Sigma,\star}(\mathbf{t},\mathbf{z}) &\geq \frac{  L }{t_d-z_d} r^{(F)}_{d,\star}(\mathbf{t},\mathbf{z}),  \label{th32} \\
\eqref{eqsym2} \implies  r^{(M)}_{d,l,\star} (\mathbf{t},\mathbf{z}) & \geq \frac{ 1 }{t_d-z_d} r^{(F)}_{d,\star}(\mathbf{t},\mathbf{z}),  \forall l \in \mathcal{L}, \label{th42}\\
   r^{(R)}_{d,\star} (\mathbf{t},\mathbf{z}) &\geq \frac{ z_d }{t_d-z_d} r^{(F)}_{d,\star}(\mathbf{t},\mathbf{z}). \label{th52}
\end{align}
\end{thm}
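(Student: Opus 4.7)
The plan is to reduce each of the five bounds \eqref{th12}--\eqref{th52} to its single-user counterpart (Theorems \ref{th1}--\ref{th5}) by fixing a user $d \in \mathcal{D}$ and exploiting the product structure of the multi-user setup. The key enabler is that, for any $d \in \mathcal{D}$, the tuple $(F_d, R_d, K_{d,\mathcal{L}})$ is jointly independent of $(F_{d'}, R_{d'}, K_{d',\mathcal{L}})_{d' \neq d}$, and the user-$d$ encoders $h_{d,l}$ and server functions $g_{l,d}$ only use user $d$'s own variables. Therefore, once I condition on $(M_{\mathcal{D}\setminus\{d\}}, K_{\mathcal{D}\setminus\{d\},\mathcal{L}})$, the single-user entropy manipulations can be copied verbatim with $(F, R, K_\mathcal{L}, M_\mathcal{L}, S_\mathcal{L})$ replaced by their user-$d$ counterparts $(F_d, R_d, K_{d,\mathcal{L}}, M_{d,\mathcal{L}}, S_{\mathcal{L},d})$.

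For \eqref{th12}, I would fix $\mathcal{U} \subseteq \mathcal{A} \subseteq \mathcal{L}$ with $|\mathcal{U}| = z_d$ and $|\mathcal{A}| = t_d$; since $d \in \mathcal{Z}_d$, the security condition \eqref{eqSeca2} yields $I(F_d; M_\mathcal{D}, K_{\mathcal{D},\mathcal{U}}) = 0$, and combining this with recoverability \eqref{eqrel2} applied to $F_d$ and $S_{\mathcal{A},d}$ gives $H(F_d) = I(F_d; S_{\mathcal{A},d} \mid M_\mathcal{D}, K_{\mathcal{D},\mathcal{U}})$; using $S_{l,d} = g_{l,d}(M_{d,l}, K_{d,l})$ and the mutual independence of user $d$'s keys, this upper bounds by $H(K_{d,\mathcal{A}\setminus\mathcal{U}}) = (t_d - z_d) n_d$. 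For \eqref{th22}, I would apply the single-user storage lower bound to the per-user share $S_{l,d}$ at server $l$, obtaining $\log |\mathcal{S}_{l,d}| \geq n_d$ for each $d$; since $\bigtimes_{d \in \mathcal{D}} \mathcal{S}_{l,d}$ has $2^{r_l^{(S)}}$ elements, summing over $d$ delivers $r_l^{(S)} \geq \sum_{d \in \mathcal{D}} n_d$. For \eqref{th32}, \eqref{th42}, \eqref{th52}, I would replay the combinatorial and optimization arguments sketched for Theorems \ref{th3}--\ref{th5} using user-$d$ objects throughout, while adding $(M_{\mathcal{D}\setminus\{d\}}, K_{\mathcal{D}\setminus\{d\},\mathcal{L}})$ to every conditioning event so that the security-based cancellations still apply.

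The main obstacle is confirming that injecting $(M_{\mathcal{D}\setminus\{d\}}, K_{\mathcal{D}\setminus\{d\},\mathcal{L}})$ into the conditioning preserves each of the delicate entropy identities from the single-user proofs, in particular the Theorem \ref{th5} lower bound $\sum_{l \in \mathcal{S}} H(M_{d,l}, K_{d,l} \mid M_{d,\mathcal{V}}, K_{d,\mathcal{V}}) \geq H(F_d) + \sum_{l \in \mathcal{S}} H(K_{d,l})$ and the per-server leakage-optimization machinery underlying Theorem \ref{th4}. This in turn reduces to observing that, conditioned on $(M_{\mathcal{D}\setminus\{d\}}, K_{\mathcal{D}\setminus\{d\},\mathcal{L}})$, the joint distribution of $(F_d, R_d, K_{d,\mathcal{L}}, M_{d,\mathcal{L}}, S_{\mathcal{L},d})$ is unchanged, since $M_{\mathcal{D}\setminus\{d\}}$ is a deterministic function of $(F_{\mathcal{D}\setminus\{d\}}, R_{\mathcal{D}\setminus\{d\}}, K_{\mathcal{D}\setminus\{d\},\mathcal{L}})$, which is independent of everything indexed by $d$. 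Once this cross-user independence is recorded, the single-user chain-rule manipulations transfer mechanically and produce exactly the five bounds \eqref{th12}--\eqref{th52}.
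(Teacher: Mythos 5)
Your reduction-to-single-user strategy is sound in spirit, and the cross-user independence you identify is exactly the property the paper exploits. For \eqref{th12} your sketch essentially reproduces the paper's Lemmas~\ref{lem111} and~\ref{lem222}, and for \eqref{th32}--\eqref{th52} your idea of carrying $(M_{\mathcal{D}\setminus\{d\}},K_{\mathcal{D}\setminus\{d\},\mathcal{L}})$ along in the conditioning would work, although the paper takes a lighter route: rather than conditioning, it simply replaces the single-user security identity $I(F;M,K_{\mathcal{U}})=0$ with the one-sided bound $I(F_d;M_{d,\mathcal{T}},K_{d,\mathcal{T}})\leq I(F_{\mathcal{Z}_d};M_{\mathcal{D}},K_{\mathcal{D},\mathcal{T}})=0$, so the chain-rule manipulations go through without extending the conditioning set. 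Your approach requires verifying the conditional security identity $I(F_d;M_d,K_{d,\mathcal{U}}\mid M_{\mathcal{D}\setminus\{d\}},K_{\mathcal{D}\setminus\{d\},\mathcal{L}})=0$, which is true but a bit more work to justify; the paper's one-liner avoids that.

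However, your argument for \eqref{th22} has a genuine gap. You assert that ``applying the single-user storage lower bound to the per-user share $S_{l,d}$'' yields $\log|\mathcal{S}_{l,d}|\geq n_d$. The single-user bound (Theorem~\ref{th2}) is not a statement about any share's alphabet size; it is a statement about the server's total storage, proven by observing that the server must hold its key $K_l$ \emph{before} any public communication arrives. There is nothing in the model forcing the key $K_{d,l}$ to reside in the sub-block $\mathcal{S}_{l,d}$ of the server's memory, so the per-user claim $\log|\mathcal{S}_{l,d}|\geq n_d$ does not follow from the single-user argument, nor from recoverability/security on the shares themselves (individual-share lower bounds in ramp secret sharing require an extra intermediate-leakage assumption that is not available here). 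The paper's Appendix~\ref{App_th2} instead argues directly about the \emph{aggregate}: at the start of the protocol Server~$l$ must store all of $K_{\mathcal{D},l}$, which is $\sum_{d\in\mathcal{D}}n_d$ bits, hence $r_l^{(S)}\geq\sum_{d\in\mathcal{D}}n_d$. You should replace your per-user partition argument with this direct aggregate argument.
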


\eqref{th12} is proved in Appendix \ref{App_th1} and  means that it is impossible for User $d \in \mathcal{D}$ to store a file of length larger than $n_d (t_d-z_d)$ bits. 
 \eqref{th22} is proved in Appendix \ref{App_th2} and means that  Server $l\in \mathcal{L}$ needs a storage capacity of at least $\sum_{d \in \mathcal D} n_d$~bits.
\eqref{th32} is proved in Appendix \ref{App_th3} and means that it is impossible for User $d \in \mathcal{D}$ to store a file of length $r^{(F)}_{{d}, \star}(\mathbf{t},\mathbf{z})$ if the public communication sum-length to the servers is smaller than  $\frac{  L }{t_d-z_d} r^{(F)}_{d,\star}(\mathbf{t},\mathbf{z})$ bits. 
 \eqref{th42} is proved in Appendix \ref{App_th4} and means that, under the leakage symmetry condition \eqref{eqsym2}, it is impossible for User~$d \in \mathcal{D}$ to store a file of length $r^{(F)}_{{d}, \star}(\mathbf{t},\mathbf{z})$ if the public communication length to Server $l\in\mathcal{L}$ is smaller than  $\frac{  1}{t_d-z_d} r^{(F)}_{d,\star}(\mathbf{t},\mathbf{z})$ bits. 
 \eqref{th52} is proved in Appendix \ref{App_th5} and means that it is impossible for User~$d \in \mathcal{D}$ to store a file of length $r^{(F)}_{{d}, \star}(\mathbf{t},\mathbf{z})$ if the amount of its local randomness is smaller than  $\frac{ z_d }{t_d-z_d} r^{(F)}_{d,\star}(\mathbf{t},\mathbf{z})$ bits.

We now give a counterpart to Theorem \ref{th7} for the case $D=1$ with  the following achievability~result.
\begin{thm}\label{th72}
Let $\mathbf{t} \triangleq (t_d )_{d\in\mathcal{D}}\in \llbracket 1 , L \rrbracket^D$ and $\mathbf{z} \triangleq (z_d )_{d\in\mathcal{D}} \in \bigtimes_{d \in {\mathcal{D}}} \llbracket 1 , t_d-1 \rrbracket$. There exists a $\left( \left(2^{r^{(F)}_d}\right)_{d \in \mathcal{D}}, \left(2^{r^{(R)}_d}\right)_{d \in \mathcal{D}},\left(2^{r^{(M)}_{d,l}}\right)_{d \in \mathcal{D},l \in \mathcal{L}},\left(2^{r^{(S)}_l}\right)_{l \in \mathcal{L}}\right)$ private file storage strategy that $(\mathbf{t},\mathbf{z})$-achieves $\left({r^{(F)}_d}\right)_{d \in \mathcal{D}}$ such that, for any $d\in \mathcal{D}$,
\begin{align*}
r^{(F)}_d &= n_d (t_d-z_d),\\
r^{(R)}_d &= n_d z_d, \\
r^{(S)}_{l} &= \textstyle\sum_{d \in \mathcal D} n_d  ,  \forall l\in \mathcal{L} ,\\
r^{(M)}_{d,l} & =n_d ,  \forall l\in \mathcal{L} . 
\end{align*}
\end{thm}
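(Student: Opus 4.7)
The approach is to parallelize the single-user achievability of Theorem \ref{th7}: each User $d \in \mathcal{D}$ independently executes the single-user scheme with parameters $(t_d,z_d,n_d)$, and Server $l \in \mathcal{L}$ simply concatenates the $D$ shares it receives. Concretely, for each $d \in \mathcal{D}$, I would invoke Theorem \ref{th7} with $n := n_d$, $t := t_d$, $z := z_d$ to obtain encoding functions $(h_{d,l},g_{l,d})_{l \in \mathcal{L}}$ and decoders $(f_{\mathcal{A},d})_{\mathcal{A} \subseteq \mathcal{L}}$ meeting $r_d^{(F)} = n_d(t_d-z_d)$, $r_d^{(R)} = n_d z_d$, $r_{d,l}^{(M)} = n_d$, and $|\mathcal{S}_{l,d}| = 2^{n_d}$. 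Setting $\bigtimes_{d \in \mathcal{D}} \mathcal{S}_{l,d} = \{0,1\}^{r_l^{(S)}}$ then yields $r_l^{(S)} = \sum_{d \in \mathcal{D}} n_d$, so every target rate in the statement is matched by construction.

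\textbf{Recoverability.} For each $d \in \mathcal{D}$ and each $\mathcal{A} \subseteq \mathcal{L}$ with $|\mathcal{A}| \geq t_d$, the tuple $(S_{l,d})_{l \in \mathcal{A}}$ is exactly the output of User~$d$'s single-user scheme, so $f_{\mathcal{A},d}(S_{\mathcal{A},d}) = F_d$ and \eqref{eqrel2} follows directly from Theorem \ref{th7}.

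\textbf{Security.} Fix $d \in \mathcal{D}$ and $\mathcal{U} \subseteq \mathcal{L}$ with $|\mathcal{U}| \leq z_d$. Since $(F_i, R_i, K_{i,\mathcal{L}})_{i \in \mathcal{D}}$ are mutually independent and $M_i$ is a deterministic function of $(R_i, K_{i,\mathcal{L}}, F_i)$, the tuples $\bigl(F_i, M_i, K_{i,\mathcal{U}}\bigr)_{i \in \mathcal{D}}$ are mutually independent. Consequently, I would write
\begin{equation*}
I(F_{\mathcal{Z}_d}; M_{\mathcal{D}}, K_{\mathcal{D},\mathcal{U}}) = I(F_{\mathcal{Z}_d}; M_{\mathcal{Z}_d}, K_{\mathcal{Z}_d,\mathcal{U}}) = \sum_{i \in \mathcal{Z}_d} I(F_i; M_i, K_{i,\mathcal{U}}),
\end{equation*}
where the first equality drops the variables indexed by $\mathcal{D} \setminus \mathcal{Z}_d$ (which are jointly independent of everything indexed by $\mathcal{Z}_d$), and the second uses mutual independence across users within $\mathcal{Z}_d$. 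For every $i \in \mathcal{Z}_d$ we have $|\mathcal{U}| \leq z_d \leq z_i$, so the single-user security guarantee from Theorem \ref{th7} yields $I(F_i; M_i, K_{i,\mathcal{U}}) = 0$, thereby establishing \eqref{eqSeca2}.

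\textbf{Main obstacle.} The construction is a straightforward parallelization and recoverability is immediate, so the only delicate point is the verification of \eqref{eqSeca2}: the security condition couples all users through $M_{\mathcal{D}}$ and protects the \emph{set} $F_{\mathcal{Z}_d}$ of files whose security parameter is at least $z_d$, rather than $F_d$ alone. The crux is therefore to justify cleanly the two independence reductions in the display above so that the multi-user leakage decouples into a sum of single-user leakages, each of which is controlled by Theorem \ref{th7}.
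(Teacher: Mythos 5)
Your construction and security argument match the paper's exactly: each user runs a ramp secret sharing scheme on its file followed by a one-time pad with its per-server keys, servers concatenate the $D$ per-user shares, and the independence decomposition you use (first peeling off $\mathcal{D}\setminus\mathcal{Z}_d$, then tensorizing the mutual information across users in $\mathcal{Z}_d$) is precisely what steps $(a)$--$(f)$ of the paper's security chain accomplish, while steps $(g)$--$(m)$ implement the single-user leakage bound that you invoke as a black box. One bookkeeping caveat: in the paper Theorem~\ref{th7} is formally a corollary of Theorem~\ref{th72} (``set $D=1$ in Appendix~\ref{App_th7}''), so citing Theorem~\ref{th7} to prove Theorem~\ref{th72} would be circular as written; you should instead prove the $D=1$ case directly (one-time-pad secrecy plus the ramp secret sharing guarantee of Theorem~\ref{th8}), which is already implicit in your construction and is exactly what the paper's steps $(g)$--$(m)$ do.
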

\begin{proof}
See Appendix \ref{App_th7}.
\end{proof}

Then, similar to Theorem \ref{th6} and Corollary \ref{cor2} for the case $D=1$, we deduce the following results. Theorem \ref{th62} provides a characterization of the quantities introduced in Definition~\ref{def22}, and
Corollary~\ref{cor223} shows that the optimal quantities of Definition~\ref{def22} (defined for each individual user) can be obtained simultaneously by a single private file storage strategy.

\begin{thm} \label{th62}
Let $\mathbf{t} \triangleq (t_d )_{d\in\mathcal{D}}\in \llbracket 1 , L \rrbracket^D$ and $\mathbf{z} \triangleq (z_d )_{d\in\mathcal{D}} \in \bigtimes_{d \in {\mathcal{D}}} \llbracket 1 , t_d-1 \rrbracket$. For any $d \in \mathcal{D}$, we have
\begin{align*}
r^{(F)}_{d, \star} (\mathbf{t},\mathbf{z})&= n_d (t_d-z_d),\\
r^{(R)}_{d,\star} (\mathbf{t},\mathbf{z}) &=  n_d z_d, \\
 r^{(S)}_{d,l,\star}(\mathbf{t},\mathbf{z}) &= \textstyle\sum_{d \in \mathcal D} n_d,  \forall l\in \mathcal{L},\\ 
r^{(M)}_{d,\Sigma,\star}(\mathbf{t},\mathbf{z}) &  = L n_d, \\
\eqref{eqsym2}  \implies  \Big( r^{(M)}_{d,l,\star} (\mathbf{t},\mathbf{z})&  = n_d,  \forall l \in \mathcal{L} \Big).  
\end{align*}
\end{thm}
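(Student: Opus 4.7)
The plan is to combine the impossibility bounds \eqref{th12}--\eqref{th52} with the achievability scheme of Theorem \ref{th72}, matching them term by term. First, I would establish $r^{(F)}_{d,\star}(\mathbf{t},\mathbf{z}) = n_d(t_d - z_d)$: the upper bound is exactly \eqref{th12}, while the lower bound follows because Theorem \ref{th72} exhibits a single private file storage strategy simultaneously achieving $r^{(F)}_d = n_d(t_d - z_d)$ for every $d \in \mathcal{D}$. In particular, this also confirms that $\left(r^{(F)}_{d,\star}(\mathbf{t},\mathbf{z})\right)_{d \in \mathcal{D}} \in \mathcal{C}_F(\mathbf{t},\mathbf{z})$, so the sets over which the remaining infima in Definition \ref{def22} are taken are non-empty.

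Next, I would substitute this value into the right-hand sides of \eqref{th22}--\eqref{th52} to obtain the numerical converse bounds $r^{(R)}_{d,\star} \geq n_d z_d$, $r^{(S)}_{d,l,\star} \geq \sum_{d'\in\mathcal{D}} n_{d'}$, $r^{(M)}_{d,\Sigma,\star} \geq L n_d$, and, under \eqref{eqsym2}, $r^{(M)}_{d,l,\star} \geq n_d$. The scheme in Theorem \ref{th72} attains $r^{(R)}_d = n_d z_d$, $r^{(S)}_l = \sum_{d'\in\mathcal{D}} n_{d'}$, and $r^{(M)}_{d,l} = n_d$ (whence $\sum_{l\in\mathcal{L}} r^{(M)}_{d,l} = L n_d$), all while simultaneously meeting $r^{(F)}_d = r^{(F)}_{d,\star}(\mathbf{t},\mathbf{z})$ for every $d$. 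Hence each infimum in Definition \ref{def22} coincides with its corresponding lower bound, and the four unconditional equalities in Theorem \ref{th62} follow.

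The one subtle point is the conditional claim $\eqref{eqsym2} \implies r^{(M)}_{d,l,\star}(\mathbf{t},\mathbf{z}) = n_d$: since the converse bound \eqref{th42} is valid only under the leakage-symmetry condition, one must check that the achievability scheme of Theorem \ref{th72} itself satisfies \eqref{eqsym2}. Because that construction couples ramp secret sharing with a per-server one-time pad and depends on the server index $l$ only through the key $K_{d,l}$, for any $\mathcal{U}, \mathcal{V} \subseteq \mathcal{L}$ with $|\mathcal{U}| = |\mathcal{V}|$ the joint distributions of $(F_d, M_{d,\mathcal{U}}, K_{d,\mathcal{U}})$ and $(F_d, M_{d,\mathcal{V}}, K_{d,\mathcal{V}})$ coincide, so the mutual informations are equal and \eqref{eqsym2} holds. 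This symmetry verification is the main thing to check; beyond it, the theorem is an immediate sandwich of the converse and achievability bounds.
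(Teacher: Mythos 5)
Your proof is correct and follows essentially the same approach as the paper: each equality in Theorem~\ref{th62} is obtained by sandwiching the converse bound \eqref{th12}--\eqref{th52} against the single achievability scheme of Theorem~\ref{th72}, noting that this one scheme simultaneously attains all the matching rates so the infima in Definition~\ref{def22} are non-empty and tight.

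One remark on your last paragraph, which is a genuine and worthwhile observation the paper leaves implicit: for the conditional equality $r^{(M)}_{d,l,\star}(\mathbf{t},\mathbf{z}) = n_d$ one does need the achievability scheme to itself satisfy \eqref{eqsym2}, and you are right to flag this. However, the justification you give is slightly imprecise: the scheme does \emph{not} depend on $l$ only through $K_{d,l}$, since $M_{d,l} = H_{l,d} \oplus K_{d,l}$ involves the $l$-th share $H_{l,d}$, and distinct shares are in general distinct random variables. The correct way to close the step is to observe that $(M_{d,\mathcal U},K_{d,\mathcal U})$ is in bijection with $(H_{\mathcal U,d},K_{d,\mathcal U})$, that $K_{d,\mathcal U}$ is independent of $(F_d,H_{\mathcal U,d})$, so $I(F_d;M_{d,\mathcal U},K_{d,\mathcal U}) = I(F_d;H_{\mathcal U,d})$, and then invoke the share-symmetry of the polynomial-based ramp constructions of \cite{yamamoto1986secret,blakley1984security} (any $i$ shares leak $\frac{[i-z_d]^+}{t_d-z_d}H(F_d)$ regardless of which $i$), which is a property of those constructions rather than of the abstract definition in Theorem~\ref{th8}. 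With that fix your argument is complete.
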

\begin{proof}
See Appendix \ref{App_th6}.
\end{proof}

\begin{cor}\label{cor223}
Let $\mathbf{t} \triangleq (t_d )_{d\in\mathcal{D}}\in \llbracket 1 , L \rrbracket^D$ and $\mathbf{z} \triangleq (z_d )_{d\in\mathcal{D}} \in \bigtimes_{d \in {\mathcal{D}}} \llbracket 1 , t_d-1 \rrbracket$. There exists a $\left( \left(2^{r^{(F)}_d}\right)_{d \in \mathcal{D}}, \left(2^{r^{(R)}_d}\right)_{d \in \mathcal{D}},\left(2^{r^{(M)}_{d,l}}\right)_{d \in \mathcal{D},l \in \mathcal{L}},\left(2^{r^{(S)}_l}\right)_{l \in \mathcal{L}}\right)$ private file storage strategy that $(\mathbf{t},\mathbf{z})$-achieves $\left({r^{(F)}_d}\right)_{d \in \mathcal{D}}$ such that, for any $d \in \mathcal{D}$, 
\begin{align*}
r^{(F)}_d &= r^{(F)}_{d, \star} (\mathbf{t},\mathbf{z}) , \\
r^{(R)}_d &= r^{(R)}_{d,\star} (\mathbf{t},\mathbf{z}), \\
r^{(S)}_{l} &= r^{(S)}_{d,l,\star}(\mathbf{t},\mathbf{z})   ,  \forall l\in \mathcal{L} ,\\
\textstyle\sum_{l \in\mathcal{L}} r^{(M)}_{d,l} & =r^{(M)}_{d,\Sigma,\star}(\mathbf{t},\mathbf{z})  , \\
r^{(M)}_{d,l} & =r^{(M)}_{d,l,\star}(\mathbf{t},\mathbf{z})  ,  \forall l\in \mathcal{L},\text{ when \eqref{eqsym2} holds.} 
\end{align*}
\end{cor}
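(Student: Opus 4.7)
The plan is to obtain Corollary \ref{cor223} as an immediate consequence of Theorem \ref{th72} and Theorem \ref{th62}, by verifying that the single private file storage strategy constructed in Theorem \ref{th72} simultaneously meets each of the parameter targets characterized in Theorem \ref{th62}. Since both the achievability of the optimal file lengths $\bigl(r^{(F)}_{d,\star}(\mathbf{t},\mathbf{z})\bigr)_{d\in\mathcal{D}}$ and the associated minima of $r^{(R)}_d$, $r^{(S)}_l$, $\sum_{l} r^{(M)}_{d,l}$, and (under \eqref{eqsym2}) $r^{(M)}_{d,l}$ have already been pinned down individually, it suffices to exhibit one strategy whose parameter tuple coincides with all of these optima at once.

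First, I would invoke Theorem \ref{th72} to produce a strategy with parameters $r^{(F)}_d = n_d(t_d-z_d)$, $r^{(R)}_d = n_d z_d$, $r^{(S)}_l = \sum_{d\in\mathcal{D}} n_d$, and $r^{(M)}_{d,l}=n_d$. Then I would invoke Theorem \ref{th62} to identify these exact values, term by term, with $r^{(F)}_{d,\star}(\mathbf{t},\mathbf{z})$, $r^{(R)}_{d,\star}(\mathbf{t},\mathbf{z})$, $r^{(S)}_{d,l,\star}(\mathbf{t},\mathbf{z})$, and $r^{(M)}_{d,l,\star}(\mathbf{t},\mathbf{z})$ (the last under the leakage symmetry condition \eqref{eqsym2}). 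The sum constraint $\sum_{l \in \mathcal{L}} r^{(M)}_{d,l} = L n_d = r^{(M)}_{d,\Sigma,\star}(\mathbf{t},\mathbf{z})$ follows by summing the per-server public communication sizes over $\mathcal{L}$. In particular, this also establishes the previously open question (raised after Definition~\ref{def22}) that $\bigl(r^{(F)}_{d,\star}(\mathbf{t},\mathbf{z})\bigr)_{d\in\mathcal{D}} \in \mathcal{C}_{F}(\mathbf{t},\mathbf{z})$.

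The only mildly delicate point, and arguably the one step that is not entirely cosmetic, is the final claim about $r^{(M)}_{d,l,\star}(\mathbf{t},\mathbf{z})$. This equality requires that the single strategy of Theorem \ref{th72} itself satisfies the leakage symmetry condition \eqref{eqsym2}, so that Theorem \ref{th62}'s characterization of the per-server public communication optimum applies to it. I would check this by inspecting the ramp-secret-sharing-plus-one-time-pad construction of Theorem \ref{th72}: for each user $d$, the shares are permutation-symmetric across servers (every set of $|\mathcal{U}|$ servers sees the same amount of information about $F_d$ by symmetry of the ramp code and independence of the keys $K_{d,\mathcal{L}}$), so $I(F_d; M_{d,\mathcal{U}}, K_{d,\mathcal{U}})$ depends only on $|\mathcal{U}|$, i.e., \eqref{eqsym2} holds for this scheme.

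Putting these observations together yields Corollary \ref{cor223} without any new combinatorial or information-theoretic work; the main obstacle, if any, is merely the bookkeeping needed to confirm that the scheme of Theorem \ref{th72} is symmetric in the sense of \eqref{eqsym2}, which is immediate from its construction.
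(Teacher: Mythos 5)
Your proposal is correct and takes essentially the same route as the paper: Corollary~\ref{cor223} is stated as an immediate consequence of Theorem~\ref{th72} and Theorem~\ref{th62}, without further proof. Your additional observation that the scheme of Theorem~\ref{th72} must itself satisfy \eqref{eqsym2} for the last claim to be meaningful is a useful clarification the paper leaves implicit; it does hold, since $M_{d,l} = H_{l,d}\oplus K_{d,l}$ with $K_{d,\mathcal{L}}$ uniform and independent gives $I(F_d;M_{d,\mathcal{U}},K_{d,\mathcal{U}})=I(F_d;H_{\mathcal{U},d})$, which by the ramp-secret-sharing structure depends only on $|\mathcal{U}|$.
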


\section{Concluding remarks} \label{seconcl}
We considered the problem of storing a file in $L$ servers such that any $t \leq L$ servers can reconstruct the file, and any subset of $z<t$ colluding servers cannot learn any information about the file. Unlike solutions that rely on traditional secret sharing models, we developed a new model that does not make the assumption that individual  and information-theoretically secure channels between the user and each server are available at no cost. Instead, we assume that the user can communicate with the servers over a one-way public channel, and share with each server  a secret key with length $n$, which is meant to quantify the cost of privately storing the file. %
For a given secret-key length $n$ and parameters $t$ and $z$, we established the maximal length of the file that  the user can store. Additionally, we determine in this case the minimum amount of local randomness needed at the user, the minimum amount of public communication between the user and the servers, and the minimum amount of storage space required at the servers. While our model allows a joint design of the creation phase of the shares and the secure distribution phase of the shares, our results prove the optimality of an achievability scheme that separates the creation of the shares using ramp secret sharing schemes and the secure distribution of the shares via one-time pads.
Finally, we discussed an extension of our results to a multi-user setting. 

At least two generalizations of the problem setting can be considered and remain open. In the first generalization, the user and the servers communicate over noisy, instead of noiseless, channels; partial results have been obtained in~\cite{zou2015information} for this setting and constructive coding schemes have been proposed in \cite{chou2020unified,chou2018explicit}. In the second generalization, the user and the servers have access to arbitrarily correlated random variables instead of independent and uniformly distributed secret keys; partial results have been obtained in this direction in \cite{rana2021information,chou2021distributed,chou2018secret,csiszar2010capacity} and constructive coding schemes have been proposed~in~\cite{sultana2021}.

\appendices

\section{Proof of Equation \eqref{th12}} \label{App_th1}
Consider an arbitrary  $\left( \left(2^{r^{(F)}_d}\right)_{d \in \mathcal{D}}, \left(2^{r^{(R)}_d}\right)_{d \in \mathcal{D}},\right.$ $\left.\left(2^{r^{(M)}_{d,l}}\right)_{d \in \mathcal{D},l \in \mathcal{L}},\left(2^{r^{(S)}_l}\right)_{l \in \mathcal{L}}\right)$ private file storage strategy that $(\mathbf{t},\mathbf{z})$-achieves $\left({r^{(F)}_d}\right)_{d \in \mathcal{D}}$. Fix $d\in \mathcal{D}$. Define $\{d\}^c \triangleq \mathcal{D} \backslash \{d \}$. In the following lemma, using Definition~\ref{definition_modelg2} and the reliability and security constraints \eqref{eqrel2} and~\eqref{eqSeca2} from Definition~\ref{def2b},  we first give an upper bound on  $r^{(F)}_{d} $ that only depends on the secret keys.
\begin{lem} \label{lem111}
Let $\mathcal{A} ,\mathcal{U} \subseteq \mathcal{L}$ such that $|\mathcal{A}| = t_d$, $|\mathcal{U}| = z_d$, and $\mathcal{U} \subset \mathcal{A}$. We have
\begin{align*}
r^{(F)}_{d} \leq  I( K_{\mathcal{D} ,\mathcal{A}} , K_{\{d\}^c, \mathcal{L}};K_{d, \mathcal{L}} | K_{\mathcal{D}, \mathcal{U}} ).
\end{align*}
\end{lem}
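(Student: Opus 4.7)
The plan is to build a chain, starting from $r^{(F)}_d=H(F_d)$, that successively invokes the security constraint \eqref{eqSeca2}, the reliability constraint \eqref{eqrel2}, and the joint independence of the $D\times L$ keys. This is the natural multi-user adaptation of the single-user strategy mentioned under Theorem~\ref{th1}, where the file length is upper bounded by a conditional mutual information among keys only; here I must additionally carry along the other users' messages and keys $K_{\{d\}^c,\mathcal{L}}$, since these are visible to any coalition and may a priori interact with User~$d$'s protocol.

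First I would use the independence of $F_d$ from all keys to write, for free, $H(F_d)=H(F_d\mid K_{\mathcal{D},\mathcal{U}})$. Since $d\in\mathcal{Z}_d$ (as $z_d\geq z_d$), $F_d$ is a component of $F_{\mathcal{Z}_d}$, so \eqref{eqSeca2} gives $I(F_d;M_{\mathcal{D}},K_{\mathcal{D},\mathcal{U}})=0$ and hence $H(F_d)=H(F_d\mid M_{\mathcal{D}},K_{\mathcal{D},\mathcal{U}})$. Next, because $|\mathcal{A}|=t_d$ and, by Definition~\ref{definition_modelg2}, $\widehat{F}_d(\mathcal{A})$ is a deterministic function of $(M_{d,\mathcal{A}},K_{d,\mathcal{A}})$, the reliability condition \eqref{eqrel2} forces $H(F_d\mid M_{d,\mathcal{A}},K_{d,\mathcal{A}})=0$, and since $M_{d,\mathcal{A}}\subseteq M_{\mathcal{D}}$ this upgrades to $H(F_d\mid M_{\mathcal{D}},K_{\mathcal{D},\mathcal{U}},K_{d,\mathcal{A}})=0$. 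Subtracting this zero term yields
\[ r^{(F)}_d = I(F_d;K_{d,\mathcal{A}}\mid M_{\mathcal{D}},K_{\mathcal{D},\mathcal{U}}), \]
which I would then bound above by $H(K_{d,\mathcal{A}}\mid M_{\mathcal{D}},K_{\mathcal{D},\mathcal{U}})\leq H(K_{d,\mathcal{A}}\mid K_{\mathcal{D},\mathcal{U}})$, a quantity depending only on the keys.

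The main (though mild) obstacle is to massage this last bound into the precise mutual-information form stated in the lemma. For this I would exploit the joint independence of the $D\times L$ keys: $K_{\{d\}^c,\mathcal{L}}$ is independent of all of User~$d$'s keys, so adjoining it on the left of a mutual information whose right argument is $K_{d,\mathcal{L}}$ does not change the value; and since $K_{d,\mathcal{A}}\subseteq K_{d,\mathcal{L}}$, replacing $K_{d,\mathcal{A}}$ by $K_{d,\mathcal{L}}$ on the right does not change the entropy reduction. Together with the identity $K_{\mathcal{D},\mathcal{A}}\cup K_{\{d\}^c,\mathcal{L}}=K_{d,\mathcal{A}}\cup K_{\{d\}^c,\mathcal{L}}$ (which follows from $K_{\{d\}^c,\mathcal{A}}\subseteq K_{\{d\}^c,\mathcal{L}}$), these observations rewrite the bound as
\[ H(K_{d,\mathcal{A}}\mid K_{\mathcal{D},\mathcal{U}}) = I(K_{\mathcal{D},\mathcal{A}},K_{\{d\}^c,\mathcal{L}};K_{d,\mathcal{L}}\mid K_{\mathcal{D},\mathcal{U}}), \]
completing the proof. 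I would deliberately keep the bound in this symmetric form, rather than immediately collapsing it to $n_d(t_d-z_d)$, because that is the template that will be combined with key independence in the remainder of Appendix~\ref{App_th1} to yield~\eqref{th12}.
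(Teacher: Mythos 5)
Your proposal is correct and arrives at the stated bound, but it takes a genuinely shorter route than the paper's own proof. The paper's chain, after reaching $I(K_{\mathcal{D},\mathcal{A}};F_d\mid M_{\mathcal{D}},K_{\mathcal{D},\mathcal{U}})$, explicitly replaces each message block with the tuple $(F,K,R)$ it is a function of (steps $(h)$--$(i)$), then strips away $R_d$, $R_{\{d\}^c}$, $F_d$, $F_{\{d\}^c}$ one at a time via the chain rule and independence (steps $(j)$--$(l)$), finally landing on $I(K_{\mathcal{D},\mathcal{A}},K_{\{d\}^c,\mathcal{L}};K_{d,\mathcal{L}}\mid K_{\mathcal{D},\mathcal{U}})$. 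You bypass that entire expansion: once you have $r^{(F)}_d = I(F_d;K_{d,\mathcal{A}}\mid M_{\mathcal{D}},K_{\mathcal{D},\mathcal{U}})$, you bound the mutual information by the entropy of its second argument and drop $M_{\mathcal{D}}$ from the conditioning using conditioning-reduces-entropy, obtaining $H(K_{d,\mathcal{A}}\mid K_{\mathcal{D},\mathcal{U}})$ directly, and then verify by key independence that this \emph{equals} the stated mutual information. The rewrite $H(K_{d,\mathcal{A}}\mid K_{\mathcal{D},\mathcal{U}}) = I(K_{\mathcal{D},\mathcal{A}},K_{\{d\}^c,\mathcal{L}};K_{d,\mathcal{L}}\mid K_{\mathcal{D},\mathcal{U}})$ checks out: the second mutual-information summand $I(K_{\{d\}^c,\mathcal{L}};K_{d,\mathcal{L}}\mid K_{d,\mathcal{A}},K_{\mathcal{D},\mathcal{U}})$ vanishes by cross-user key independence, and the first collapses to $H(K_{d,\mathcal{A}}\mid K_{\mathcal{D},\mathcal{U}})$ since $K_{d,\mathcal{A}}\subseteq K_{d,\mathcal{L}}$. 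The paper's version makes the dependence of $M$ on the underlying sources explicit, which documents exactly where the independence of the randomization sequences $R_d$, $R_{\{d\}^c}$ and of the other files $F_{\{d\}^c}$ enters; your version trades that transparency for brevity by absorbing all of it into a single conditioning-reduces-entropy step, which is perfectly sound because those quantities only appear in the conditioning and never in the two arguments of the mutual information you are bounding.
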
  
\begin{proof}
We have
\begin{align*}
&r^{(F)}_{d} \\
& \stackrel{(a)}= H(F_{d})\\
& = H(F_{d} |M_{\mathcal{D}} ,K_{\mathcal{D}, \mathcal{U}} )  + I(F_{d} ;M_{\mathcal{D}}, K_{\mathcal{D}, \mathcal{U}} ) \\
& \stackrel{(b)} \leq H(F_{d} |M_{\mathcal{D}}, K_{\mathcal{D}, \mathcal{U}} )  + I(F_{\mathcal{Z}_d} ;M_{\mathcal{D}} ,K_{\mathcal{D}, \mathcal{U}} )\\
& \stackrel{(c)} =  H(F_{d} |M_{\mathcal{D}}, K_{\mathcal{D}, \mathcal{U}} )\\
& = I(\widehat{F}_{d} (\mathcal{A});F_{d} |M_{\mathcal{D}}, K_{\mathcal{D}, \mathcal{U}} ) + H(F_{d} |M_{\mathcal{D}}, K_{\mathcal{D}, \mathcal{U}},\widehat{F}_{d} (\mathcal{A}))\\
& \stackrel{(d)}\leq I(\widehat{F}_{d} (\mathcal{A});F_{d} |M_{\mathcal{D}}, K_{\mathcal{D}, \mathcal{U}} ) + H(F_{d} |\widehat{F}_{d} (\mathcal{A}))\\
& \stackrel{(e)}= I(\widehat{F}_{d} (\mathcal{A});F_{d} |M_{\mathcal{D}}, K_{\mathcal{D}, \mathcal{U}} )\\
& \stackrel{(f)}\leq  I( M_{\mathcal{D}}, K_{\mathcal{D} ,\mathcal{A}} ;F_{d} |M_{\mathcal{D}}, K_{\mathcal{D}, \mathcal{U}} )\\
& \stackrel{(g)} = I( K_{\mathcal{D} ,\mathcal{A}} ;F_{d} |M_{\mathcal{D}}, K_{\mathcal{D}, \mathcal{U}} )\\
& \stackrel{(h)}\leq  I( K_{\mathcal{D} ,\mathcal{A}}, M_{\{d\}^c};F_{d}, M_{d} | K_{\mathcal{D}, \mathcal{U}} )\\
& \stackrel{(i)}\leq  I( K_{\mathcal{D} ,\mathcal{A}}, F_{\{d\}^c}, K_{\{d\}^c, \mathcal{L}}, R_{\{d\}^c};K_{d, \mathcal{L}}, F_{d}, R_{d} | K_{\mathcal{D}, \mathcal{U}} )\\
& \stackrel{(j)}=  I( K_{\mathcal{D} ,\mathcal{A}}, F_{\{d\}^c}, K_{\{d\}^c, \mathcal{L}} ;K_{d, \mathcal{L}}, F_{d}| K_{\mathcal{D}, \mathcal{U}} ) \\
& \phantom{--}+  I( K_{\mathcal{D} ,\mathcal{A}}, F_{\{d\}^c}, K_{\{d\}^c, \mathcal{L}}; R_{d} | K_{\mathcal{D}, \mathcal{U}} ,K_{d, \mathcal{L}}, F_{d})\\
& \phantom{--}+ I(  R_{\{d\}^c};K_{d, \mathcal{L}}, F_{d}, R_{d} | K_{\mathcal{D}, \mathcal{U}},K_{\mathcal{D} ,\mathcal{A}}, F_{\{d\}^c}, K_{\{d\}^c, \mathcal{L}} ) \\
& \stackrel{(k)}=  I( K_{\mathcal{D} ,\mathcal{A}}, F_{\{d\}^c}, K_{\{d\}^c, \mathcal{L}} ;K_{d, \mathcal{L}}, F_{d} | K_{\mathcal{D}, \mathcal{U}} )\\
& \stackrel{(l)}\leq  I( K_{\mathcal{D} ,\mathcal{A}} , K_{\{d\}^c, \mathcal{L}};K_{d, \mathcal{L}} | K_{\mathcal{D}, \mathcal{U}} ),
\end{align*}
where $(a)$ holds by uniformity of the files $(F_{d})_{d\in \mathcal D}$, $(b)$~holds by the chain rule and non-negativity of the mutual information, $(c)$~holds by \eqref{eqSeca2}  because $|\mathcal{U}|=z_d$, $(d)$ holds because conditioning reduces entropy, $(e)$ holds by \eqref{eqrel2} because $|\mathcal{A}|=t_d$, $(f)$ holds because $\widehat{F}_{d} (\mathcal{A})$ is a function of $S_{\mathcal{A},d}$, which is itself a function of $( M_{\mathcal{D}}, K_{\mathcal{D} ,\mathcal{A}})$, $(g)$~holds because $I( M_{\mathcal{D}};F_{d} |M_{\mathcal{D}}, K_{\mathcal{D}, \mathcal{U}}, K_{\mathcal{D} ,\mathcal{A}}  )=0$, $(h)$~holds by the chain rule applied twice and  non-negativity of the mutual information, $(i)$ holds because $M_{\{d\}^c}$ is a function of $(F_{\{d\}^c}, K_{\{d\}^c, \mathcal{L}}, R_{\{d\}^c})$ (with the notation $F_{\mathcal{S}} \triangleq (F_d)_{d \in \mathcal{S}}$ and $R_{\mathcal{S}} \triangleq (R_d)_{d \in \mathcal{S}}$ for any $\mathcal{S} \subseteq \mathcal{D}$), and $M_{d}$ is a function of $(F_{d}, K_{d, \mathcal{L}}, R_{d})$, $(j)$ holds by the chain rule applied twice, $(k)$ holds by independence between $ R_{\{d\}^c}$ and $( K_{\mathcal{D} ,\mathcal{A}}, F_{\{d\}^c}, K_{\{d\}^c, \mathcal{L}}, K_{d, \mathcal{L}}, F_{d}, R_{d} , K_{\mathcal{D}, \mathcal{U}})$ and by independence between $R_{d}$ and $( K_{\mathcal{D} ,\mathcal{A}}, F_{\{d\}^c}, K_{\{d\}^c, \mathcal{L}}, K_{d, \mathcal{L}}, F_{d}, K_{\mathcal{D}, \mathcal{U}})$, $(l)$ holds by applying twice the chain rule,  independence between $F_{\{d\}^c}$ and $( K_{\mathcal{D} ,\mathcal{A}},  K_{\{d\}^c, \mathcal{L}} ,K_{d, \mathcal{L}}, F_{d} , K_{\mathcal{D}, \mathcal{U}})$, and independence between $F_{d}$ and $( K_{\mathcal{D} ,\mathcal{A}},  K_{\{d\}^c, \mathcal{L}} ,K_{d, \mathcal{L}} , K_{\mathcal{D}, \mathcal{U}})$ similar to $(j)$ and $(k)$.
\end{proof}
Next, we simplify the upper bound of Lemma \ref{lem111} using the independence of the secret keys as follows.
\begin{lem} \label{lem222}
Let $\mathcal{A} ,\mathcal{U} \subseteq \mathcal{L}$ such that $|\mathcal{A}| = t_d$, $|\mathcal{U}| = z_d$, and $\mathcal{U} \subset \mathcal{A}$. We have
\begin{align*}
 I( K_{\mathcal{D} ,\mathcal{A}} , K_{\{d\}^c, \mathcal{L}};K_{d, \mathcal{L}} | K_{\mathcal{D}, \mathcal{U}} ) \leq n_d (t_d-z_d).
\end{align*}
\end{lem}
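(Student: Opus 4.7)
The plan is to use the joint independence of the keys $(K_{d,l})_{d\in\mathcal{D},l\in\mathcal{L}}$ to peel off everything except the contribution from user $d$'s own keys on the coordinates $\mathcal{A} \setminus \mathcal{U}$, and then bound that contribution by its uniform entropy. The key simplification I would exploit is that $K_{\{d\}^c,\mathcal{L}}$ is independent of $K_{d,\mathcal{L}}$, so it contributes nothing to the mutual information with $K_{d,\mathcal{L}}$, and also that $K_{d,\mathcal{A}} \subseteq K_{d,\mathcal{L}}$ (because $\mathcal{A}\subseteq \mathcal{L}$), which collapses the mutual information into a conditional entropy.

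Concretely, I would first rewrite the first argument of the mutual information by absorbing $K_{\{d\}^c,\mathcal{A}}$ into $K_{\{d\}^c,\mathcal{L}}$, obtaining
\begin{align*}
I(K_{\mathcal{D},\mathcal{A}},K_{\{d\}^c,\mathcal{L}};K_{d,\mathcal{L}}\mid K_{\mathcal{D},\mathcal{U}})
= I(K_{d,\mathcal{A}},K_{\{d\}^c,\mathcal{L}};K_{d,\mathcal{L}}\mid K_{d,\mathcal{U}},K_{\{d\}^c,\mathcal{U}}).
\end{align*}
Next, by the chain rule,
\begin{align*}
I(K_{d,\mathcal{A}},K_{\{d\}^c,\mathcal{L}};K_{d,\mathcal{L}}\mid K_{d,\mathcal{U}},K_{\{d\}^c,\mathcal{U}})
&= I(K_{\{d\}^c,\mathcal{L}};K_{d,\mathcal{L}}\mid K_{d,\mathcal{U}},K_{\{d\}^c,\mathcal{U}}) \\
&\quad + I(K_{d,\mathcal{A}};K_{d,\mathcal{L}}\mid K_{d,\mathcal{U}},K_{\{d\}^c,\mathcal{L}}).
\end{align*}
The first term vanishes because $K_{\{d\}^c,\mathcal{L}}$ and $K_{d,\mathcal{L}}$ are independent even conditionally on $K_{d,\mathcal{U}}$ (which is a subvector of $K_{d,\mathcal{L}}$). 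The second term reduces to $I(K_{d,\mathcal{A}};K_{d,\mathcal{L}}\mid K_{d,\mathcal{U}})$ by the same cross-user independence, since conditioning on $K_{\{d\}^c,\mathcal{L}}$ does not alter the joint distribution of $(K_{d,\mathcal{A}},K_{d,\mathcal{L}},K_{d,\mathcal{U}})$.

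Finally, since $K_{d,\mathcal{A}}$ is a deterministic function of $K_{d,\mathcal{L}}$, we have
\begin{align*}
I(K_{d,\mathcal{A}};K_{d,\mathcal{L}}\mid K_{d,\mathcal{U}}) = H(K_{d,\mathcal{A}}\mid K_{d,\mathcal{U}}) = H(K_{d,\mathcal{A}\setminus \mathcal{U}}) = n_d(t_d - z_d),
\end{align*}
where the second equality uses $\mathcal{U}\subset\mathcal{A}$ and the mutual independence of $(K_{d,l})_{l\in\mathcal{L}}$, and the third uses $|\mathcal{A}\setminus\mathcal{U}|=t_d-z_d$ together with the uniform distribution of each $K_{d,l}$ over $\{0,1\}^{n_d}$. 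I do not anticipate a real obstacle here — this is a bookkeeping argument that hinges entirely on the independence structure of the keys; the only point to be careful about is recognizing that $K_{\mathcal{D},\mathcal{A}}$ and $K_{\mathcal{D},\mathcal{U}}$ each decompose along the $\{d\}$ vs.\ $\{d\}^c$ split so that the cross-user pieces can be removed via independence.
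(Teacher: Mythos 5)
Your proof is correct and uses essentially the same strategy as the paper: decompose via the chain rule, eliminate the cross-user contributions using joint key independence, and collapse $I(K_{d,\mathcal{A}};K_{d,\mathcal{L}}\mid K_{d,\mathcal{U}})$ to $H(K_{d,\mathcal{A}\setminus\mathcal{U}})=n_d(t_d-z_d)$. The only notable difference is stylistic: by applying conditional independence to \emph{drop} conditioning variables exactly, you obtain equality throughout, whereas the paper instead \emph{adds} terms via the chain rule and then uses non-negativity of mutual information, yielding the same bound through a chain of inequalities.
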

\begin{proof}
We have
\begin{align*}
& I( K_{\mathcal{D} ,\mathcal{A}} , K_{\{d\}^c, \mathcal{L}};K_{d, \mathcal{L}} | K_{\mathcal{D}, \mathcal{U}} )\\
& = I( K_{\mathcal{D} ,\mathcal{A}} ;K_{d, \mathcal{L}} | K_{\mathcal{D}, \mathcal{U}} ) +  I( K_{\{d\}^c, \mathcal{L}};K_{d, \mathcal{L}} |  K_{\mathcal{D} ,\mathcal{A}}  K_{\mathcal{D}, \mathcal{U}} )\\
& \stackrel{(a)}\leq I( K_{\mathcal{D} ,\mathcal{A}} ;K_{d, \mathcal{L}} | K_{\mathcal{D}, \mathcal{U}} ) \\
& \phantom{--}+  I( K_{\{d\}^c, \mathcal{L}} ,K_{\{d\}^c ,\mathcal{A}},K_{\{d\}^c ,\mathcal{U}};K_{d, \mathcal{L}}, K_{d ,\mathcal{A}} , K_{d ,\mathcal{U}}    )\\
&  = I( K_{\mathcal{D} ,\mathcal{A}} ;K_{d, \mathcal{L}} | K_{\mathcal{D}, \mathcal{U}} ) +  I( K_{\{d\}^c, \mathcal{L}} ;K_{d, \mathcal{L}}    )\\
& \stackrel{(b)} = I( K_{\mathcal{D} ,\mathcal{A}} ;K_{d, \mathcal{L}} | K_{\mathcal{D}, \mathcal{U}} ) \\
& \stackrel{(c)} \leq  I( K_{\mathcal{D} ,\mathcal{A}},K_{\{d\}^c, \mathcal{U}};K_{d, \mathcal{L}} | K_{d, \mathcal{U}} ) \\
& =  I( K_{d ,\mathcal{A}};K_{d, \mathcal{L}} | K_{d, \mathcal{U}} ) \\
& \phantom{--}+ I(  K_{\{d\}^c ,\mathcal{A}} ,K_{\{d\}^c, \mathcal{U}};K_{d, \mathcal{L}} | K_{d, \mathcal{U}} ,K_{d,\mathcal{A}} ) \\
& \leq  I( K_{d ,\mathcal{A}};K_{d, \mathcal{L}} | K_{d, \mathcal{U}} ) \\
& \phantom{--}+ I(  K_{\{d\}^c ,\mathcal{A}} ,K_{\{d\}^c, \mathcal{U}};K_{d, \mathcal{L}} , K_{d, \mathcal{U}} ,K_{d,\mathcal{A}} ) \\
& \stackrel{(d)}= I( K_{d ,\mathcal{A}} ;K_{d, \mathcal{L}} | K_{d, \mathcal{U}} ) \\
& \stackrel{(e)}= H( K_{d ,\mathcal{A}} | K_{d, \mathcal{U}} ) \\
& \stackrel{(f)}= H( K_{d ,\mathcal{A} \backslash \mathcal{U}} ) \\
& \stackrel{(g)}= n_d (t_d-z_d), 
\end{align*}
where  $(a)$ holds by the chain rule applied twice and  non-negativity of the mutual information, $(b)$ holds by independence between the keys $K_{\{d\}^c, \mathcal{L}} $ and $K_{d, \mathcal{L}} $, $(c)$ holds by the chain rule  and  non-negativity of the mutual information, $(d)$ holds by independence between $(K_{\{d\}^c ,\mathcal{A}} ,K_{\{d\}^c, \mathcal{U}})$ and $(K_{d, \mathcal{L}} , K_{d, \mathcal{U}} ,K_{d,\mathcal{A}} )$, $(e)$ holds because $\mathcal{A} \subseteq \mathcal{L}$, $(f)$ holds because $\mathcal{U} \subset \mathcal{A}$, $(g)$ holds because the keys $K_{d,l}$, $l \in \mathcal{A} \backslash \mathcal{U}$, are independent and each uniformly distributed over $\{ 0,1\}^{n_d}$ and $|\mathcal{A} \backslash \mathcal{U}| = t_d-z_d$.
\end{proof}
Next, by combining Lemmas \ref{lem111} and \ref{lem222}, we have
\begin{align*}
r^{(F)}_{d} \leq n_d (t_d-z_d). \numberthis \label{eqconvfilerate}
\end{align*}
Finally, note that \eqref{eqconvfilerate} is valid for any private file storage strategy that $(\mathbf{t},\mathbf{z})$-achieves $\left({r^{(F)}_d}\right)_{d \in \mathcal{D}} \in \mathcal{C}_{F}(\mathbf{t},\mathbf{z})$, 
so, in particular, \eqref{eqconvfilerate} is valid for a  file storage strategy that $(\mathbf{t},\mathbf{z})$-achieves $\left(\tilde{r}^{(F)}_{d}\right)_{d \in \mathcal{D}} $, where $\left({\tilde{r}^{(F)}_{d}}\right)_{d \in \mathcal{D}}  \in \mathcal{C}_{F}(\mathbf{t},\mathbf{z})$ is such that $\tilde{r}^{(F)}_{d} = {r}^{(F)}_{d,\star}(\mathbf{t},\mathbf{z})$.

\section{Proof of Equation \eqref{th22}} \label{App_th2}
Server $l \in\mathcal{L}$ must store the keys $K_{\mathcal{D},l}$ at the beginning of the protocol. Hence, for any $d \in \mathcal D$, $l\in \mathcal{L}$, we must have 
\begin{align*}
r^{(S)}_{d,l,\star}(\mathbf{t},\mathbf{z}) & \geq |K_{\mathcal{D},l}| \\
& =  \sum_{d \in \mathcal D} |K_{d,l}| \\
& =  \sum_{d \in \mathcal D} n_d.
\end{align*}
 
\section{Proof of Equation \eqref{th32}} \label{App_th3}

Consider an arbitrary  $\left( \left(2^{r^{(F)}_d}\right)_{d \in \mathcal{D}}, \left(2^{r^{(R)}_d}\right)_{d \in \mathcal{D}}, \right.$ $\left. \left(2^{r^{(M)}_{d,l}}\right)_{d \in \mathcal{D},l \in \mathcal{L}},\left(2^{r^{(S)}_l}\right)_{l \in \mathcal{L}}\right)$ private file storage strategy that $(\mathbf{t},\mathbf{z})$-achieves $\left({r^{(F)}_d}\right)_{d \in \mathcal{D}}$. Fix $d\in\mathcal{D}$. In the following lemma, using Definition~\ref{definition_modelg2} and the reliability and security constraints \eqref{eqrel2} and~\eqref{eqSeca2} from Definition~\ref{def2b},  we first give a lower bound on the sum of the entropy of the message $(M_{d,l})_{l \in \mathcal{S}}$ for sets $ \mathcal{S} \subset \mathcal{L}$ with cardinality $|\mathcal{S}|=t_d-z_d$.

\begin{lem} \label{lem11}
For $\mathcal{T} \subseteq \mathcal{L}$ and $\mathcal{S} \subseteq \mathcal{L} \backslash \mathcal{T} $ such that $|\mathcal{T}|=z_d$ and $|\mathcal{S}|=t_d-z_d$, we have
 \begin{align*}
   \sum_{l \in \mathcal{S}} H(M_{d,l})   \geq   H( F_{d} )  .
 \end{align*}
\end{lem}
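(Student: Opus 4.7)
The plan is to first apply subadditivity of entropy to reduce the sum to a single joint entropy, then insert a carefully chosen conditioning so that the reliability constraint \eqref{eqrel2} and the security constraint \eqref{eqSeca2} can both be brought to bear on the same expression.

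Concretely, I would argue first that $\sum_{l \in \mathcal{S}} H(M_{d,l}) \geq H(M_{d,\mathcal{S}})$ by subadditivity, and then, using only the fact that entropy is non-negative and conditioning cannot increase entropy, lower bound $H(M_{d,\mathcal{S}}) \geq I(F_d; M_{d,\mathcal{S}} \mid M_{d,\mathcal{T}}, K_{d,\mathcal{T} \cup \mathcal{S}})$. Expanding this conditional mutual information gives
\begin{align*}
I(F_d; M_{d,\mathcal{S}} \mid M_{d,\mathcal{T}}, K_{d,\mathcal{T} \cup \mathcal{S}})
&= H(F_d \mid M_{d,\mathcal{T}}, K_{d,\mathcal{T} \cup \mathcal{S}}) \\
&\quad - H(F_d \mid M_{d,\mathcal{T} \cup \mathcal{S}}, K_{d,\mathcal{T} \cup \mathcal{S}}).
\end{align*}
The plan is then to show that the first term on the right-hand side equals $H(F_d)$ and that the second term equals zero.

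The second term vanishes by reliability: since $|\mathcal{T} \cup \mathcal{S}| = z_d + (t_d - z_d) = t_d$, the decoder $f_{\mathcal{T}\cup\mathcal{S},d}$ together with the server encodings $g_{l,d}$ recovers $F_d$ from $S_{\mathcal{T}\cup\mathcal{S},d}$, which itself is a function of $(M_{d,\mathcal{T}\cup\mathcal{S}}, K_{d,\mathcal{T}\cup\mathcal{S}})$. So \eqref{eqrel2} yields $H(F_d \mid M_{d,\mathcal{T}\cup\mathcal{S}}, K_{d,\mathcal{T}\cup\mathcal{S}}) = 0$.

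The delicate step, which I expect to be the main obstacle, is showing the first term equals $H(F_d)$. The issue is that $|\mathcal{T}\cup\mathcal{S}| = t_d > z_d$, so the security constraint \eqref{eqSeca2} cannot be applied directly to the conditioning set $K_{d,\mathcal{T}\cup\mathcal{S}}$. My plan is to first peel off $K_{d,\mathcal{S}}$ using the fact that $M_{d,\mathcal{T}}$ is a function only of $(R_d, K_{d,\mathcal{T}}, F_d)$ together with the joint independence of the keys and the files, which implies $K_{d,\mathcal{S}}$ is independent of $(F_d, M_{d,\mathcal{T}}, K_{d,\mathcal{T}})$; this yields $H(F_d \mid M_{d,\mathcal{T}}, K_{d,\mathcal{T}\cup\mathcal{S}}) = H(F_d \mid M_{d,\mathcal{T}}, K_{d,\mathcal{T}})$. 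Then, since $|\mathcal{T}| = z_d$ and $d \in \mathcal{Z}_d$, the security constraint \eqref{eqSeca2} with $\mathcal{U} = \mathcal{T}$ and the chain rule give $I(F_d; M_{\mathcal{D}}, K_{\mathcal{D},\mathcal{T}}) = 0$, whence $I(F_d; M_{d,\mathcal{T}}, K_{d,\mathcal{T}}) = 0$ by data processing, so $H(F_d \mid M_{d,\mathcal{T}}, K_{d,\mathcal{T}}) = H(F_d)$. Combining the four steps yields the claimed bound.
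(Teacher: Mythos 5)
Your proposal is correct. The main structural difference from the paper's proof is the placement of the keys $K_{d,\mathcal{S}}$. The paper adds $\sum_{l \in \mathcal{S}} H(K_{d,l})$ to both sides, bundles each $K_{d,l}$ with $M_{d,l}$ to get $\sum_l H(M_{d,l},K_{d,l}) \geq H(M_{d,\mathcal{S}},K_{d,\mathcal{S}} \mid M_{d,\mathcal{T}},K_{d,\mathcal{T}})$, splits off $I(M_{d,\mathcal{S}},K_{d,\mathcal{S}};F_d \mid M_{d,\mathcal{T}},K_{d,\mathcal{T}}) \geq H(F_d)$, and then still has to show that the residual entropy $H(M_{d,\mathcal{S}},K_{d,\mathcal{S}} \mid F_d,M_{d,\mathcal{T}},K_{d,\mathcal{T}})$ is at least $\sum_l H(K_{d,l})$ to cancel the added term. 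You avoid the cancellation bookkeeping entirely by putting $K_{d,\mathcal{S}}$ directly in the conditioning set: $H(M_{d,\mathcal{S}}) \geq I(F_d;M_{d,\mathcal{S}} \mid M_{d,\mathcal{T}},K_{d,\mathcal{T}\cup\mathcal{S}})$, which expands to $H(F_d \mid M_{d,\mathcal{T}},K_{d,\mathcal{T}\cup\mathcal{S}}) - H(F_d \mid M_{d,\mathcal{T}\cup\mathcal{S}},K_{d,\mathcal{T}\cup\mathcal{S}})$. Reliability kills the second term directly, and for the first term the independence of $K_{d,\mathcal{S}}$ from $(R_d,K_{d,\mathcal{T}},F_d)$---hence from $(F_d,M_{d,\mathcal{T}},K_{d,\mathcal{T}})$---lets you strip $K_{d,\mathcal{S}}$ from the conditioning before applying security at size $z_d$. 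The underlying facts invoked (reliability at $t_d$ servers, security at $z_d$ servers, joint independence of keys, files, and local randomness) are the same, but your decomposition is more economical: it replaces the paper's three-step residual-entropy argument with a single independence observation.
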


\begin{proof}
We have 
 \begin{align*}
   &\sum_{l \in \mathcal{S}} H(M_{d,l})  + \sum_{l \in \mathcal{S}} H( K_{d,l})\\
  & \stackrel{(a)} \geq \sum_{l \in \mathcal{S}} H(M_{d,l}| K_{d,l})  + \sum_{l \in \mathcal{S}} H( K_{d,l})\\
 & = \sum_{l \in \mathcal{S}} H(M_{d,l},K_{d,l}) \\
 &\stackrel{(b)}  \geq   H(M_{d,\mathcal{S}},K_{d,\mathcal{S}})\\
 & \stackrel{(c)}\geq H(M_{d,\mathcal{S}},K_{d,\mathcal{S}} | M_{d,\mathcal{T}},K_{d,\mathcal{T}})  
 \\
 & =    I(M_{d,\mathcal{S}},K_{d,\mathcal{S}}; F_{d} | M_{d,\mathcal{T}},K_{d,\mathcal{T}}) \\
& \phantom{--}+ H(M_{d,\mathcal{S}},K_{d,\mathcal{S}} | F_{d} ,M_{d,\mathcal{T}},K_{d,\mathcal{T}}) \\
 & =   H( F_{d} | M_{d,\mathcal{T}},K_{d,\mathcal{T}})- H(F_{d} |M_{d,\mathcal{S}},K_{d,\mathcal{S}}, M_{d,\mathcal{T}},K_{d,\mathcal{T}})  \\
& \phantom{--}+ H(M_{d,\mathcal{S}},K_{d,\mathcal{S}} | F_{d} ,M_{d,\mathcal{T}},K_{d,\mathcal{T}}) \\
 & \stackrel{(d)} \geq  H( F_{d} | M_{d,\mathcal{T}},K_{d,\mathcal{T}})- H(F_{d} |S_{\mathcal{S} \cup \mathcal T,d} )  \\
& \phantom{--}+ H(M_{d,\mathcal{S}},K_{d,\mathcal{S}} | F_{d} ,M_{d,\mathcal{T}},K_{d,\mathcal{T}}) \\
  & \stackrel{(e)}\geq  H( F_{d} | M_{d,\mathcal{T}},K_{d,\mathcal{T}})- H(F_{d} |\widehat F_d(\mathcal{S} \cup \mathcal T) )  \\
& \phantom{--}+ H(M_{d,\mathcal{S}},K_{d,\mathcal{S}} | F_{d} ,M_{d,\mathcal{T}},K_{d,\mathcal{T}}) \\
    & = H(F_d) - I( F_{d} ; M_{d,\mathcal{T}},K_{d,\mathcal{T}})- H(F_{d} |\widehat F_d(\mathcal{S} \cup \mathcal T) )  \\
& \phantom{--}+ H(M_{d,\mathcal{S}},K_{d,\mathcal{S}} | F_{d} ,M_{d,\mathcal{T}},K_{d,\mathcal{T}}) \\
        & \stackrel{(f)}\geq H(F_d) - I( F_{\mathcal{Z}_d} ; M_{\mathcal{D}},K_{\mathcal{D},\mathcal{T}})- H(F_{d} |\widehat F_d(\mathcal{S} \cup \mathcal T) )  \\
& \phantom{--}+ H(M_{d,\mathcal{S}},K_{d,\mathcal{S}} | F_{d} ,M_{d,\mathcal{T}},K_{d,\mathcal{T}}) \\
    & \stackrel{(g)} =  H( F_{d} ) + H(M_{d,\mathcal{S}},K_{d,\mathcal{S}} | F_{d} ,M_{d,\mathcal{T}},K_{d,\mathcal{T}}) \\
        & \stackrel{(h)} \geq  H( F_{d} ) + H(M_{d,\mathcal{S}},K_{d,\mathcal{S}} | F_{d} ,R_{d},K_{d,\mathcal{T}}) \\
                & \stackrel{(i)}  \geq  H( F_{d} ) + H(K_{d,\mathcal{S}} | F_{d} ,R_{d},K_{d,\mathcal{T}}) \\
                                &  \stackrel{(j)}=  H( F_{d} ) + \sum_{l \in \mathcal{S}}H(K_{d,l}) , 
 \end{align*}
 where $(a)$ and $(c)$ hold because conditioning reduces entropy, $(b)$ holds by the chain rule and because conditioning reduces entropy, $(d)$ holds because $S_{\mathcal{S},d}$ is a function of $(M_{d,\mathcal{S}},K_{d,\mathcal{S}})$  and  $S_{\mathcal{T},d}$ is a function of $(M_{d,\mathcal{T}},K_{d,\mathcal{T}})$, $(e)$ holds because $\widehat F_d(\mathcal{S} \cup \mathcal T) $ is a function of $S_{\mathcal{S} \cup \mathcal T,d} $, $(f)$ holds by the chain rule and non-negativity of the mutual information, $(g)$ holds by~\eqref{eqSeca2} because $|\mathcal{T}|=z_d$ and by \eqref{eqrel2} because  $|\mathcal{S} \cup \mathcal{T}|=t_d$, $(h)$ holds because $M_{d,\mathcal{T}}$ is a function of $(F_{d} ,R_{d},K_{d,\mathcal{T}})$, $(i)$ holds by the chain rule and non-negativity of the entropy, $(j)$ holds because $K_{d,\mathcal{S}}$ is independent from $( F_{d} ,R_{d},K_{d,\mathcal{T}})$ (since $\mathcal{S} \cap \mathcal{T} = \emptyset$) and because the keys $(K_{d,l})_{l \in \mathcal{S}}$ are independent.
\end{proof}
 Next, by summing both sides of the equation of Lemma \ref{lem11} over all possible sets $\mathcal{T} \subseteq \mathcal{L}$ and $\mathcal{S} \subseteq \mathcal{L} \backslash \mathcal{T} $ such that $|\mathcal{T}|=z_d$ and $|\mathcal{S}|=t_d-z_d$, we obtain a lower bound on the sum of the entropy of all the message $(M_{d,l})_{l \in \mathcal{L}}$.
 
 \begin{lem} \label{lem22}
 We have 
 \begin{align*}
 \sum_{l \in   \mathcal{L}} H(M_{d,l}) \geq \frac{L}{t_d-z_d} H( F_d ).
 \end{align*}
 \end{lem}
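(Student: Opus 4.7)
The plan is to apply Lemma \ref{lem11} to every admissible pair $(\mathcal{T},\mathcal{S})$ with $\mathcal{T} \subseteq \mathcal{L}$, $\mathcal{S} \subseteq \mathcal{L}\setminus\mathcal{T}$, $|\mathcal{T}|=z_d$, $|\mathcal{S}|=t_d-z_d$, and then sum the resulting inequalities. Summing the right-hand sides simply multiplies $H(F_d)$ by the total number $N$ of such pairs, which is $N = \binom{L}{z_d}\binom{L-z_d}{t_d-z_d}$. The left-hand side becomes
\[
\sum_{(\mathcal{T},\mathcal{S})} \sum_{l\in\mathcal{S}} H(M_{d,l}) = \sum_{l\in\mathcal{L}} c_l \, H(M_{d,l}),
\]
where $c_l$ is the number of admissible pairs with $l\in\mathcal{S}$.

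The next step is to compute $c_l$. By symmetry $c_l$ is independent of $l$: fixing $l\in\mathcal{S}$, one must choose $\mathcal{T} \subseteq \mathcal{L}\setminus\{l\}$ with $|\mathcal{T}|=z_d$ and then the remaining $t_d-z_d-1$ elements of $\mathcal{S}$ from $\mathcal{L}\setminus(\mathcal{T}\cup\{l\})$, which gives $c_l = \binom{L-1}{z_d}\binom{L-1-z_d}{t_d-z_d-1}$. The ratio
\[
\frac{N}{c_l} = \frac{\binom{L}{z_d}\binom{L-z_d}{t_d-z_d}}{\binom{L-1}{z_d}\binom{L-1-z_d}{t_d-z_d-1}} = \frac{L}{L-z_d}\cdot\frac{L-z_d}{t_d-z_d} = \frac{L}{t_d-z_d}
\]
collapses to exactly the constant in the target bound. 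Dividing the summed inequality by $c_l$ then yields
\[
\sum_{l\in\mathcal{L}} H(M_{d,l}) \geq \frac{L}{t_d-z_d}\, H(F_d),
\]
as desired.

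There is no real analytic difficulty here; the information-theoretic work has already been discharged by Lemma \ref{lem11}. The only thing that needs care is the combinatorial double counting, i.e., making sure that each index $l$ is counted with the correct multiplicity when the admissibility condition $\mathcal{T}\cap\mathcal{S}=\emptyset$ is enforced. Once the two binomial products are written down, the telescoping of factors gives the clean constant $L/(t_d-z_d)$ without further work.
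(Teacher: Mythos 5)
Your proof is correct and follows essentially the same approach as the paper: you sum Lemma \ref{lem11} over all admissible pairs $(\mathcal{T},\mathcal{S})$ and perform a double-counting argument to determine the multiplicity with which each $H(M_{d,l})$ appears. The paper carries out the counting in two stages (first over $\mathcal{S}$ for fixed $\mathcal{T}$, then over $\mathcal{T}$) whereas you compute the multiplicity $c_l$ in one shot, but the combinatorial identity and the final constant $L/(t_d-z_d)$ are identical.
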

\begin{proof}
We have
   \begin{align*} 
& \frac{L}{t_d-z_d} H( F_{d} ) \\
 & \stackrel{(a)}= \frac{L}{t_d-z_d} \Upsilon_d \sum_{ \substack{\mathcal{T} \subseteq \mathcal{L}  \\  |\mathcal{T} | = z_d }} \sum_{ \substack{\mathcal{S} \subseteq  \mathcal{T}^c \\  |\mathcal{S} | = t_d-z_d }}  H( F_{d} ) \\ 
   &  \stackrel{(b)} \leq \frac{L}{t_d-z_d} \Upsilon_d \sum_{ \substack{\mathcal{T} \subseteq \mathcal{L}  \\  |\mathcal{T} | = z_d }} \sum_{ \substack{\mathcal{S} \subseteq \mathcal{T}^c  \\  |\mathcal{S} | = t_d-z_d }}  \sum_{l \in \mathcal{S}} H(M_{d,l})  \\
   & \stackrel{(c)} = \frac{L}{t_d-z_d} \Upsilon_d \sum_{ \substack{\mathcal{T} \subseteq \mathcal{L}  \\  |\mathcal{T} | = z_d }} { L-z_d - 1 \choose t_d-z_d-1 }  \sum_{l \in  \mathcal{T}^c } H(M_{d,l})  \\
   &  \stackrel{(d)}  = \frac{L}{t_d-z_d} \Upsilon_d  { L-z_d - 1 \choose t_d-z_d-1 } \sum_{ \substack{\mathcal{T} \subseteq \mathcal{L}  \\  |\mathcal{T} | =L- z_d }}   \sum_{l \in   \mathcal{T}} H(M_{d,l})  \\
      & \stackrel{(e)}= \frac{L}{t_d-z_d} \Upsilon_d { L-z_d - 1 \choose t_d-z_d-1 } { L - 1 \choose L-z_d-1 }   \sum_{l \in   \mathcal{L}} H(M_{d,l})  \\
            & =  \sum_{l \in   \mathcal{L}} H(M_{d,l})  ,
 \end{align*}
 where  $(a)$ holds with $\Upsilon_d \triangleq {L \choose z_d}^{-1} {{L-z_d}\choose{t_d-z_d}}^{-1}$, $(b)$ holds by Lemma \ref{lem11}, $(c)$ holds because for any $l\in\mathcal{T}^c$, $H(M_{d,l})$ appears exactly ${ L-z_d - 1 \choose t_d-z_d-1 }$ times in the term $\sum_{ \substack{\mathcal{S} \subseteq \mathcal{T}^c  \\  |\mathcal{S} | = t_d-z_d }}  \sum_{l \in \mathcal{S}} H(M_{d,l})$ (note that this observation was also made in \cite[Lemma 3.2]{de1999multiple}), $(d)$ holds by a change of variables in the sums, $(e)$ holds because for any $l\in\mathcal{L}$, $H(M_{d,l})$ appears exactly ${ L - 1 \choose L-z_d-1 }$ times in the term $\sum_{ \substack{\mathcal{T} \subseteq \mathcal{L}  \\  |\mathcal{T} | =L- z_d }}   \sum_{l \in   \mathcal{T}} H(M_{d,l})$. 
 \end{proof}

Finally, we have
   \begin{align*} 
  \frac{L}{t_d-z_d}r^{(F)}_{d}  
 & \stackrel{(a)}  =  \frac{L}{t_d-z_d} H( F_{d} )\\
             & \stackrel{(b)} \leq  \sum_{l \in   \mathcal{L}} H(M_{d,l})  \\
            &  \leq  \sum_{l \in   \mathcal{L}} r^{(M)}_{d,l},  \numberthis \label{eqratecsum}
 \end{align*} 
 where  $(a)$ holds by uniformity of $F_d$, $(b)$~holds by Lemma \ref{lem22}.

 Since \eqref{eqratecsum} is valid for any private file storage strategy, \eqref{eqratecsum} is also valid for a  $\left( \left(2^{\tilde r^{(F)}_{d}}\right)_{d \in \mathcal{D}},\left(2^{\tilde r^{(R)}_{d}}\right)_{d \in \mathcal{D}},\left(2^{\tilde r^{(M)}_{d,l,}}\right)_{d \in \mathcal{D},l \in \mathcal{L}},\left(2^{\tilde r^{(S)}_{l}}\right)_{l \in \mathcal{L}}\right)$ file storage strategy that $(\mathbf{t},\mathbf{z})$-achieves $\left(\tilde{r}^{(F)}_{d}\right)_{d \in \mathcal{D}} $, where $\left({\tilde{r}^{(F)}_{d}}\right)_{d \in \mathcal{D}}  \in \mathcal{C}_{F}(\mathbf{t},\mathbf{z})$ is such that $\tilde{r}^{(F)}_{d} = {r}^{(F)}_{d,\star}(\mathbf{t},\mathbf{z})$ and $\sum_{l\in\mathcal{L}} \tilde{r}^{(M)}_{d,l}=r^{(M)}_{d,\Sigma,\star}(\mathbf{t},\mathbf{z}) $.

\section{Proof of Equation \eqref{th42}} \label{App_th4}

Consider an arbitrary  $\left( \left(2^{r^{(F)}_d}\right)_{d \in \mathcal{D}}, \left(2^{r^{(R)}_d}\right)_{d \in \mathcal{D}},\right.$ $\left.\left(2^{r^{(M)}_{d,l}}\right)_{d \in \mathcal{D},l \in \mathcal{L}},\left(2^{r^{(S)}_l}\right)_{l \in \mathcal{L}}\right)$ private file storage strategy that $(\mathbf{t},\mathbf{z})$-achieves $\left({r^{(F)}_d}\right)_{d \in \mathcal{D}}$. Assume that \eqref{eqsym2} holds. Fix $d\in\mathcal{D}$, $l \in \mathcal{L}$. By exploiting the leakage symmetry condition \eqref{eqsym2}, we derive a first lower bound on the public communication to a specific server in the following lemma.
\begin{lem} \label{lem444}
 For $i \in \llbracket z_d , t_d -1\rrbracket$, define $\mathcal{V}_i \triangleq \begin{cases} \llbracket 1,i\rrbracket & \text{ if } l >i \\ \llbracket 1,i +1 \rrbracket \backslash \{ l \} & \text{ if } l \leq i \end{cases}$  and $\mathcal{V}_{t_d} \triangleq \mathcal{V}_{t_d-1} \cup \{ l\}$.  For $i \in \mathcal{L}$, and $\mathcal{S} \subseteq \mathcal{L}$ such that $|\mathcal{S}|=i$, define $\alpha_i \triangleq I\left({F}_{d}; M_{d,\mathcal{S}},K_{d,\mathcal{S}}  \right)$ and $\alpha_{L+1} \triangleq \alpha_L$. Note that $\alpha_i$ only depends on $i$ and not on the specific elements of $\mathcal{S}$ by \eqref{eqsym2}. Note also that $\alpha_{z_d} =0$ by~\eqref{eqSeca2} and $\alpha_{t_d} = H(F_d)$ by~\eqref{eqrel2}.  Then, we have
\begin{align*}
H(M_{d,l})\geq   \sum_{i=z_d}^{t_d-1} [2 \alpha_{i+1} - \alpha_i -\alpha_{i+2} ]^+. \numberthis \label{eqmin}
	\end{align*}
\end{lem}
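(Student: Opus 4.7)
The plan is to bound $H(M_{d,l})$ by telescoping along a chain of conditionings indexed by $i \in \llbracket z_d, t_d \rrbracket$ that interpolates between leaving $l$ out entirely and fully incorporating the pair $(M_{d,l}, K_{d,l})$. Concretely, I would set $W_i \triangleq (M_{d,\mathcal{V}_i}, K_{d,\mathcal{V}_i \cup \{l\}})$. By construction of the $\mathcal{V}_i$ (they form a strictly increasing chain with $l \notin \mathcal{V}_i$ for $i \leq t_d-1$ and $l \in \mathcal{V}_{t_d}$), each $W_i$ is a deterministic function of $W_{i+1}$, so $H(M_{d,l}|W_i)$ is non-increasing in $i$ and $H(M_{d,l}|W_{t_d})=0$ because $M_{d,l}$ is a component of $W_{t_d}$. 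Telescoping thus gives $H(M_{d,l}) \geq H(M_{d,l}|W_{z_d}) = \sum_{i=z_d}^{t_d-1}\bigl[H(M_{d,l}|W_i) - H(M_{d,l}|W_{i+1})\bigr]$.

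Next I would introduce the auxiliary mutual information $A_i \triangleq I(F_d; M_{d,l}, K_{d,l} \mid M_{d,\mathcal{V}_i}, K_{d,\mathcal{V}_i})$. Two observations about $A_i$ are key. First, since the keys are jointly independent and independent of $F_d$, and since $l \notin \mathcal{V}_i$, the key $K_{d,l}$ is independent of $(F_d, M_{d,\mathcal{V}_i}, K_{d,\mathcal{V}_i})$, so the chain rule collapses $A_i$ to $I(F_d; M_{d,l} \mid W_i)$. Second, the leakage symmetry \eqref{eqsym2} together with $|\mathcal{V}_i|=i$ and $|\mathcal{V}_i \cup \{l\}|=i+1$ gives $A_i = \alpha_{i+1} - \alpha_i$; the convention $\alpha_{L+1}\triangleq\alpha_L$ together with $\alpha_j = H(F_d)$ for $j \geq t_d$ (which follows from recoverability) makes this consistent at the boundary and ensures $A_i - A_{i+1} = 2\alpha_{i+1} - \alpha_i - \alpha_{i+2}$ for every $i \in \llbracket z_d, t_d-1\rrbracket$.

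The crux of the argument is then the per-term inequality $H(M_{d,l}|W_i) - H(M_{d,l}|W_{i+1}) \geq [A_i - A_{i+1}]^+$. Expanding $A_j = H(M_{d,l}|W_j) - H(M_{d,l}|F_d, W_j)$ for $j \in \{i, i+1\}$ yields
\begin{align*}
A_i - A_{i+1} = \bigl[H(M_{d,l}|W_i) - H(M_{d,l}|W_{i+1})\bigr] - \bigl[H(M_{d,l}|F_d, W_i) - H(M_{d,l}|F_d, W_{i+1})\bigr].
\end{align*}
The second bracket is non-negative because $W_i$ is a function of $W_{i+1}$, so $A_i - A_{i+1}$ is upper-bounded by the first bracket, which is itself non-negative for the same reason; taking $[\,\cdot\,]^+$ on the left then preserves the inequality. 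Summing over $i$ and telescoping produces \eqref{eqmin}.

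The delicate point, and the main obstacle I anticipate, is the specific choice of $W_i$: it must form an increasing chain to enable telescoping, it must determine $M_{d,l}$ at $i=t_d$ to make the telescope terminate at zero, and it must include $K_{d,l}$ in the conditioning from the outset so that $A_i$, originally phrased in terms of the pair $(M_{d,l}, K_{d,l})$, can be rewritten as a mutual information involving $M_{d,l}$ alone under exactly the same $W_i$. That alignment is what simultaneously permits the clean $H(M_{d,l}|\cdot)$-telescoping and the discarding of the non-negative second bracket. Once this chain is identified, the remaining steps reduce to routine applications of the chain rule, leakage symmetry \eqref{eqsym2}, and the standing independence of the keys and the file.
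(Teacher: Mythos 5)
Your proof is correct and follows the same telescoping strategy as the paper: condition along the increasing chain $\mathcal{V}_{z_d} \subset \cdots \subset \mathcal{V}_{t_d}$, express the telescope increments in terms of the $\alpha_i$'s via leakage symmetry and key independence, and bound each increment below by the corresponding $[\,\cdot\,]^+$-term. Your bookkeeping is slightly cleaner than the paper's: by including $K_{d,l}$ in the conditioning $W_i$ from the outset you bound $H(M_{d,l})$ directly rather than $H(M_{d,l},K_{d,l})$, which dispenses with the paper's $\pm H(K_{d,l})$ adjustments and its separate handling of the boundary index $i = t_d-1$, replacing them with the uniform per-term inequality $[A_i - A_{i+1}]^+ \leq H(M_{d,l}\mid W_i) - H(M_{d,l}\mid W_{i+1})$.
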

\begin{proof}
We have
\begin{align*}
&	H(M_{d,l}) + H(K_{d,l})\\
& \stackrel{(a)}\geq 	H(M_{d,l}) + H(K_{d,l}|M_{d,l})\\
&	= H(M_{d,l},K_{d,l})\\
	& \stackrel{(b)}\geq H(M_{d,l},K_{d,l}|M_{d,\mathcal{V}_{z_d}},K_{d,\mathcal{V}_{z_d}})    \\ \nonumber
& \stackrel{(c)}= H(M_{d,l},K_{d,l}|M_{d,\mathcal{V}_{z_d}},K_{d,\mathcal{V}_{z_d}}) \nonumber \\
& \phantom{--}- H(M_{d,l},K_{d,l}| M_{d,\mathcal{V}_{t_d}},K_{d,\mathcal{V}_{t_d}})\\ \nonumber
	& = \smash\sum_{i=z_d}^{t_d-1} \left[H(M_{d,l},K_{d,l}|M_{d,\mathcal{V}_{i}},K_{d,\mathcal{V}_{i}}) \nonumber \right. \\
& \phantom{----} \left. - H(M_{d,l},K_{d,l}| M_{d,\mathcal{V}_{i+1}},K_{d,\mathcal{V}_{i+1}}) \right] \numberthis \label{eqint1}\\ \nonumber 
		& \stackrel{(d)}= \smash\sum_{i=z_d}^{t_d-1} \left[ H(M_{d,l},K_{d,l},F_d|M_{d,\mathcal{V}_{i}},K_{d,\mathcal{V}_{i}})\right. \\
& \phantom{----} \left. - H(F_d|M_{d,\mathcal{V}_{i}\cup \{l\}},K_{d,\mathcal{V}_{i}\cup \{l\}}) \right.\\
		& \phantom{----} \left. - H( M_{d,l},K_{d,l},F_d| M_{d,\mathcal{V}_{i+1}},K_{d,\mathcal{V}_{i+1}}) \right. \\
& \phantom{----} \left.+ H(F_d| M_{d,\mathcal{V}_{i+1} \cup \{l\}},K_{d,\mathcal{V}_{i+1} \cup \{l\}}) \right] \\ \nonumber 
				& \stackrel{(e)}= \smash\sum_{i=z_d}^{t_d-1}  \left[H(F_d|M_{d,\mathcal{V}_{i}},K_{d,\mathcal{V}_{i}}) \right. \\
& \phantom{----} \left.+ H(M_{d,l},K_{d,l}|F_d,M_{d,\mathcal{V}_{i}},K_{d,\mathcal{V}_{i}})\right. \\
& \phantom{----} \left.- H(F_d|M_{d,\mathcal{V}_{i}\cup \{l\}},K_{d,\mathcal{V}_{i}\cup \{l\}})\right.\\
		& \phantom{----} \left.- H( F_d| M_{d,\mathcal{V}_{i+1}},K_{d,\mathcal{V}_{i+1}})\right. \\
& \phantom{----} \left. - H( M_{d,l},K_{d,l}|F_d, M_{d,\mathcal{V}_{i+1}},K_{d,\mathcal{V}_{i+1}}) \right.\\
		& \phantom{----} \left.+ H(F_d| M_{d,\mathcal{V}_{i+1} \cup \{l\}},K_{d,\mathcal{V}_{i+1} \cup \{l\}}) \right] \\ \nonumber 
			& = \smash\sum_{i=z_d}^{t_d-1} \left[- I(F_d;M_{d,\mathcal{V}_{i}},K_{d,\mathcal{V}_{i}}) \right. \\
& \phantom{----} \left.+ H(M_{d,l},K_{d,l}|F_d,M_{d,\mathcal{V}_{i}},K_{d,\mathcal{V}_{i}})\right. \\
& \phantom{----} \left.+I(F_d;M_{d,\mathcal{V}_{i}\cup \{l\}},K_{d,\mathcal{V}_{i}\cup \{l\}}) \right.\\
		& \phantom{----} \left.+I(F_d; M_{d,\mathcal{V}_{i+1}},K_{d,\mathcal{V}_{i+1}}) \right. \\
& \phantom{----} \left.- H( M_{d,l},K_{d,l}|F_d, M_{d,\mathcal{V}_{i+1}},K_{d,\mathcal{V}_{i+1}})\right.\\
		& \phantom{----} \left. - I(F_d; M_{d,\mathcal{V}_{i+1} \cup \{l\}},K_{d,\mathcal{V}_{i+1} \cup \{l\}}) \right] \\ \nonumber 
			& \stackrel{(f)}= \smash\sum_{i=z_d}^{t_d-1} \left[ 2 \alpha_{i+1} - \alpha_i -\alpha_{i+2}  \right.\\
		& \phantom{----} \left. + H(M_{d,l},K_{d,l}|F_d,M_{d,\mathcal{V}_{i}},K_{d,\mathcal{V}_{i}})  \right. \\
& \phantom{----} \left.- H( M_{d,l},K_{d,l}|F_d, M_{d,\mathcal{V}_{i+1}},K_{d,\mathcal{V}_{i+1}}) \right] \\ \nonumber 
										& \stackrel{(g)} \geq   [ 2 \alpha_{t_d} - \alpha_{t_d-1} -\alpha_{t_d+1}\\
& \phantom{----}+H(M_{d,l},K_{d,l}|F_d,M_{d,\mathcal{V}_{t_d-1}},K_{d,\mathcal{V}_{t_d-1}})] \\
& \phantom{----}+ \smash\sum_{i=z_d}^{t_d-2}[ 2 \alpha_{i+1} - \alpha_i -\alpha_{i+2}]    \\ \nonumber
																		& \stackrel{(h)} \geq   [ 2 \alpha_{t_d} - \alpha_{t_d-1} -\alpha_{t_d+1}\\
& \phantom{----}+H(M_{d,l},K_{d,l}|F_d,M_{d,\mathcal{V}_{t_d-1}},K_{d,\mathcal{V}_{t_d-1}})]^+ \\
& \phantom{----}+ \smash\sum_{i=z_d}^{t_d-2}[ 2 \alpha_{i+1} - \alpha_i -\alpha_{i+2}]^+    \\ \nonumber
																		& \stackrel{(i)} =   [ \alpha_{t_d} - \alpha_{t_d-1} +H(M_{d,l},K_{d,l}|F_d,M_{d,\mathcal{V}_{t_d-1}},K_{d,\mathcal{V}_{t_d-1}})]^+ \\
		& \phantom{----}+ \smash\sum_{i=z_d}^{t_d-2}[ 2 \alpha_{i+1} - \alpha_i -\alpha_{i+2}]^+    \\ \nonumber 																& \stackrel{(j)} =   [ \alpha_{t_d} - \alpha_{t_d-1} +H(M_{d,l},K_{d,l}|F_d,M_{d,\mathcal{V}_{t_d-1}},K_{d,\mathcal{V}_{t_d-1}})] \\
		& \phantom{----}+ \smash\sum_{i=z_d}^{t_d-2}[ 2 \alpha_{i+1} - \alpha_i -\alpha_{i+2}]^+    \\ \nonumber 
										& \stackrel{(k)} =   H(M_{d,l},K_{d,l}|F_d,M_{d,\mathcal{V}_{t_d-1}},K_{d,\mathcal{V}_{t_d-1}})  \\
		& \phantom{----}+ \smash\sum_{i=z_d}^{t_d-1}[ 2 \alpha_{i+1} - \alpha_i -\alpha_{i+2}]^+    \\ \nonumber 
										& \stackrel{(l)} \geq   H(K_{d,l}|F_d,M_{d,\mathcal{V}_{t_d-1}},K_{d,\mathcal{V}_{t_d-1}}) \\
		& \phantom{----}+ \smash\sum_{i=z_d}^{t_d-1} [2 \alpha_{i+1} - \alpha_i -\alpha_{i+2}]^+    \\ \nonumber 
																			& \stackrel{(m)} \geq   H(K_{d,l}|F_d,R_d ,K_{d,\mathcal{V}_{t_d-1}}) + \sum_{i=z_d}^{t_d-1} [2 \alpha_{i+1} - \alpha_i -\alpha_{i+2} ]^+    \\ \nonumber 								
		& \stackrel{(n)} =  H(K_{d,l} ) + \smash\sum_{i=z_d}^{t_d-1} [2 \alpha_{i+1} - \alpha_i -\alpha_{i+2} ]^+   , 
	\end{align*}
	where $(a)$ and $(b)$ hold  because conditioning reduces entropy, $(c)$ holds because $l \in \mathcal{V}_{t_d}$, $(d)$ and $(e)$ hold  by the chain rule, $(f)$ holds by the definition of $\alpha_i$, $(g)$ holds because for any $i\in \llbracket  z_d , t_d-2 \rrbracket$, $H(M_{d,l},K_{d,l}|F_d,M_{d,\mathcal{V}_{i}},K_{d,\mathcal{V}_{i}})  \geq H( M_{d,l},K_{d,l}|F_d, M_{d,\mathcal{V}_{i+1}},K_{d,\mathcal{V}_{i+1}})$ since conditioning reduces entropy and $\mathcal{V}_{i} \subset \mathcal{V}_{i+1}$, and because $H(M_{d,l},K_{d,l}|F_d,M_{d,\mathcal{V}_{t_d}},K_{d,\mathcal{V}_{t_d}}) =0 $ since $l \in \mathcal{V}_{t_d}$, $(h)$~holds because in \eqref{eqint1}, we observe that $H(M_{d,l},K_{d,l}|M_{d,\mathcal{V}_{i}},K_{d,\mathcal{V}_{i}}) - H(M_{d,l},K_{d,l}| M_{d,\mathcal{V}_{i+1}},K_{d,\mathcal{V}_{i+1}})\geq 0$ since conditioning reduces entropy and $\mathcal{V}_{i} \subset \mathcal{V}_{i+1}$, $(i)$ holds because $\alpha_{t_d+1} = \alpha_{t_d} = H(F_d)$ by~\eqref{eqrel2}, $(j)$~holds because $ \alpha_{t_d} \geq \alpha_{t_d-1}$ by the definition of $ \alpha_{t_d}$ and $\alpha_{t_d-1}$, $(k)$ holds because $ \alpha_{t_d} - \alpha_{t_d-1}= [2\alpha_{t_d} - \alpha_{t_d-1}-\alpha_{t_d+1} ]^+$, $(l)$ holds by the chain rule and non-negativity of the entropy, $(m)$~holds because $M_{d,\mathcal{V}_{t_d-1}}$ is a function of $(F_{d} ,R_{d},K_{d,\mathcal{V}_{t_d-1}})$, $(n)$~holds by independence between $K_{d,l}$ and $(F_d,R_d ,K_{d,\mathcal{V}_{t_d-1}})$ since $\{l\}\cap \mathcal{V}_{t_d-1}=\emptyset$.
	\end{proof}

Next, we remark that the lower bound of Lemma \ref{lem444} is lower bounded by
$$
\min_{f \in \mathcal{F}} \sum_{i=1}^{t_d-z_d}  [ 2 f(i+1)- f(i)  - f(i+2)  ]^+ ,
$$
where the minimum is taken over the set $\mathcal{F}$ of  all the functions $f: \llbracket 1,t_d-z_d+2 \rrbracket \to [0,H(F_d)]$ that are non-decreasing (because, by construction, $(\alpha_i)_{i\in \llbracket 1, L+1 \rrbracket}$ is a non-decreasing sequence) and such that $f(1)  = \alpha_{z_d} = 0$, $ f(t_d-z_d+2)=f(t_d-z_d+1)= \alpha_{t_d} =H(F_d)$. In the following lemma, we determine a lower bound for this optimization problem.

\begin{lem} \label{lem555}
For any $f \in \mathcal{F}$, we have
\begin{align}
\sum_{i=1}^{t_d-z_d}  [ 2 f(i+1)- f(i)  - f(i+2)  ]^+ \geq  \frac{H(F_d)}{t_d-z_d} . \label{eqf1}
\end{align}
\end{lem}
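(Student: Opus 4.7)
\textbf{Proof proposal for Lemma \ref{lem555}.} The plan is to rewrite the summand as a first difference of the forward differences of $f$, then combine a pigeonhole argument with a telescoping identity. Set $N \triangleq t_d - z_d$ for brevity, and define the forward differences
$$\Delta f(i) \triangleq f(i+1) - f(i), \qquad i \in \llbracket 1, N+1 \rrbracket.$$
Then the identity $2f(i+1) - f(i) - f(i+2) = \Delta f(i) - \Delta f(i+1)$ holds, so \eqref{eqf1} is equivalent to
$$\sum_{i=1}^{N} [\Delta f(i) - \Delta f(i+1)]^+ \geq \frac{H(F_d)}{N}.$$
Because $f$ is non-decreasing, $\Delta f(i) \geq 0$ for all $i$, and the boundary conditions $f(1)=0$ and $f(N+1)=f(N+2)=H(F_d)$ give two crucial facts: first, $\Delta f(N+1) = 0$; second, $\sum_{i=1}^{N+1}\Delta f(i) = H(F_d)$, which together with $\Delta f(N+1)=0$ implies
$$\sum_{i=1}^{N} \Delta f(i) = H(F_d).$$

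By pigeonhole, there exists $i^\star \in \llbracket 1, N \rrbracket$ such that $\Delta f(i^\star) \geq H(F_d)/N$. Now apply the telescoping identity on the tail starting at $i^\star$:
$$\sum_{i=i^\star}^{N} \bigl(\Delta f(i) - \Delta f(i+1)\bigr) = \Delta f(i^\star) - \Delta f(N+1) = \Delta f(i^\star).$$
Dropping the $[\cdot]^+$ can only decrease each summand, so
$$\sum_{i=1}^{N} [\Delta f(i) - \Delta f(i+1)]^+ \geq \sum_{i=i^\star}^{N} [\Delta f(i) - \Delta f(i+1)]^+ \geq \sum_{i=i^\star}^{N} \bigl(\Delta f(i) - \Delta f(i+1)\bigr) = \Delta f(i^\star) \geq \frac{H(F_d)}{N},$$
which is the desired inequality.

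There is no real obstacle beyond getting the indexing right: the key conceptual observation is that the summand is a second difference of $f$, and the boundary condition $f(N+1)=f(N+2)$ forces $\Delta f(N+1)=0$, which is exactly what allows the positive parts over the tail $[i^\star, N]$ to collapse cleanly into $\Delta f(i^\star)$ via telescoping. The bound $H(F_d)/N$ is tight and is attained by the piecewise-linear choice $f(i) = \min(1, (i-1)/N) H(F_d)$, for which only a single summand (at $i=N$) is nonzero and equals $H(F_d)/N$; this matches the converse that Theorem statement \eqref{th42} is aiming~for.
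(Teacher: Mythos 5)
Your proof is correct, and it takes a genuinely different and noticeably more elementary route than the paper's. The paper introduces the discrete concave envelope $f^+$ of $f$, lower-bounds each summand $[2f(i+1)-f(i)-f(i+2)]^+$ by the corresponding second difference of $f^+$ (splitting into cases according to whether $f(i+1)=f^+(i+1)$, with the non-equality case requiring a nontrivial proof by contradiction that $f^+$ is locally linear, verified by constructing a smaller concave majorant $h_i$ and checking its concavity term by term), then telescopes the second differences of $f^+$ and invokes concavity once more to get the final bound. Your argument replaces all of that machinery with a pigeonhole step on the first differences $\Delta f(i)$ followed by a single tail telescoping from the pigeonhole index $i^\star$, using only $[x]^+\geq x$, nonnegativity of the positive part, and the boundary fact $\Delta f(N+1)=0$. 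Both proofs hinge on telescoping the second differences; what yours buys is that, by dropping only the summands on $\llbracket 1, i^\star-1\rrbracket$ and keeping the tail, you never need the positive parts to coincide with actual values on any particular index set, so no concavification is required. The tradeoff is that the paper's concave-envelope technique is somewhat more structural and could generalize to variants of the objective; your argument is tailored to this specific sum but is shorter, self-contained, and easier to check. Your closing remark about tightness of the piecewise-linear example is also correct.
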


\begin{proof}
	 Let $f \in \mathcal{F}$ and let $f^+$ be the concave envelope of $f$ over $\llbracket 1 ,t_d-z_d+2 \rrbracket$, i.e., for $i \in \llbracket 1 ,t_d-z_d+2 \rrbracket$, $f^+(i) \triangleq \min \{ g(i) : g \geq f, g \text{ is concave}\}$. Note that $f^+(1) = f(1)$ and $f^+(t_d-z_d+2) = f(t_d-z_d+2)$. Then, for any $i \in \llbracket 1,t_d-z_d \rrbracket$ such that $f(i+1) = f^+(i+1)$, we have 
	\begin{align}
		& [ 2f(i+1)- f(i)- f(i+2) ]^+ \nonumber \\
		& \geq  2f(i+1)- f(i)- f(i+2)  \nonumber \\ \nonumber
		&\stackrel{(a)} \geq 2f(i+1)- f^+(i)- f^+(i+2)\\
		& \stackrel{(b)}= 2f^+(i+1)- f^+(i)- f^+(i+2) , \label{eqm1}
	\end{align}
		where $(a)$ holds because $f^+ \geq f$, $(b)$ holds because $f(i+1) = f^+(i+1)$. Moreover, for any $i \in \llbracket 1,t_d-z_d \rrbracket$ such that $f(i+1) \neq f^+(i+1)$, we have 
	\begin{align}
	     & [ 2f(i+1)- f(i)- f(i+2) ]^+  \nonumber \\
	     &\geq  0 \nonumber \\
		 &= 2f^+(i+1)- f^+(i)- f^+(i+2), \label{eqm2}
	\end{align}
	where the last equality holds because $f^+$ is linear between $i$ and $i+2$, i.e., $f^+(i+1)- f^+(i)  = f^+(i+2) - f^+(i+1)$. Indeed, by contradiction, assume that  $f^+$ is not linear between $i$ and $i+2$, then, since $f^+$ is concave, we must have  
	\begin{align}
	f^+(i+1) > \frac{f^+(i+2)+ f^+(i)}{2}. \label{eqcontrad}
\end{align}	
	 Next, we have a contradiction by constructing $h_i$, a concave function such that $f \leq h_i < f^+$, as follows: 
	\begin{align*}
	h_i :j \mapsto \begin{cases}
	  f^+(j) & \text{if } j \neq i+1\\
	  \max \left( \frac{f^+(i+2)+ f^+(i)}{2} , f(i+1) \right) &\text{if } j =i+1
	 \end{cases}.
	\end{align*}
We have $f \leq h_i$ (since $f \leq f^+$), and 	$h_i < f^+$ by \eqref{eqcontrad} and because $f^+(i+1) > f(i+1)$ (since $f^+ \geq f$ and $f^+(i+1) \neq f(i+1)$). Then, to show concavity of $h_i$, it is sufficient to show that  $h_i^{\Delta}$ is non-increasing where $h_i^{\Delta}$ is defined as 
\begin{align*}
h_i^{\Delta}:\llbracket 1, t_d-z_d+1  \rrbracket &\to \mathbb{R}\\
	 j & \mapsto h_i(j+1) - h_i(j).
\end{align*}
For $j \in \llbracket 1 , i -2 \rrbracket \cup \llbracket i+2, t_d-z_d+1 \rrbracket $, we have 
\begin{align} \label{eqmonotone1}
h_i^{\Delta}(j+1) \leq h_i^{\Delta}(j) 
\end{align} by definition of  $h_i^{\Delta}$ and concavity of $f^+$. Then, we have
\begin{align*}
h_i^{\Delta}(i) 
& \stackrel{(a)} = h_i(i+1) - h_i(i)  \\
&\stackrel{(b)} = h_i(i+1) -f^+(i) \\
& \stackrel{(c)}\leq f^+(i+1) -f^+(i) \\
& \stackrel{(d)}\leq f^+(i) - f^+(i-1) \\
& \stackrel{(e)}= h_i(i) - h_i(i-1) \\
& \stackrel{(f)}= h_i^{\Delta}(i-1),
\end{align*}
where $(a)$ and $(f)$ hold by definition of $h_i^{\Delta}$, $(b)$ and $(e)$ hold by definition of $h_i$, $(c)$ holds because $h_i < f^+$, $(d)$ holds by concavity of $f^+$. Then, we have
\begin{align*}
h_i^{\Delta}(i+1) 
&\stackrel{(a)}= h_i(i+2) - h_i(i+1)  \\
&\stackrel{(b)}= f^+(i+2) - h_i(i+1)  \\
& \stackrel{(c)} \leq h_i(i+1) - f^+(i) \\
&\stackrel{(d)} = h_i(i+1) - h_i(i) \\
&\stackrel{(e)} = h_i^{\Delta}(i), \numberthis \label{eqmonotone2}
\end{align*}
where $(a)$ and $(e)$ hold by definition of $h_i^{\Delta}$, $(b)$ and $(d)$ hold by definition of $h_i$, $(c)$ holds because $\frac{f^+(i+2)+ f^+(i)}{2} \leq h_i (i+1)$. Then, we also have
\begin{align*}
h_i^{\Delta}(i+2) 
& \stackrel{(a)}= h_i(i+3) - h_i(i+2)  \\
&\stackrel{(b)}= f^+(i+3) - f^+(i+2)  \\
&\stackrel{(c)} \leq f^+(i+2) - f^+(i+1)  \\
&\stackrel{(d)} \leq f^+(i+2) - h_i(i+1) \\
&\stackrel{(e)} = h_i(i+2) - h_i(i+1) \\
&\stackrel{(f)} = h_i^{\Delta}(i+1), \numberthis \label{eqmonotone3}
\end{align*}
where $(a)$ and $(f)$ hold by definition of $h_i^{\Delta}$, $(b)$ and $(e)$ hold by definition of $h_i$, $(c)$ holds by concavity of $f^+$, $(d)$ holds because $h_i < f^+$. Hence, by \eqref{eqmonotone1}, \eqref{eqmonotone2}, and \eqref{eqmonotone3}, $h_i^{\Delta}$ is non-increasing and we have thus proved \eqref{eqm2} by contradiction.
	
	Next, we have 
	\begin{align}
		& \sum_{i=1}^{t_d-z_d}  [ 2 f(i+1)- f(i)  - f(i+2)  ]^+ \nonumber\\ \nonumber
		& \stackrel{(a)}  \geq 		\sum_{i=1}^{t_d-z_d}  [ 2 f^+(i+1)- f^+(i)  - f^+(i+2)  ]\\\nonumber
		& =	\sum_{i=1}^{t_d-z_d}  [ (f^+(i+1)- f^+(i))  - (f^+(i+2) -f^+(i+1))  ]\\\nonumber
 		& = f^+(2)- f^+(1) + f^+(t_d-z_d+2)- f^+(t_d-z_d+1) \\ \nonumber
 		 		& \stackrel{(b)}= f^+(2) \\ 
 		 		 		 		& \stackrel{(c)} \geq  \frac{H(F_d)}{t_d-z_d}  , \nonumber
	\end{align}
	where $(a)$ holds by \eqref{eqm1} and \eqref{eqm2}, $(b)$ holds because $f^+(t_d-z_d+2)= f^+(t_d-z_d+1) = f(t_d-z_d+1)= H(F_d)$ and $f^+(1)=0$, $(c)$ holds  because 
	$f^+(2) = f^+(2) - f^+(1) \geq (f^+(t_d-z_d+1) - f^+(1))/(t_d-z_d)$ by concavity of $f^+$ and where we have used that $f^+(t_d-z_d+1) =H(F_d)$ and $f^+(1)=f(1)= 0$.
\end{proof}

		Next, by  combining Lemmas \ref{lem444} and \ref{lem555}, we have
	\begin{align}
{r}^{(M)}_{d,l}
& \geq H(M_{d,l}) \nonumber   \\
& \stackrel{(a)}\geq H(F_d) \frac{1}{t_d-z_d} \nonumber \\
&\stackrel{(b)} = {r}^{(F)}_{d} \frac{1}{t_d-z_d}, \label{eqindivratepub}
	\end{align}
	where $(a)$ holds by \eqref{eqmin} and \eqref{eqf1}, which is valid for any $f \in \mathcal{F}$, $(b)$ holds by uniformity of $F_d$.

Finally, since \eqref{eqindivratepub} is valid for any private file storage strategy, \eqref{eqindivratepub} is also valid for a  $\left( \left(2^{\tilde r^{(F)}_{d}}\right)_{d \in \mathcal{D}},\left(2^{\tilde r^{(R)}_{d}}\right)_{d \in \mathcal{D}},\left(2^{\tilde r^{(M)}_{d,l,}}\right)_{d \in \mathcal{D},l \in \mathcal{L}},\left(2^{\tilde r^{(S)}_{l}}\right)_{l \in \mathcal{L}}\right)$ file storage strategy that $(\mathbf{t},\mathbf{z})$-achieves $\left(\tilde{r}^{(F)}_{d}\right)_{d \in \mathcal{D}} $, where $\left({\tilde{r}^{(F)}_{d}}\right)_{d \in \mathcal{D}}  \in \mathcal{C}_{F}(\mathbf{t},\mathbf{z})$ is such that $\tilde{r}^{(F)}_{d} = {r}^{(F)}_{d,\star}(\mathbf{t},\mathbf{z})$ and $  \tilde{r}^{(M)}_{d,l}=r^{(M)}_{d,l,\star}(\mathbf{t},\mathbf{z}) $.

 \section{Proof of Equation \eqref{th52}} \label{App_th5}
Consider an arbitrary  $\left( \left(2^{r^{(F)}_d}\right)_{d \in \mathcal{D}}, \left(2^{r^{(R)}_d}\right)_{d \in \mathcal{D}},\right.$ $\left.\left(2^{r^{(M)}_{d,l}}\right)_{d \in \mathcal{D},l \in \mathcal{L}},\left(2^{r^{(S)}_l}\right)_{l \in \mathcal{L}}\right)$ private file storage strategy that $(\mathbf{t},\mathbf{z})$-achieves $\left({r^{(F)}_d}\right)_{d \in \mathcal{D}}$. Fix $d\in\mathcal{D}$.  Let $\mathcal{V} \subseteq \mathcal{L}$ such that $v \triangleq |\mathcal{V}|< z_d$. For $\mathcal{T} \subseteq \mathcal{L} \backslash \mathcal{V}$ and $\mathcal{S} \subseteq \mathcal{L} \backslash (\mathcal{T} \cup \mathcal{V}) $ such that $|\mathcal{T}|= z_d - v$ and $|\mathcal{S}|=t_d-z_d$. Using Definition~\ref{definition_modelg2} and the reliability and security constraints \eqref{eqrel2} and~\eqref{eqSeca2} from Definition~\ref{def2b}, we first derive the following lemma. 

\begin{lem} \label{lem777}
We have 
 \begin{align*}
   \sum_{l \in \mathcal{S}} H(M_{d,l},K_{d,l} |M_{d,\mathcal{V}},K_{d,\mathcal{V}} )  \geq    H( F_{d} ) + \sum_{l \in \mathcal{S}}H(K_{d,l}) . \numberthis \label{eqRanDint0}
 \end{align*}
\end{lem}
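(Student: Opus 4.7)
The plan is to follow the same template as the proof of Lemma \ref{lem11} in Appendix \ref{App_th3}, but now carrying the additional conditioning on $(M_{d,\mathcal{V}},K_{d,\mathcal{V}})$ throughout. The key observation is that $|\mathcal{V}\cup\mathcal{T}|=v+(z_d-v)=z_d$ and $|\mathcal{S}\cup\mathcal{V}\cup\mathcal{T}|=z_d+(t_d-z_d)=t_d$, so after enlarging the conditioning from $\mathcal{V}$ to $\mathcal{V}\cup\mathcal{T}$ we are positioned to apply the security constraint \eqref{eqSeca2} at the threshold $z_d$ and the reliability constraint \eqref{eqrel2} at the threshold $t_d$ simultaneously.

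Concretely, I would first use subadditivity of conditional entropy to write $\sum_{l\in\mathcal{S}} H(M_{d,l},K_{d,l}\mid M_{d,\mathcal{V}},K_{d,\mathcal{V}})\geq H(M_{d,\mathcal{S}},K_{d,\mathcal{S}}\mid M_{d,\mathcal{V}},K_{d,\mathcal{V}})$, and then use that conditioning reduces entropy to further lower bound this by $H(M_{d,\mathcal{S}},K_{d,\mathcal{S}}\mid M_{d,\mathcal{V}\cup\mathcal{T}},K_{d,\mathcal{V}\cup\mathcal{T}})$. Decomposing this via the chain rule gives
\begin{align*}
&H(M_{d,\mathcal{S}},K_{d,\mathcal{S}}\mid M_{d,\mathcal{V}\cup\mathcal{T}},K_{d,\mathcal{V}\cup\mathcal{T}})\\
&=H(F_d\mid M_{d,\mathcal{V}\cup\mathcal{T}},K_{d,\mathcal{V}\cup\mathcal{T}})-H(F_d\mid M_{d,\mathcal{S}\cup\mathcal{V}\cup\mathcal{T}},K_{d,\mathcal{S}\cup\mathcal{V}\cup\mathcal{T}})\\
&\phantom{=}+H(M_{d,\mathcal{S}},K_{d,\mathcal{S}}\mid F_d,M_{d,\mathcal{V}\cup\mathcal{T}},K_{d,\mathcal{V}\cup\mathcal{T}}).
\end{align*}
The first term equals $H(F_d)$ by \eqref{eqSeca2} (since $d\in\mathcal{Z}_d$ and $|\mathcal{V}\cup\mathcal{T}|=z_d$, so $I(F_d;M_{d,\mathcal{V}\cup\mathcal{T}},K_{d,\mathcal{V}\cup\mathcal{T}})\leq I(F_{\mathcal{Z}_d};M_{\mathcal{D}},K_{\mathcal{D},\mathcal{V}\cup\mathcal{T}})=0$); the second term vanishes by \eqref{eqrel2} since $\widehat F_d(\mathcal{S}\cup\mathcal{V}\cup\mathcal{T})$ is a function of $S_{\mathcal{S}\cup\mathcal{V}\cup\mathcal{T},d}$, which is itself a function of $(M_{d,\mathcal{S}\cup\mathcal{V}\cup\mathcal{T}},K_{d,\mathcal{S}\cup\mathcal{V}\cup\mathcal{T}})$. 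For the residual third term, I would drop $M_{d,\mathcal{S}}$ by the chain rule and non-negativity, then replace $M_{d,\mathcal{V}\cup\mathcal{T}}$ by $R_d$ (since $M_{d,\mathcal{V}\cup\mathcal{T}}$ is a function of $(F_d,R_d,K_{d,\mathcal{V}\cup\mathcal{T}})$), obtaining $H(K_{d,\mathcal{S}}\mid F_d,R_d,K_{d,\mathcal{V}\cup\mathcal{T}})=\sum_{l\in\mathcal{S}}H(K_{d,l})$, where the last equality uses that $\mathcal{S}\cap(\mathcal{V}\cup\mathcal{T})=\emptyset$ together with the joint independence of the keys and their independence from $(F_d,R_d)$.

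The main potential obstacle is bookkeeping rather than a substantive difficulty: one must verify at each step that the conditioning set is large enough to invoke the security constraint on $F_d$ (not merely on $F_{\mathcal{Z}_d}$) and small enough that reliability has not yet been triggered on the opposite direction. The set $\mathcal{V}$ is arbitrary with $|\mathcal{V}|<z_d$, so the whole argument has to preserve the freedom to pick $\mathcal{V}$, which is why $\mathcal{T}$ is introduced as a ``completion'' of $\mathcal{V}$ up to the security threshold $z_d$. Once this partitioning $\mathcal{V}\sqcup\mathcal{T}\sqcup\mathcal{S}$ is fixed, the derivation is essentially the conditional analog of Lemma \ref{lem11}, and the final bound $H(F_d)+\sum_{l\in\mathcal{S}}H(K_{d,l})$ drops out.
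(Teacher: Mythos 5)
Your proposal is correct and follows essentially the same argument as the paper's proof of Lemma \ref{lem777}: subadditivity then conditioning up to $\mathcal{V}\cup\mathcal{T}$, the chain-rule decomposition isolating $H(F_d\mid\cdot)$ terms that collapse to $H(F_d)$ via \eqref{eqSeca2} and to $0$ via \eqref{eqrel2}, and finally replacing $M_{d,\mathcal{V}\cup\mathcal{T}}$ by $R_d$ and using key independence to extract $\sum_{l\in\mathcal{S}}H(K_{d,l})$. The only difference is a trivial reordering (you drop $M_{d,\mathcal{S}}$ before swapping in $R_d$, the paper does it after), which has no effect.
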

\begin{proof}
We have 
 \begin{align*}
 &  \sum_{l \in \mathcal{S}} H(M_{d,l},K_{d,l} |M_{d,\mathcal{V}},K_{d,\mathcal{V}} ) \\
 &\stackrel{(a)}  \geq   H(M_{d,\mathcal{S}},K_{d,\mathcal{S}}|M_{d,\mathcal{V}},K_{d,\mathcal{V}} )\\
 & \stackrel{(b)}\geq H(M_{d,\mathcal{S}},K_{d,\mathcal{S}} |M_{d,\mathcal{V}\cup \mathcal{T}},K_{d,\mathcal{V}\cup \mathcal{T}})   \\
 & =    I(M_{d,\mathcal{S}},K_{d,\mathcal{S}}; F_{d} | M_{d,\mathcal{V}\cup \mathcal{T}},K_{d,\mathcal{V}\cup \mathcal{T}} ) \\
 & \phantom{--} + H(M_{d,\mathcal{S}},K_{d,\mathcal{S}} | F_{d} ,M_{d,\mathcal{V}\cup \mathcal{T}},K_{d,\mathcal{V}\cup \mathcal{T}}) \\
 & =   H( F_{d} | M_{d,\mathcal{V}\cup \mathcal{T}},K_{d,\mathcal{V}\cup \mathcal{T}})\\
 & \phantom{--}- H(F_{d} |M_{d,\mathcal{V}\cup \mathcal{T}\cup \mathcal{S}},K_{d,\mathcal{V}\cup \mathcal{T}\cup \mathcal{S}}) \\
 & \phantom{--}+ H(M_{d,\mathcal{S}},K_{d,\mathcal{S}} | F_{d} ,M_{d,\mathcal{V}\cup \mathcal{T}},K_{d,\mathcal{V}\cup \mathcal{T}}) \\
  & \stackrel{(c)} \geq    H( F_{d} | M_{d,\mathcal{V}\cup \mathcal{T}},K_{d,\mathcal{V}\cup \mathcal{T}})- H(F_{d} |S_{\mathcal{V}\cup \mathcal{T}\cup \mathcal{S},d} ) \\
 & \phantom{--}+ H(M_{d,\mathcal{S}},K_{d,\mathcal{S}} | F_{d} ,M_{d,\mathcal{V}\cup \mathcal{T}},K_{d,\mathcal{V}\cup \mathcal{T}}) \\
    &  \stackrel{(d)} \geq    H( F_{d} | M_{d,\mathcal{V}\cup \mathcal{T}},K_{d,\mathcal{V}\cup \mathcal{T}})- H(F_{d} |\widehat F_d(\mathcal{V}\cup \mathcal{T}\cup \mathcal{S}) ) \\
 & \phantom{--}+ H(M_{d,\mathcal{S}},K_{d,\mathcal{S}} | F_{d} ,M_{d,\mathcal{V}\cup \mathcal{T}},K_{d,\mathcal{V}\cup \mathcal{T}}) \\
    & \stackrel{(e)} =  H( F_{d}  ) + H(M_{d,\mathcal{S}},K_{d,\mathcal{S}} | F_{d} ,M_{d,\mathcal{V}\cup \mathcal{T}},K_{d,\mathcal{V}\cup \mathcal{T}})\\
        & \stackrel{(f)} \geq  H( F_{d} ) + H(M_{d,\mathcal{S}},K_{d,\mathcal{S}} | F_{d} ,R_{d},K_{d,\mathcal{V} \cup \mathcal T}) \\
                & \stackrel{(g)}  \geq  H( F_{d} ) + H(K_{d,\mathcal{S}} | F_{d} ,R_{d},K_{d,\mathcal{V}\cup \mathcal T}) \\
                                & \stackrel{(h)}  =  H( F_{d} ) + H(K_{d,\mathcal{S}} ) \\
                                &  \stackrel{(i)}=  H( F_{d} ) + \sum_{l \in \mathcal{S}}H(K_{d,l}) , 
 \end{align*}
where $(a)$ holds by the chain rule and because conditioning reduces entropy, $(b)$ holds because conditioning reduces entropy, $(c)$ holds because $S_{\mathcal{V}\cup \mathcal{T}\cup \mathcal{S},d} $ is a function of $( M_{d,\mathcal{V}\cup \mathcal{T}},K_{d,\mathcal{V}\cup \mathcal{T}})$, $(d)$~holds because $\widehat F_d(\mathcal{V}\cup \mathcal{T}\cup \mathcal{S})$ is a function of $S_{\mathcal{V}\cup \mathcal{T}\cup \mathcal{S},d}$, $(e)$ holds by \eqref{eqrel2} because $|\mathcal{V}\cup \mathcal{T}\cup \mathcal{S}| = t_d$ and by \eqref{eqSeca2} because $|\mathcal{V}\cup \mathcal{T}|=z_d$, $(f)$~holds because $M_{d,\mathcal{V}\cup \mathcal{T}}$ is a function of $(F_{d} ,R_{d},K_{d,\mathcal{V}\cup \mathcal T})$, $(g)$ holds by the chain rule and non-negativity of entropy, $(h)$ holds by independence between $K_{d,\mathcal{S}}$ and $(F_{d} ,R_{d},K_{d,\mathcal{V}\cup \mathcal T})$ because $\mathcal{S} \cap (\mathcal{V}\cup \mathcal T) = \emptyset$, $(i)$ holds by independence of the keys $(K_{d,l})_{l\in\mathcal S}$.
\end{proof}

Next, by summing both side of the equation of Lemma \ref{lem777} over of all possible sets $\mathcal{T} \subseteq \mathcal{L} \backslash \mathcal{V}$ and $\mathcal{S} \subseteq \mathcal{L} \backslash (\mathcal{T} \cup \mathcal{V}) $ such that $|\mathcal{T}|= z_d - v$ and $|\mathcal{S}|=t_d-z_d$, we obtain the following lemma.

\begin{lem} \label{lem888}
Consider $$l^\star(\mathcal{V})  \in \argmax_{l\in\mathcal{L} \backslash \mathcal{V}}\left[ H(M_{d,l},K_{d,l} |M_{d,\mathcal{V}},K_{d,\mathcal{V}} )  - H(K_{d,l}) \right].$$
 We have
 \begin{align*} 
&  \frac{1}{t_d-z_d} H( F_{d} )\\
& \leq     H(M_{d,\mathcal{L}},K_{d,\mathcal{L}} |F_d, M_{d,\mathcal{V}},K_{d,\mathcal{V}} ) \\
&\phantom{--}  - H(M_{d,\mathcal{L}},K_{d,\mathcal{L}} |F_d, M_{d,\mathcal{V} \cup \{l^\star(\mathcal{V})\}},K_{d,\mathcal{V}  \cup \{l^\star(\mathcal{V})\}} ) -  n_d  \numberthis \label{eqint0}.
 \end{align*}

\end{lem}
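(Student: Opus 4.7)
The plan is to collapse the right-hand side into a single conditional entropy, drop the $F_d$ conditioning via the security constraint, and then reduce to Lemma \ref{lem777} using the $\argmax$ definition of $l^\star(\mathcal{V})$.

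First, I would observe that the telescoping difference on the right-hand side of the claimed inequality equals the conditional mutual information
\begin{align*}
I\bigl(M_{d,\mathcal{L}}, K_{d,\mathcal{L}}; M_{d,l^\star(\mathcal{V})}, K_{d,l^\star(\mathcal{V})} \mid F_d, M_{d,\mathcal{V}}, K_{d,\mathcal{V}}\bigr),
\end{align*}
which, since $l^\star(\mathcal{V}) \in \mathcal{L}$ and hence $(M_{d,l^\star(\mathcal{V})}, K_{d,l^\star(\mathcal{V})})$ is determined by $(M_{d,\mathcal{L}}, K_{d,\mathcal{L}})$, collapses to $H(M_{d,l^\star(\mathcal{V})}, K_{d,l^\star(\mathcal{V})} | F_d, M_{d,\mathcal{V}}, K_{d,\mathcal{V}})$. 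Second, I would strip the $F_d$ from the conditioning: because $|\mathcal{V}| < z_d$ and $|\mathcal{V} \cup \{l^\star(\mathcal{V})\}| \leq z_d$, two applications of \eqref{eqSeca2} give both $I(F_d; M_{d,\mathcal{V}}, K_{d,\mathcal{V}}) = 0$ and $I(F_d; M_{d,\mathcal{V} \cup \{l^\star(\mathcal{V})\}}, K_{d,\mathcal{V} \cup \{l^\star(\mathcal{V})\}}) = 0$, and the chain rule then forces $I(F_d; M_{d,l^\star(\mathcal{V})}, K_{d,l^\star(\mathcal{V})} | M_{d,\mathcal{V}}, K_{d,\mathcal{V}}) = 0$. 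Consequently $H(M_{d,l^\star(\mathcal{V})}, K_{d,l^\star(\mathcal{V})} | F_d, M_{d,\mathcal{V}}, K_{d,\mathcal{V}}) = H(M_{d,l^\star(\mathcal{V})}, K_{d,l^\star(\mathcal{V})} | M_{d,\mathcal{V}}, K_{d,\mathcal{V}})$.

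Third, I would apply Lemma \ref{lem777} to the fixed set $\mathcal{S}$ and, using $H(K_{d,l}) = n_d$ for each $l$, rewrite the bound as $\sum_{l \in \mathcal{S}} [H(M_{d,l}, K_{d,l} | M_{d,\mathcal{V}}, K_{d,\mathcal{V}}) - n_d] \geq H(F_d)$. Since $\mathcal{S} \subseteq \mathcal{L} \setminus \mathcal{V}$ and $l^\star(\mathcal{V})$ is by construction the $\argmax$ of the bracketed quantity over $\mathcal{L} \setminus \mathcal{V}$, each of the $|\mathcal{S}| = t_d - z_d$ summands is at most its value at $l^\star(\mathcal{V})$, yielding $(t_d - z_d)[H(M_{d,l^\star(\mathcal{V})}, K_{d,l^\star(\mathcal{V})} | M_{d,\mathcal{V}}, K_{d,\mathcal{V}}) - n_d] \geq H(F_d)$. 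Dividing by $t_d - z_d$ and chaining with the two equalities established in the first two steps gives \eqref{eqint0}.

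The main conceptual step, as opposed to the otherwise routine mutual-information manipulations, will be recognizing that the $F_d$ in the conditioning can be removed; this uses the security constraint at the collusion size $v+1$, not merely $v$, and is what makes the argmax-based reduction to Lemma \ref{lem777} go through. Once that observation is in place, combining Lemma \ref{lem777} with the $\argmax$ definition of $l^\star(\mathcal{V})$ closes the proof immediately.
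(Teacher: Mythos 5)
Your proof is correct and takes a genuinely simpler route than the paper's at the reduction step. The paper sums the inequality of Lemma~\ref{lem777} over all admissible pairs $(\mathcal{T},\mathcal{S})$, carries out a two-stage combinatorial count (with the constant $\Omega_d$) to collapse that double sum into $\frac{1}{L-v}\sum_{l\in\mathcal{L}\setminus\mathcal{V}}\bigl[H(M_{d,l},K_{d,l}\mid M_{d,\mathcal{V}},K_{d,\mathcal{V}}) - H(K_{d,l})\bigr]$, and only then passes from this average to the maximizer $l^\star(\mathcal{V})$. You instead apply Lemma~\ref{lem777} once, to a single admissible pair $(\mathcal{T},\mathcal{S})$ (which exists because $|\mathcal{L}\setminus\mathcal{V}|=L-v\geq t_d-v$), and bound each of the $t_d-z_d$ summands directly by the maximum over $\mathcal{L}\setminus\mathcal{V}$; this gives $H(M_{d,l^\star(\mathcal{V})},K_{d,l^\star(\mathcal{V})}\mid M_{d,\mathcal{V}},K_{d,\mathcal{V}}) - n_d \geq \frac{H(F_d)}{t_d-z_d}$ with no counting. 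Both arguments then use the security constraint at collusion size $|\mathcal{V}|+1\leq z_d$ to insert $F_d$ into the conditioning and the chain rule to re-expand to the telescoping form; you simply run these two manipulations in the opposite direction (collapsing the claimed right-hand side inward, rather than building it up from the lower bound), which is immaterial. The combinatorial averaging the paper deploys here is the same machinery it genuinely needs in Lemma~\ref{lem22}, where the target is a lower bound on the \emph{sum} of all message entropies; for Lemma~\ref{lem888}, which only needs the maximizing server, your direct ``max $\geq$ average over one $\mathcal{S}$'' argument suffices and is cleaner.
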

\begin{proof}
We have
   \begin{align*} 
& \frac{1}{t_d-z_d} H( F_{d} )\\
 & \stackrel{(a)} = \Omega_d \sum_{ \substack{\mathcal{T} \subseteq \mathcal{L}\backslash \mathcal{V}  \\  |\mathcal{T} | = z_d -v}} \sum_{ \substack{\mathcal{S} \subseteq  \mathcal{L} \backslash (\mathcal{T} \cup \mathcal{V}) \\  |\mathcal{S} | = t_d-z_d }}  H( F_{d} ) \\ 
   &  \stackrel{(b)} \leq \Omega_d  \sum_{ \substack{\mathcal{T} \subseteq \mathcal{L}\backslash \mathcal{V}  \\  |\mathcal{T} | = z_d -v}} \sum_{ \substack{\mathcal{S} \subseteq  \mathcal{L} \backslash (\mathcal{T} \cup \mathcal{V}) \\  |\mathcal{S} | = t_d-z_d }} \\
   & \phantom{----}\sum_{l \in \mathcal{S}} \left[ H(M_{d,l},K_{d,l} |M_{d,\mathcal{V}},K_{d,\mathcal{V}} )   - H(K_{d,l}) \right] \\
      &  \stackrel{(c)} = \Omega_d \sum_{ \substack{\mathcal{T} \subseteq \mathcal{L}\backslash \mathcal{V}  \\  |\mathcal{T} | = z_d -v}}   { L - z_d  -  1 \choose t_d - z_d - 1 }  \\
 & \phantom{--} \times \!\! \sum_{ l \in  \mathcal{L} \backslash (\mathcal{T} \cup \mathcal{V})  }      \left[ H(M_{d,l},K_{d,l} |M_{d,\mathcal{V}},K_{d,\mathcal{V}} )   -  H(K_{d,l}) \right] \\
            &  \stackrel{(d)} = \Omega_d{ L-z_d - 1 \choose t_d-z_d-1 } \\
 & \phantom{--} \times \!\!\sum_{ \substack{\mathcal{T} \subseteq \mathcal{L} \backslash  \mathcal{V}  \\  |\mathcal{T} | = L -z_d}} \sum_{ l \in  \mathcal{T}   } \left[H(M_{d,l},K_{d,l} |M_{d,\mathcal{V}},K_{d,\mathcal{V}} )  - H(K_{d,l}) \right]  \\
                        &  \stackrel{(e)} = \Omega_d { L-z_d  - 1 \choose t_d-z_d-1 } {L-v -1\choose L-z_d-1} \\
 & \phantom{--} \times \sum_{ l \in  \mathcal{L} \backslash   \mathcal{V}  }  \left[H(M_{d,l},K_{d,l} |M_{d,\mathcal{V}},K_{d,\mathcal{V}} )  - H(K_{d,l}) \right]  \\
      & = \frac{1}{L-v}\sum_{ l \in  \mathcal{L} \backslash   \mathcal{V}  } \left[ H(M_{d,l},K_{d,l} |M_{d,\mathcal{V}},K_{d,\mathcal{V}} )  - H(K_{d,l}) \right] \\
    & \stackrel{(f)} \leq     \left[  H(M_{d,l^\star(\mathcal{V})},K_{d,l^\star(\mathcal{V})} |M_{d,\mathcal{V}},K_{d,\mathcal{V}} )  -  H(K_{d,l^\star(\mathcal{V})})  \right] \\
    & =    H(M_{d,l^\star(\mathcal{V})},K_{d,l^\star(\mathcal{V})} | M_{d,\mathcal{V}},K_{d,\mathcal{V}} )  -  n_d   \\
        & \stackrel{(g)}=    H(M_{d,l^\star(\mathcal{V})},K_{d,l^\star(\mathcal{V})} |F_d, M_{d,\mathcal{V}},K_{d,\mathcal{V}} )  -  n_d   \\
        &  \stackrel{(h)} =  H(M_{d,\mathcal{L}},K_{d,\mathcal{L}} |F_d, M_{d,\mathcal{V}},K_{d,\mathcal{V}} )  \\
&\phantom{--}- H(M_{d,\mathcal{L}},K_{d,\mathcal{L}} |F_d, M_{d,\mathcal{V} \cup \{l^\star(\mathcal{V})\}},K_{d,\mathcal{V}  \cup \{l^\star(\mathcal{V})\}} ) -  n_d ,
 \end{align*}
 where $(a)$ holds with $\Omega_d \triangleq \frac{1}{t_d-z_d} {L-v \choose z_d-v}^{-1} {{L-z_d}\choose{t_d-z_d}}^{-1} $, $(b)$~holds by \eqref{eqRanDint0}, $(c)$ holds because for any $l\in\mathcal{L} \backslash (\mathcal{T} \cup \mathcal{V})$, the term $ \left[ H(M_{d,l},K_{d,l} |M_{d,\mathcal{V}},K_{d,\mathcal{V}} )  - H(K_{d,l}) \right]$ appears exactly ${ L-z_d - 1 \choose t_d-z_d-1 }$ times in the term $\sum_{ \substack{\mathcal{S} \subseteq  \mathcal{L} \backslash (\mathcal{T} \cup \mathcal{V}) \\  |\mathcal{S} | = t_d-z_d }} \sum_{l \in \mathcal{S}} \left[ H(M_{d,l},K_{d,l} |M_{d,\mathcal{V}},K_{d,\mathcal{V}} )  - H(K_{d,l}) \right]$ (note that a similar argument is made in \cite[Lemma 3.2]{de1999multiple}), $(d)$ holds by a change of variables in the sums, $(e)$ holds because for any $l\in\mathcal{L}\backslash \mathcal{V} $, $ \left[ H(M_{d,l},K_{d,l} |M_{d,\mathcal{V}},K_{d,\mathcal{V}} )  - H(K_{d,l}) \right]$ appears exactly ${ L - v-1 \choose L-z_d-1 }$ times in the term $\sum_{ \substack{\mathcal{T} \subseteq \mathcal{L} \backslash \mathcal{V}  \\  |\mathcal{T} | =L- z }}   \sum_{l \in   \mathcal{T}} \left[ H(M_{d,l},K_{d,l} |M_{d,\mathcal{V}},K_{d,\mathcal{V}} )  - H(K_{d,l}) \right]$,  $(f)$~ holds by the definition of $l^\star(\mathcal{V})  $, $(g)$ holds because $I(F_d; M_{d,\mathcal{V} \cup \{l^\star(\mathcal{V})\}},K_{d,\mathcal{V}  \cup \{l^\star(\mathcal{V})\}})=0$ by \eqref{eqSeca} and since $|\mathcal{V}  \cup \{l^\star(\mathcal{V})\}| \leq z_d$, $(h)$ holds by the chain~rule.
\end{proof}

Next, we apply multiple times Lemma \ref{lem888} to obtain the following lemma. 

\begin{lem} \label{lem999}
Define $\mathcal{V}_0 \triangleq \emptyset$ and for $i\in \llbracket 1, z_d \rrbracket$, $\mathcal{V}_i \triangleq \mathcal{V}_{i-1} \cup \{ l^{\star} (\mathcal{V}_{i-1}) \}$.  We have
\begin{align*} 
       \frac{z_d}{t_d-z_d} H( F_{d} ) 
       & \leq  H(M_{d,\mathcal{L}},K_{d,\mathcal{L}} |F_d  ) \\
       &\!- H(M_{d,\mathcal{L}},K_{d,\mathcal{L}} |F_d, M_{d,\mathcal{V}_{z_d}},K_{d,\mathcal{V}_{z_d}} ) - n_dz_d.
  \end{align*}
 \end{lem}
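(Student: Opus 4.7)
The plan is to obtain Lemma \ref{lem999} by iterating Lemma \ref{lem888} along the chain of sets $\mathcal{V}_0 \subset \mathcal{V}_1 \subset \cdots \subset \mathcal{V}_{z_d}$ and exploiting the telescoping structure on the right-hand side.

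First I would verify that Lemma \ref{lem888} can be applied with $\mathcal{V} = \mathcal{V}_i$ for every $i \in \llbracket 0, z_d - 1 \rrbracket$: by construction $|\mathcal{V}_i| = i < z_d$, which is exactly the hypothesis $v < z_d$ required in Lemma \ref{lem888}. Also by the definition of $\mathcal{V}_{i+1}$, we have $\mathcal{V}_{i+1} = \mathcal{V}_i \cup \{ l^\star(\mathcal{V}_i) \}$, so applying Lemma \ref{lem888} at $\mathcal{V} = \mathcal{V}_i$ yields
\begin{align*}
\frac{1}{t_d-z_d} H(F_d)
&\leq H(M_{d,\mathcal{L}},K_{d,\mathcal{L}} \mid F_d, M_{d,\mathcal{V}_i},K_{d,\mathcal{V}_i})\\
&\phantom{\leq{}}- H(M_{d,\mathcal{L}},K_{d,\mathcal{L}} \mid F_d, M_{d,\mathcal{V}_{i+1}},K_{d,\mathcal{V}_{i+1}}) - n_d.
\end{align*}

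Next I would sum these $z_d$ inequalities over $i \in \llbracket 0, z_d-1 \rrbracket$. The left-hand side becomes $\frac{z_d}{t_d-z_d}H(F_d)$, the constant terms contribute $-n_d z_d$, and the conditional entropy differences telescope to
\begin{align*}
H(M_{d,\mathcal{L}},K_{d,\mathcal{L}} \mid F_d, M_{d,\mathcal{V}_0},K_{d,\mathcal{V}_0}) - H(M_{d,\mathcal{L}},K_{d,\mathcal{L}} \mid F_d, M_{d,\mathcal{V}_{z_d}},K_{d,\mathcal{V}_{z_d}}).
\end{align*}
Since $\mathcal{V}_0 = \emptyset$ by definition, the first of these two terms is simply $H(M_{d,\mathcal{L}},K_{d,\mathcal{L}} \mid F_d)$, which is exactly what appears in the statement of Lemma \ref{lem999}. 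Rearranging gives the claim.

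There is essentially no obstacle here: all the technical difficulty lies inside Lemma \ref{lem888} (the information-theoretic bound on each ``step'' of adding one server to $\mathcal{V}$). The only thing to double-check is that the greedy sequence $(\mathcal{V}_i)$ is well-defined, i.e.\ that the $\argmax$ defining $l^\star(\mathcal{V}_i)$ is nonempty for each $i \leq z_d - 1$; this holds because $|\mathcal{L} \setminus \mathcal{V}_i| = L - i \geq L - z_d + 1 \geq 1$ (using $z_d \leq t_d - 1 \leq L - 1$), so there is always at least one candidate index to extend the sequence.
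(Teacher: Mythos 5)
Your proposal is correct and matches the paper's own proof: it applies Lemma \ref{lem888} at $\mathcal{V}=\mathcal{V}_i$ for each $i\in\llbracket 0,z_d-1\rrbracket$, sums, and lets the conditional entropy terms telescope, with the constants contributing $-n_d z_d$. Your additional remarks verifying $|\mathcal{V}_i|<z_d$ and the nonemptiness of the $\argmax$ defining $l^\star(\mathcal{V}_i)$ are accurate and make the argument slightly more explicit than the paper's.
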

\begin{proof}
We have
\begin{align*} 
  & \frac{z_d}{t_d-z_d} H( F_{d} )\\
 & =\smash\sum_{i=0}^{z_d-1} \frac{1}{t_d-z_d} H( F_{d} )\\
        & \leq \smash\sum_{i=0}^{z_d-1}[ H(M_{d,\mathcal{L}},K_{d,\mathcal{L}} |F_d, M_{d,\mathcal{V}_i},K_{d,\mathcal{V}_i} ) \\
       & \phantom{--} - H(M_{d,\mathcal{L}},K_{d,\mathcal{L}} |F_d, M_{d,\mathcal{V}_{i+1}},K_{d,\mathcal{V}_{i+1}} ) - n_d]   \\
                & =  H(M_{d,\mathcal{L}},K_{d,\mathcal{L}} |F_d  ) \\
       & \phantom{--} - H(M_{d,\mathcal{L}},K_{d,\mathcal{L}} |F_d, M_{d,\mathcal{V}_{z_d}},K_{d,\mathcal{V}_{z_d}} ) - n_dz_d,
 \end{align*}
  where the inequality holds by applying $z_d$ times Equation \eqref{eqint0}  and the definition of $\mathcal{V}_i$, $i \in \llbracket 0, z_d \rrbracket$.

\end{proof} 
Finally, we simplify the upper bound of Lemma \ref{lem999} as follows. We have
 \begin{align*} 
   &   \frac{z_d}{t_d-z_d} {r}^{(F)}_{d}\\
  & \stackrel{(a)} =  \frac{z_d}{t_d-z_d} H( F_{d} )\\
 & \stackrel{(b)}\leq  H(M_{d,\mathcal{L}},K_{d,\mathcal{L}} |F_d  )  \\
       & \phantom{--}- H(M_{d,\mathcal{L}},K_{d,\mathcal{L}} |F_d, M_{d,\mathcal{V}_{z_d}},K_{d,\mathcal{V}_{z_d}} ) - n_dz_d\\
                    & \stackrel{(c)} \leq  H(M_{d,\mathcal{L}},K_{d,\mathcal{L}} |F_d  ) \\
       & \phantom{--} - H(K_{d,\mathcal{L}} |F_d, M_{d,\mathcal{V}_{z_d}},K_{d,\mathcal{V}_{z_d}} ) - n_d z_d   \\
                                        & \stackrel{(d)}\leq  H(M_{d,\mathcal{L}},K_{d,\mathcal{L}} |F_d  )  - H(K_{d,\mathcal{L}} |F_d, R_{d},K_{d,\mathcal{V}_{z_d}} ) - n_d z_d   \\                                                                             
& =  H(M_{d,\mathcal{L}},K_{d,\mathcal{L}} |F_d  )  - H(K_{d,\mathcal{V}_{z_d}^c} |F_d, R_{d} ) - n_d z_d   \\                                                                                                                        
                                         & \stackrel{(e)} =  H(M_{d,\mathcal{L}},K_{d,\mathcal{L}} |F_d  )  - H(K_{d,\mathcal{V}_{z_d}^c}  ) - n_dz_d   \\                                                                                 & \stackrel{(f)} =  H(M_{d,\mathcal{L}},K_{d,\mathcal{L}} |F_d  )  - n_d L    \\
                                        & \stackrel{(g)} \leq   H(R_d, F_d,K_{d,\mathcal{L}} |F_d  )  - n_dL   \\                                                                              
                                        & =   H(R_d,K_{d,\mathcal{L}}|F_d)  - n_d L   \\ 
                                        & \stackrel{(h)} =   H(R_d) + H(K_{d,\mathcal{L}})  - n_d L   \\
                                        & \stackrel{(i)}=   H(R_d)   \\
                                        & \stackrel{(j)}=  r^{(R)}_d   , \numberthis \label{eqratelrand}
 \end{align*}
 where $(a)$ holds by uniformity of $F_d$, $(b)$ holds by Lemma~\ref{lem999}, $(c)$ holds by the chain rule and non-negativity of entropy, $(d)$~holds because $M_{d,\mathcal{V}_{z_d}}$ is a function of $(F_{d} ,R_{d},K_{d,\mathcal{V}_{z_d}})$, $(e)$ holds by independence between $K_{d,\mathcal{V}_{z_d}^c}$ and $(F_d, R_{d} )$, $(f)$~holds by the uniformity of the keys $(K_{d,l})_{l\in \mathcal{V}_{z_d}^c}$, $(g)$~holds because $M_{d,\mathcal{L}},K_{d,\mathcal{L}}$ is a function of $(R_d, F_d,K_{d,\mathcal{L}})$, $(h)$~holds by mutual independence between $R_d$, $F_d$, and $K_{d,\mathcal{L}}$, $(i)$~holds by the uniformity of the keys $(K_{d,l})_{l\in \mathcal{L}}$, $(j)$ holds by uniformity of $R_d$.
 
 Since~\eqref{eqratelrand} is valid for any private file storage strategy, \eqref{eqratelrand} is also valid for a  $\left( \left(2^{\tilde r^{(F)}_{d}}\right)_{d \in \mathcal{D}},\left(2^{\tilde r^{(R)}_{d}}\right)_{d \in \mathcal{D}},\left(2^{\tilde r^{(M)}_{d,l,}}\right)_{d \in \mathcal{D},l \in \mathcal{L}},\left(2^{\tilde r^{(S)}_{l}}\right)_{l \in \mathcal{L}}\right)$ file storage strategy that $(\mathbf{t},\mathbf{z})$-achieves $\left(\tilde{r}^{(F)}_{d}\right)_{d \in \mathcal{D}} $, where $\left({\tilde{r}^{(F)}_{d}}\right)_{d \in \mathcal{D}}  \in \mathcal{C}_{F}(\mathbf{t},\mathbf{z})$ is such that $\tilde{r}^{(F)}_{d} = {r}^{(F)}_{d,\star}(\mathbf{t},\mathbf{z})$ and $  \tilde{r}^{(R)}_{d}=r^{(R)}_{d,\star}(\mathbf{t},\mathbf{z})$.

 \section{Proof of Theorem \ref{th72}} \label{App_th7}
We first review the notion of ramp secret sharing \cite{yamamoto1986secret,blakley1984security}  in Section~\ref{seca}. We then present our achievability scheme and its analysis in Section \ref{secb}.

\subsection{Review of ramp secret sharing} \label{seca}

\begin{defn}[\!\!{\cite{yamamoto1986secret,blakley1984security}}]
Let $t \in \llbracket 1 , L \rrbracket$ and $z \in \llbracket 1,t-1 \rrbracket$. A $(t,z,L)$-ramp secret sharing scheme consists~of 
	\begin{itemize}
	\item A secret $S$ uniformly distributed over $\{0,1\}^{n_s}$;
		\item A stochastic encoder 
$ e : \{ 0,1\}^{n_s} \times \{ 0,1\}^{n_r} 
\to \{ 0,1\}^{n_{sh}L},  
(S,R) \mapsto  (H_l)_{l \in \mathcal{L}},$ which takes as input the secret $S$ and a randomization sequence $R$ uniformly distributed over $\{ 0,1\}^{n_r}$ and independent of $S$, and outputs $L$ shares $(H_l)_{l \in \mathcal{L}}$ of length $n_{sh}$. For any $\mathcal{S} \subseteq \mathcal{L}$, we define $H_{\mathcal{S}} \triangleq (H_l)_{l\in\mathcal{S}}$;
\end{itemize}
and satisfies the two conditions
\begin{align}
\displaystyle\max_{\substack{\mathcal{T}\subseteq \mathcal{L}:|\mathcal{T}|=t}}	H(S|H_{\mathcal{T}}) & = 0, \text{ (Recoverability)} \label{eqreq1} \\
	 \displaystyle\max_{\substack{\mathcal{U}\subseteq \mathcal{L}:|\mathcal{U}|\leq z}} I(S;H_{\mathcal{U}})  &=0. \text{ (Security)} \label{eqreq2} 
	 \end{align}
\end{defn}

\begin{thm}[\!\!{\cite{yamamoto1986secret,blakley1984security}}] \label{th8}
Let $t \in \llbracket 1 , L \rrbracket$ and $z \in \llbracket 1,t-1 \rrbracket$. For a fixed secret length $n_s$, there exists a  $(t,z,L)$-ramp secret sharing scheme such that the length of a share $n_{sh}$ and the length of the randomization sequence $n_r$ satisfy
\begin{align*}
n_{sh}= \frac{n_s}{t-z},\quad
n_r = \frac{n_sz}{t-z}.
\end{align*}
\end{thm}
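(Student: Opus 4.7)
The plan is to give an explicit algebraic construction of the ramp scheme based on polynomial evaluation, generalizing Shamir's $(t,t-1,L)$-threshold scheme to the ramp setting. Let $m \triangleq \lceil \log_2(L+1) \rceil$, and (if necessary by padding $S$ or by considering blocks of $m$ bits) assume $n_{sh} = n_s/(t-z)$ is an integer with $n_{sh} \geq m$, so that we may work over the finite field $\mathbb{F}_q$ with $q = 2^{n_{sh}} \geq L+1$. Pick $L$ distinct nonzero elements $\alpha_1,\ldots,\alpha_L \in \mathbb{F}_q$. View the secret as $S = (S_0,\ldots,S_{t-z-1}) \in \mathbb{F}_q^{t-z}$ and the randomization sequence as $R = (R_0,\ldots,R_{z-1}) \in \mathbb{F}_q^{z}$, both uniform and independent; note $|S|= n_s$ bits and $|R|=z n_{sh} = n_s z/(t-z)$ bits as required. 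Define the random polynomial
\begin{equation*}
p(x) \triangleq \sum_{i=0}^{t-z-1} S_i\, x^i + \sum_{j=0}^{z-1} R_j\, x^{t-z+j} \in \mathbb{F}_q[x],
\end{equation*}
of degree at most $t-1$, and set the shares to $H_l \triangleq p(\alpha_l)$ for $l \in \mathcal{L}$.

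For recoverability, for any $\mathcal{T} \subseteq \mathcal{L}$ with $|\mathcal{T}|=t$, the $t$ evaluations $(H_l)_{l\in\mathcal{T}}$ determine $p$ uniquely by Lagrange interpolation (the relevant $t\times t$ Vandermonde matrix over $\mathbb{F}_q$ is invertible because the $\alpha_l$ are distinct), hence determine the coefficients, and in particular $S$; this yields $H(S|H_{\mathcal{T}}) = 0$ and so \eqref{eqreq1}.

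The key step is security, and this is where the main work lies. Fix $\mathcal{U} \subseteq \mathcal{L}$ with $|\mathcal{U}|=u \leq z$. I would show that the map $R \mapsto H_{\mathcal{U}}$ is a bijection conditionally on $S$, after augmenting $R$ with $z-u$ auxiliary uniform field elements if needed. Concretely, for fixed $S=s$, the vector $H_{\mathcal{U}}$ is a specified affine function of $R$ of the form $V_{\mathcal{U}} R + c(s)$, where $V_{\mathcal{U}}$ is the $u\times z$ matrix with entries $\alpha_l^{t-z+j}$, $l\in\mathcal{U}$, $j\in\llbracket 0,z-1\rrbracket$. Factoring out $\mathrm{diag}(\alpha_l^{t-z})$ reduces $V_{\mathcal{U}}$ to a $u\times z$ Vandermonde-type matrix with distinct columns/rows, hence it has full row rank $u$. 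Consequently, conditioned on $S=s$, the distribution of $H_{\mathcal{U}}$ is uniform over $\mathbb{F}_q^u$ (since $R$ is uniform on $\mathbb{F}_q^z$ and the affine map is surjective with fibers of equal size $q^{z-u}$), and crucially this uniform distribution does not depend on $s$. Therefore $H_{\mathcal{U}}$ is independent of $S$, giving $I(S;H_{\mathcal{U}})=0$ and so \eqref{eqreq2}.

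The main obstacle is the security argument — producing a clean independence statement from the algebraic structure — which I would handle via the rank/bijection argument above (equivalently, via the identity $I(S;H_{\mathcal{U}}) = H(H_{\mathcal{U}})-H(H_{\mathcal{U}}|S) \leq u\, n_{sh} - H(R|S) + H(R|S,H_{\mathcal{U}}) = u\, n_{sh} - z n_{sh} + (z-u)n_{sh} = 0$, using that $R$ is determined by $(S,H_{\mathcal{U}}, \text{auxiliary uniform bits})$). A secondary technicality is the field-size requirement $2^{n_{sh}} > L$; when $n_s/(t-z) < \log_2(L+1)$ I would apply the construction to a parallel repetition of $\lceil \log_2(L+1)/n_{sh}\rceil$ independent scaled-up secrets and then truncate, preserving the ratios $n_{sh}/n_s$ and $n_r/n_s$ claimed in the theorem. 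The divisibility of $n_s$ by $t-z$ is handled analogously by padding $S$ with a deterministic constant known to all parties, which does not alter the recoverability or security proofs.
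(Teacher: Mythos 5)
The paper does not prove Theorem~\ref{th8} itself; it cites \cite{yamamoto1986secret,blakley1984security} and uses the result as a black box, so there is no in-paper proof to compare your argument against. Your construction is the standard polynomial ramp scheme that those references describe, and the core of your argument is correct: Lagrange interpolation over an invertible Vandermonde system gives recoverability from any $t$ shares, and for security the factorization of $V_{\mathcal{U}}$ as $\mathrm{diag}(\alpha_l^{t-z})_{l\in\mathcal{U}}$ times the $u\times z$ Vandermonde matrix $[\alpha_l^{\,j}]_{l\in\mathcal{U},\,0\leq j\leq z-1}$ shows $V_{\mathcal{U}}$ has full row rank $u\leq z$, so $H_{\mathcal{U}}=V_{\mathcal{U}}R+c(S)$ is uniform over $\mathbb{F}_q^u$ for every fixed value of $S$, hence independent of $S$.

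The gap is in the ``secondary technicality'' at the end, and it is not merely cosmetic. Parallel repetition over $\mathbb{F}_{2^{kn_{sh}}}$ produces a scheme for a secret of length $kn_s$ with shares of length $kn_{sh}$; padding a length-$n_s$ secret to length $kn_s$ with a public constant still produces shares of length $kn_{sh}$, not $n_{sh}$, so the per-secret share length is inflated by the factor $k$, which is exactly what the theorem forbids. Shares of the batch scheme cannot simply be ``truncated'' to $n_{sh}$ bits without destroying recoverability. Moreover, the statement is genuinely false at the endpoint you are trying to cover: take $t=2$, $z=1$, $L=3$, $n_s=1$, hence $n_{sh}=1$. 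Any scheme must have, for each pair $i\neq j$, a function $\phi_{ij}$ with $S=\phi_{ij}(H_i,H_j)$ almost surely while each $H_i$ is independent of the uniform bit $S$. A short case check over the sixteen Boolean functions on two bits shows $\phi_{12}$ must be $H_1\oplus H_2$ or its complement (the unbalanced functions force some $H_i$ to be almost-surely constant, and the projections force $I(S;H_i)>0$), and likewise $\phi_{13}$; but then $H_3\in\{H_2,\overline{H_2}\}$, so $(H_2,H_3)$ is a function of $H_2$ alone and cannot determine $S$. So no such scheme exists and no workaround can produce one. The clean fix is to make the hypothesis $n_{sh}=n_s/(t-z)\geq \log_2(L+1)$ (equivalently $q=2^{n_{sh}}\geq L+1$) explicit, after which your argument is complete; in the paper's application this translates to the mild but real requirement $n_d\geq\log_2(L+1)$ on the key lengths, which is implicit in the cited references.
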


\subsection{Achievability scheme for Theorem \ref{th72}}\label{secb}
\emph{Coding scheme}: Fix $d \in \mathcal{D}$ and consider a file $F_d$ such that $r_d^{(F)}=|F_d| =n_d(t_d -z_d)$. Then, User~$d$ forms $(H_{l,d})_{l\in\mathcal{L}}$ with a $(t_d,z_d,L)$-ramp secret sharing scheme taken from Theorem~\ref{th8} applied to $F_d$. By Theorem~\ref{th8}, for $l\in \mathcal{L}$,  the length of a share is $|H_{l,d}|=\frac{|F_d|}{t_d-z_d} = n_d$, and the length of the randomization sequence is $n_r = \frac{|F_d|z_d}{t_d-z_d} =n_d z_d$. Hence, since $|K_{d,l}|=n_d$, $l\in \mathcal{L}$, User~$d$ can form $M_{d,l} \triangleq H_{l,d} \oplus K_{d,l}$ and publicly send it to Server $l$, where $\oplus$ denotes bitwise modulo-two addition. Upon receiving $M_{d,l}$, Server $l$ stores $S_{l,d} \triangleq K_{d,l} \oplus M_{d,l} = H_{l,d}$.

\emph{Resources needed}: For $d \in \mathcal{D}$, $l \in \mathcal{L}$, the length of the randomization sequence at User~$d$ is $r_d^{(R)} = n_r = n_d z_d$, the length of the public communication from User $d$ to Server $l$ is $r^{(M)}_{d,l} = |M_{d,l}| = n_d$, and the storage needed at Server $l$ is $r^{(S)}_l = \sum_{d\in\mathcal{D}}|M_{d,l}|  = \sum_{d\in\mathcal{D}}n_d$.

\emph{Analysis of recoverability}: For $d \in \mathcal{D}$, consider an arbitrary subset $\mathcal{A} \subseteq \mathcal{L}$ of $t_d$ servers that pool their information, they then have access to $(S_{l,d})_{l\in\mathcal{A}} = (H_{l,d})_{l\in\mathcal{A}}$ such that by Theorem \ref{th8}, $H(F_d | (H_{l,d})_{l\in\mathcal{A}})=0$ since $|\mathcal{A}|=t_d$.

\emph{Analysis of security}: For $d \in \mathcal{D}$, consider an arbitrary subset $\mathcal{U} \subseteq \mathcal{L}$ of $z_d$ colluding servers. Then, we have
\begin{align*}
&I(F_{\mathcal{Z}_d}; M_{\mathcal{D}},K_{\mathcal{D},\mathcal{U}})\\
& \stackrel{(a)} = I(F_{\mathcal{Z}_d}; M_{\mathcal{Z}_d},K_{\mathcal{Z}_d,\mathcal{U}}) \\
       & \phantom{--}+ I(F_{\mathcal{Z}_d}; M_{\mathcal{Z}_d^c},K_{\mathcal{Z}_d^c,\mathcal{U}}|M_{\mathcal{Z}_d},K_{\mathcal{Z}_d,\mathcal{U}}) \\
& \stackrel{(b)} = I(F_{\mathcal{Z}_d}; M_{\mathcal{Z}_d},K_{\mathcal{Z}_d,\mathcal{U}})  \\
& \stackrel{(c)} = I(F_{\mathcal{Z}_d\backslash\{d\}}; M_{\mathcal{Z}_d\backslash\{d\}},K_{\mathcal{Z}_d\backslash\{d\},\mathcal{U}})  \\
       & \phantom{--}+ I(F_{\mathcal{Z}_d\backslash\{d\}}; M_{d},K_{d,\mathcal{U}}| M_{\mathcal{Z}_d\backslash\{d\}},K_{\mathcal{Z}_d\backslash\{d\},\mathcal{U}}) \\
& \phantom{--}  + I(F_{d}; M_{\mathcal{Z}_d},K_{\mathcal{Z}_d,\mathcal{U}}|F_{\mathcal{Z}_d\backslash\{d\}})\\
& \stackrel{(d)} = I(F_{\mathcal{Z}_d\backslash\{d\}}; M_{\mathcal{Z}_d\backslash\{d\}},K_{\mathcal{Z}_d\backslash\{d\},\mathcal{U}})   \\
       & \phantom{--}+ I(F_{d}; M_{\mathcal{Z}_d},K_{\mathcal{Z}_d,\mathcal{U}}|F_{\mathcal{Z}_d\backslash\{d\}})\\
& \leq I(F_{\mathcal{Z}_d\backslash\{d\}}; M_{\mathcal{Z}_d\backslash\{d\}},K_{\mathcal{Z}_d\backslash\{d\},\mathcal{U}})   \\
       & \phantom{--}+ I(F_{d}; M_{\mathcal{Z}_d},K_{\mathcal{Z}_d,\mathcal{U}},F_{\mathcal{Z}_d\backslash\{d\}})\\
& = I(F_{\mathcal{Z}_d\backslash\{d\}}; M_{\mathcal{Z}_d\backslash\{d\}},K_{\mathcal{Z}_d\backslash\{d\},\mathcal{U}})   + I(F_{d}; M_{d },K_{d ,\mathcal{U}} ) \\
       & \phantom{--}+ I(F_{d}; M_{\mathcal{Z}_d\backslash\{d\}},K_{\mathcal{Z}_d\backslash\{d\},\mathcal{U}},F_{\mathcal{Z}_d\backslash\{d\}}|M_{d },K_{d ,\mathcal{U}} )\\
& \stackrel{(e)}= I(F_{\mathcal{Z}_d\backslash\{d\}}; M_{\mathcal{Z}_d\backslash\{d\}},K_{\mathcal{Z}_d\backslash\{d\},\mathcal{U}})   + I(F_{d}; M_{d },K_{d ,\mathcal{U}} )\\
& \stackrel{(f)}\leq \sum_{i \in \mathcal{Z}_d} I(F_{i}; M_{i},K_{i,\mathcal{U}} ) \\
& \stackrel{(g)}= \sum_{i \in \mathcal{Z}_d} I(F_{i}; H_{\mathcal{U},i},M_{i,\mathcal{U}^c},K_{i,\mathcal{U}} ) \\
& = \sum_{i \in \mathcal{Z}_d} [I(F_{i}; H_{\mathcal{U},i}) +I(F_{i}; M_{i,\mathcal{U}^c},K_{i,\mathcal{U}}|H_{\mathcal{U},i} )]\\
& \stackrel{(h)}= \sum_{i \in \mathcal{Z}_d} I(F_{i}; M_{i,\mathcal{U}^c},K_{i,\mathcal{U}}|H_{\mathcal{U},i} ) \\
& = \sum_{i \in \mathcal{Z}_d} [I(F_{i}; K_{i,\mathcal{U}}|H_{\mathcal{U},i} )+I(F_{i}; M_{i,\mathcal{U}^c}|H_{\mathcal{U},i},K_{i,\mathcal{U}} )] \\
& \stackrel{(j)}= \sum_{i \in \mathcal{Z}_d} I(F_{i}; M_{i,\mathcal{U}^c}|H_{\mathcal{U},i},K_{i,\mathcal{U}} ) \\
& \leq \sum_{i \in \mathcal{Z}_d} I(F_{i},H_{\mathcal{U}^c,i}; M_{i,\mathcal{U}^c}|H_{\mathcal{U},i},K_{i,\mathcal{U}} ) \\
& = \sum_{i \in \mathcal{Z}_d} [H( M_{i,\mathcal{U}^c}|H_{\mathcal{U},i},K_{i,\mathcal{U}} ) \\
       & \phantom{-----}- H(M_{i,\mathcal{U}^c}|F_{i},H_{\mathcal{U}^c,i},H_{\mathcal{U},i},K_{i,\mathcal{U}}) ]\\
& \leq \sum_{i \in \mathcal{Z}_d} [ | M_{i,\mathcal{U}^c}| - H(M_{i,\mathcal{U}^c}|F_{i},H_{\mathcal{U}^c,i},H_{\mathcal{U},i},K_{i,\mathcal{U}}) ]\\
& \stackrel{(k)}= \sum_{i \in \mathcal{Z}_d} [ | M_{i,\mathcal{U}^c}| - H(K_{i,\mathcal{U}^c}|F_{i},H_{\mathcal{U}^c,i},H_{\mathcal{U},i},K_{i,\mathcal{U}}) ]\\
& \stackrel{(l)}= \sum_{i \in \mathcal{Z}_d} [ | M_{i,\mathcal{U}^c}| - H(K_{i,\mathcal{U}^c})  ]\\
&\stackrel{(m)} =0,
\end{align*}
where $(a)$ holds by the chain rule and $\mathcal{Z}_d^c$ denotes the complement of $\mathcal{Z}_d$ in $\mathcal{D}$, $(b)$ holds because $I(F_{\mathcal{Z}_d}; M_{\mathcal{Z}_d^c},K_{\mathcal{Z}_d^c,\mathcal{U}}|M_{\mathcal{Z}_d},K_{\mathcal{Z}_d,\mathcal{U}}) \leq I(F_{\mathcal{Z}_d},M_{\mathcal{Z}_d},K_{\mathcal{Z}_d,\mathcal{U}}; M_{\mathcal{Z}_d^c},K_{\mathcal{Z}_d^c,\mathcal{U}}) = 0$, $(c)$ holds by the chain rule applied twice, $(d)$ holds because $I(F_{\mathcal{Z}_d\backslash\{d\}}; M_{d},K_{d,\mathcal{U}}| M_{\mathcal{Z}_d\backslash\{d\}},K_{\mathcal{Z}_d\backslash\{d\},\mathcal{U}}) \leq I(F_{\mathcal{Z}_d\backslash\{d\}}, M_{\mathcal{Z}_d\backslash\{d\}},K_{\mathcal{Z}_d\backslash\{d\},\mathcal{U}}; M_{d},K_{d,\mathcal{U}})=0 $, $(e)$ holds because $I(F_{d}; M_{\mathcal{Z}_d\backslash\{d\}},K_{\mathcal{Z}_d\backslash\{d\},\mathcal{U}},F_{\mathcal{Z}_d\backslash\{d\}}|M_{d },K_{d ,\mathcal{U}} ) \leq I(F_{d},M_{d },K_{d ,\mathcal{U}} ; M_{\mathcal{Z}_d\backslash\{d\}},K_{\mathcal{Z}_d\backslash\{d\},\mathcal{U}},F_{\mathcal{Z}_d\backslash\{d\}}) = 0$, $(f)$ holds by iterating the steps between $(b)$ and $(e)$, $(g)$ holds because $M_{i} = (M_{i,\mathcal{U}^c},M_{i,\mathcal{U}}) = (M_{i,\mathcal{U}^c},(H_{l,i} \oplus K_{i,l})_{l\in\mathcal{U}}) $, $(h)$~holds because $I(F_{i}; H_{\mathcal{U},i})=0$ by Theorem \ref{th8} and since $|\mathcal{U}|=z_d \leq z_i$ for any $i \in \mathcal{Z}_d$, $(j)$~holds because $I(F_{i}; K_{i,\mathcal{U}}|H_{\mathcal{U},i} ) \leq I(F_{i},H_{\mathcal{U},i} ; K_{i,\mathcal{U}}) = 0$, $(k)$ holds because $ M_{i,\mathcal{U}^c}  =  (H_{l,i} \oplus K_{i,l})_{l\in\mathcal{U}^c} $, $(l)$ holds by independence between $K_{i,\mathcal{U}^c}$ and $(F_{i},H_{\mathcal{U}^c,i},H_{\mathcal{U},i},K_{i,\mathcal{U}})$, $(m)$ holds because by uniformity of $K_{i,\mathcal{U}^c}$, $H(K_{i,\mathcal{U}^c}) = |K_{i,\mathcal{U}^c}| = |\mathcal{U}^c|n_i =| M_{i,\mathcal{U}^c}| $ for any $i \in \mathcal{Z}_d$. 
 \section{Proof of Theorem \ref{th62}} \label{App_th6}
The achievability scheme presented in the proof of Theorem \ref{th72} provides a $\left( \left(2^{r^{(F)}_d}\right)_{d \in \mathcal{D}}, \left(2^{r^{(R)}_d}\right)_{d \in \mathcal{D}},\left(2^{r^{(M)}_{d,l}}\right)_{d \in \mathcal{D},l \in \mathcal{L}},\left(2^{r^{(S)}_l}\right)_{l \in \mathcal{L}}\right)$ private file storage strategy that $(\mathbf{t},\mathbf{z})$-achieves $\left({r^{(F)}_d}\right)_{d \in \mathcal{D}}$ such that for any $d \in \mathcal{D}$
\begin{align}
r^{(F)}_d &= n_d (t_d-z_d), \label{eqachiev1f}\\
r^{(R)}_d &= n_d z_d,  \label{eqachiev1r}\\
r^{(M)}_{d,l} & = n_d ,   \forall l\in \mathcal{L},\label{eqachiev1mi}\\
\textstyle\sum_{l \in\mathcal{L}} r^{(M)}_{d,l} & = L n_d  , \label{eqachiev1rind}\\
r^{(S)}_{l} &= \textstyle\sum_{d \in \mathcal D} n_d   , \forall l\in \mathcal{L}. \label{eqachiev1s}
\end{align}
Next, by \eqref{eqachiev1f} and \eqref{th12}, we have $r^{(F)}_{d, \star} (\mathbf{t},\mathbf{z})= n_d (t_d-z_d), \forall d \in \mathcal{D}.$
By \eqref{eqachiev1r} and \eqref{th52}, we have 
$r^{(R)}_{d,\star} (\mathbf{t},\mathbf{z}) =  n_d z_d = \frac{ z_d }{t_d-z_d} r^{(F)}_{d,\star}(\mathbf{t},\mathbf{z}), \forall d \in \mathcal{D}.$
By~\eqref{eqachiev1rind} and \eqref{th32}, we have 
$r^{(M)}_{d,\Sigma,\star}(\mathbf{t},\mathbf{z})= L n_d = \frac{  L }{t_d-z_d} r^{(F)}_{d,\star}(\mathbf{t},\mathbf{z}), \forall d\in \mathcal{D}.$ 
By~\eqref{eqachiev1s} and \eqref{th22}, we have 
 $r^{(S)}_{d,l,\star}(\mathbf{t},\mathbf{z}) = \sum_{d \in \mathcal D} n_d, \forall d\in \mathcal{D}, \forall l\in \mathcal{L}. $
Assume that \eqref{eqsym2} holds, by \eqref{eqachiev1mi} and \eqref{th42}, we have 
$r^{(M)}_{d,l,\star} (\mathbf{t},\mathbf{z}) = n_d = \frac{ 1 }{t_d-z_d} r^{(F)}_{d,\star}(\mathbf{t},\mathbf{z}), \forall d\in \mathcal{D}, \forall l \in \mathcal{L}. $

\bibliographystyle{IEEEtran}
\bibliography{polarwiretap}

\end{document}